\documentclass[a4paper,11pt]{article} 
\usepackage[utf8]{inputenc}
\usepackage[english]{babel}
\usepackage[a4paper]{geometry}
\usepackage{amsmath}
\usepackage{amsthm}
\usepackage{amssymb}
\usepackage{color}
\usepackage{hyperref} 
\usepackage{enumerate}
\usepackage{bbm}
\usepackage{mathtools}

\def\bR{\mathbb{R}}

\def\bN{\mathbb{N}}

\def\bZ{\mathbb{Z}}
\def\cN{\mathcal{N}}
\def\cE{\mathcal{E}}
\def\cK{\mathcal{K}}
\def\cF{\mathcal{F}}
\def\cH{\mathcal{H}}

\def\cV{\mathcal{V}}
\def\cC{\mathcal{C}}
\def\cL{\mathcal{L}}
\def\cG{\mathcal{G}}

\def\cQ{\mathcal{Q}}

\def\cW{\mathcal{W}}

\def\be{\begin{equation*}}
\def\ee{\end{equation*}}

\def\aa{\mathfrak{a}}
\newcommand{\norm}[1]{\lVert #1 \rVert}
\def\la{\langle}
\def\ra{\rangle}

\def\hc{\text{h.c.}\ }

\newcommand{\abs}[1]{\lvert #1 \rvert}

\DeclareFontFamily{U}{mathx}{}
\DeclareFontShape{U}{mathx}{m}{n}{<-> mathx10}{}
\DeclareSymbolFont{mathx}{U}{mathx}{m}{n}
\DeclareMathAccent{\widehat}{0}{mathx}{"70}
\DeclareMathAccent{\widecheck}{0}{mathx}{"71}

\newtheorem{theorem}{Theorem}

\newtheorem{lemma}[theorem]{Lemma}

\title{A Note on the Construction of Trial States for the Dilute Bose Gas}

\author{
Morris Brooks\thanks{Institute for Mathematics, University of Zurich, Winterthurerstrasse 190, 8057 Zurich, Switzerland}
\and Jakob Oldenburg\thanks{CNRS \& CEREMADE, Universit\'e Paris--Dauphine, PSL University, 75016 Paris, France}
\and Diane Saint Aubin\footnotemark[1]
}

\begin{document}

\maketitle
\begin{abstract}
We review how the local particle number cutoff introduced in \cite{BOSS} is used to build trial states for the dilute Bose gas that capture the substantial correlation structure of the ground state in the thermodynamic limit.
In particular, we provide a simplified derivation of the Lee-Huang-Yang correction as an upper bound for the ground state energy.
\end{abstract}

\section{Introduction}
We consider a Bose gas in the thermodynamic limit, described by the Hamilton operator
\begin{equation*}\label{eq:HL} 
    H_{L,N} = \sum_{i=1}^N -\Delta_i + \sum_{1\leq i < j \leq N} V (x_i-x_j),
\end{equation*}
acting on the Hilbert space of permutation symmetric $L^2$-functions $L^2_s (\Lambda_L^N)$, where $\Lambda_L=[-L/2,L/2]^3$ is the thermodynamic box and we impose Dirichlet boundary conditions. In this article we study the dilute regime, i.e. we let the length of the box $L$ and the number of particles $N$ go jointly to infinity at a fixed but small value $\rho:=N/L^3$. We further assume the interaction  $0\leq V \in L^2 (\bR^3)$ to be radial and compactly supported. \\

It is well understood that correlations play a crucial role when it comes to low-energy properties of the operator $H_{L,N}$. In particular, it has been verified in \cite{Dy, LY} that the (ground state) energy per unit volume
\[ e(\rho) = \lim_{N, L \to \infty , \rho = N / L^3} \frac{E (N, L)}{L^3}  \]
is asymptotically given in terms of the scattering length as
\begin{align}
\label{Eq:Leading_Order}
   e(\rho)=4\pi \mathfrak{a}\rho^2+o_{\rho\rightarrow 0}(\rho^2).
\end{align}
Notably, for $V\neq 0$, the scattering length satisfies $8\pi\mathfrak{a}:=\int_{\mathbb R^3}f(x)V(x) dx<\int_{\mathbb R^3}V(x) dx$, where $f:\mathbb R^3\longrightarrow [0,1]$ denotes the solution of the zero-energy scattering equation 
\begin{equation}\label{eq:0en} \left[ -\Delta + \frac{1}{2} V \right] f = 0 \end{equation} 
with the boundary condition $f (x) \to 1$ as $|x| \to \infty$ and $\frac{1}{2}\int_{\mathbb R^3}V(x) dx$ is the optimal constant that can be obtained with a pure product state. Beyond the leading order asymptotics in \eqref{Eq:Leading_Order}, it has been predicted by Lee-Huang-Yang \cite{LHY}, based on  previous work of Bogoliubov \cite{Bo},
\begin{align}
\label{Eq:LHY}
     e(\rho)=4\pi \mathfrak{a}\rho^2
        \left(
        1+\frac{128}{15\sqrt{\pi}}\sqrt{\rho\mathfrak{a}^3}  \right)+o_{\rho\rightarrow 0} \! \left(\rho^{\frac{5}{2}}\right),
\end{align}
which has been verified at the Gross-Pitaevskii scale $\rho\sim N^{-2}$ in \cite{BBCS2} as well as in the thermodynamic limit by \cite{YY,FS1,FS2}, see also \cite{ESY,BCGOPS} for a second order upper bound. Following these landmark results, simplified proofs and improvements of \eqref{Eq:LHY} have been presented in e.g. \cite{Basti,BCGOPS,BCOPS,BCS,HST,FGJOM,B,BBCOS}. Extensions have been obtained in \cite{BSS2,NT} for more general trapping potentials, spectral properties beyond the ground state energy have been analyzed in intermediate regimes \cite{BCaS} and the free energy at low temperatures has been investigated in \cite{HHNST,HHST,FJGMOT}. Furthermore, the ground state and its condensation have been studied in \cite{ABS,BBCS1,BBCO,BSS1,LS,NNRT}.

The third-order correction to the asymptotics in \eqref{Eq:LHY}
\begin{align}
\label{Eq:Wu_Upper}
  e(\rho) = 4\pi \mathfrak{a}\rho^2
       \left(
        1+\frac{128}{15\sqrt{\pi}}\sqrt{\rho\mathfrak{a}^3} 
        + 8\left(\frac{4\pi}{3}-\sqrt{3}\right) \rho\mathfrak{a}^3 \log(\rho\mathfrak{a}^3) 
        + C \rho \frak{a}^3 
        \right)  
\end{align}
has been predicted by Wu \cite{Wu} in 1959, and then also by Hugenholtz-Pines \cite{HP} and Sawada \cite{Sa}, and has been verified at the Gross-Pitaevskii scale in \cite{COSS}. An upper bound matching \eqref{Eq:Wu_Upper} was recently established in the thermodynamic limit in \cite{BOSS}. In this manuscript, we review the method employed in \cite{BOSS}, and provide a simplified proof of the upper bound in \eqref{Eq:LHY} as the content of our main Theorem \ref{theorem:main}. Since, in contrast to the third order asymptotics \eqref{Eq:Wu_Upper}, only correlations on comparatively small length scales are relevant in \eqref{Eq:LHY}, our proof of Theorem \ref{theorem:main} focuses on the conceptual novelties in \cite{BOSS} and bypasses many technical complications.

\begin{theorem}\label{theorem:main}
Let $0\leq V\in L^2(\mathbb{R}^3)$ be radial and with compact support and let $\frak{a}$ be its scattering length. Then, there are constant $C,h > 0$ such that 
\begin{equation*} \label{eq:main} 
        e(\rho) \leq 4\pi \mathfrak{a}\rho^2
        \left(
        1+\frac{128}{15\sqrt{\pi}}\sqrt{\rho\mathfrak{a}^3} 
        + C \left(\rho \frak{a}^3\right)^{1/2+h}
        \right)
    \end{equation*}
for all $\rho \frak{a}^3 > 0$ small enough.  
\end{theorem}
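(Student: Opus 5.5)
Our plan is to construct an explicit trial state and compute its energy. Since $e(\rho)$ is defined through Dirichlet boxes, we first reduce to a box of intermediate size. By subadditivity, it suffices to exhibit, on a box of side $\ell$ with $\rho^{-1/2}\ll \ell$ (say $\ell=\rho^{-1/2-\kappa}$ in units $\mathfrak a=1$), a state with Dirichlet boundary conditions and $n\approx\rho\ell^3$ particles whose energy is at most $4\pi\mathfrak a\rho^2\ell^3(1+\frac{128}{15\sqrt\pi}\sqrt{\rho\mathfrak a^3}+C(\rho\mathfrak a^3)^{1/2+h})$. We then glue copies of this state, separated by corridors of width larger than the range of $V$, to fill $\Lambda_L$. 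The Dirichlet condition is implemented by a condensate function $\varphi$ that is constant in the bulk and vanishes in a boundary layer of width $d\ll\ell$. Its kinetic cost is $\rho\ell^2 d^{-1}$ per box, which is negligible against $\rho^{5/2}\ell^3$ once the exponents are chosen suitably.

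Inside the box we work in a grand canonical Fock-space formulation, using the Weyl transformation $W(\sqrt{n}\varphi)$ to extract the condensate. We apply a Bogoliubov-type quadratic transformation $e^{B}$, with kernel built from the scattering solution $\omega=1-f$, truncated at length scale $\rho^{-1/2}$. This kernel encodes the two-body correlations, and its quadratic part produces the LHY integral
\[
\frac12\int\Big(\sqrt{|p|^4+16\pi\mathfrak a\rho|p|^2}-|p|^2-8\pi\mathfrak a\rho+\frac{(8\pi\mathfrak a\rho)^2}{2|p|^2}\Big)\frac{dp}{(2\pi)^3},
\]
which, combined with the renormalized $\rho^2\widehat{Vf}(0)/2=4\pi\mathfrak a\rho^2$ term, yields the $\frac{128}{15\sqrt\pi}$ coefficient. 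To pass from the grand canonical coherent state to fixed particle number, we use the Dirichlet box gluing and the concavity/Legendre structure of the energy in $\rho$, adjusting the chemical potential.

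The main obstacle is the cubic and quartic error terms. Naive bounds on $\langle\mathcal N^2\rangle$ and on the quartic potential term in a quasi-free state are too crude in the thermodynamic limit: fluctuations of particle number in a box are large, and the quartic term $\int V a^*a^*aa$ must not be treated perturbatively. Following the local particle number cutoff of \cite{BOSS}, we multiply the pair-creation operator by a localized cutoff. Concretely, we partition the box into cells of size $\rho^{-1/2+\epsilon}$ or smaller and replace $b^*_{x,y}$ by $b^*_{x,y}\mathbb 1(\mathcal N_{\text{cell}}\le K)$. This cutoff lets us bound the cubic term through Cauchy–Schwarz, with the quartic term absorbed exactly by the scattering equation, i.e. the $\frac12 V f$ combination. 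The remaining errors are controlled by moments of local numbers, which are bounded by $K$.

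We then need to show that the cutoff changes the quadratic energy only by $o(\rho^{5/2})$. For this we plan to use that, in the quasi-free state, local number fluctuations have Gaussian-type tails, so the discarded mass is exponentially small in $K$. We take $K=\rho^{-\delta}$. Choosing $\epsilon,\delta,\kappa$ and $d$ appropriately and collecting the errors, each error is of order $\rho^{5/2+h}\ell^3$, which gives the theorem.
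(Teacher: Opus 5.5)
Your proposal has a genuine gap: it is missing the cubic (triple-correlation) transformation, and without it the bound fails at exactly the LHY order. Apart from the cutoff, your state is quasi-free, $W(\sqrt{n}\varphi)e^{B}\Omega$, and such a state cannot produce the coefficient $\frac{128}{15\sqrt\pi}$ for a general potential. The obstruction is visible in Lemma~\ref{lemma:quadraticrenormalization} (take $\xi=\Omega$ there). Besides $4\pi\mathfrak{a}N^{1+\kappa}$ and the LHY sum, the energy contains
\[
-\frac{N^\kappa\norm{\sigma}_2^2}{N}\sum_{p}\hat V(p/N^{1-\kappa})\,\eta_{\infty,p}-N^{\kappa-1}\sum_{p,r}\hat V(r/N^{1-\kappa})\,\eta_{\infty,p+r}\,\sigma_p^2\;\approx\;2\big(\hat V(0)-8\pi\mathfrak{a}\big)N^{\kappa}\norm{\sigma}_2^2 .
\]
This term is positive and of order $N^{5\kappa/2}$. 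In thermodynamic units it is $\approx 2(\hat V(0)-8\pi\mathfrak{a})\rho\,\rho_{\mathrm{dep}}$, with depletion density $\rho_{\mathrm{dep}}\sim(\rho\mathfrak{a})^{3/2}$. The pair correlations of $e^{B}$ renormalize $\hat V(0)$ into $8\pi\mathfrak{a}$ only in the condensate--condensate channel. The direct and exchange interactions of the depleted low-momentum particles with the condensate remain bare. In the paper this excess is cancelled exactly (Lemma~\ref{lm:cubic-exp}) by the cubic transformation $\xi=e^{\Theta A^*-A\Theta}\Omega$ with $A=N^{-1/2}\sum_{p,q}\eta_p\sigma_q\,a_{p+q}a_{-p}a_{-q}$. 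It dresses each low-momentum excitation $\sigma_q$ with a short-range scattering pair $\eta_p$, and the kinetic, cubic and quartic expectations in $\xi$ combine via the scattering equation into $\approx 2(8\pi\mathfrak{a}-\hat V(0))N^{\kappa}\norm{\sigma}_2^2$. Nothing in your state plays this role. The cubic term of $W^*\cH W$ is not an error to be bounded by Cauchy--Schwarz: its expectation vanishes identically by parity, since pair creation, cut off or not, only produces even excitation numbers. Carried out as written, your argument gives at best a bound of Erd\H{o}s--Schlein--Yau type, whose second-order coefficient is correct only in the limit $\hat V(0)\to 8\pi\mathfrak{a}$.

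Relatedly, you put the local cutoff in the wrong place. The quadratic part needs none: $e^{B}$ acts explicitly on $a_p,a_p^*$, and every expectation in a quasi-free state follows from Wick's theorem, which is how the LHY sum is obtained. Truncating $b^*_{x,y}$ only destroys this exact structure. In the paper, $\Theta(\cL_n/N^{1/\delta})$ multiplies the \emph{cubic} generator, because $A^*-A$ is not essentially self-adjoint and its Duhamel expansion cannot otherwise be closed by Gr\"onwall. With the cutoff one proves $\la\xi,\cL_n\xi\ra\le C$, hence $\norm{(1-\Theta)\xi}\le CN^{-1/(2\delta)}$, and $\cN_{\ell_B}(w)\le CN^{\delta}$ on the range of $\Theta$. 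These local bounds are what control the expansions of $\cC_N$, $\cV_N$ and $\cK$ in $\xi$.

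A smaller point concerns your localization count. A boundary layer of width $d$ also raises the bulk density by a relative amount $\sim d/\ell$, costing $\sim\rho^2\ell^2 d$ per box. Balancing this against the kinetic cost $\rho\ell^2/d$ forces $\ell\gg\rho^{-1}$ (the paper's $\gamma>1$), not merely $\ell\gg\rho^{-1/2}$. In that regime the number of excitations per box diverges, which is why the cutoff has to be local.
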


The proof of Theorem \ref{theorem:main} follows immediately by the equivalences of ensembles, once we can provide a grand canonical trial state with the correct energy and particle density. By a well-known localization technique, it is furthermore sufficient to consider trial states on a periodic box of size $L=\rho^{-\gamma}$, given that $\gamma$ is large enough. In order to reach a precision that allows us to detect the Lee-Huang-Yang term, we require $\gamma>1$.

It is the content of Theorem \ref{theorem:Fockspaceresult} to provide an upper bound on the energy in this setting (grand canonical and periodic box of size $L=\rho^{-\gamma}$), by construction of an appropriate trial state. Since it can be a delicate task to find a state with the exact right number of particles, we first construct a trial state that has slightly less and one that has slightly more particles, which are then convex combined to a trial state with the correct density, see also \cite{BDS} for a similar argument. Before we state our main technical Theorem \ref{theorem:Fockspaceresult}, let us introduce the parameter $\kappa = (2\gamma-1)/(3\gamma-1)$, which allows us to express $L=N^{1-\kappa}$. By rescaling to the unit torus $\Lambda \equiv \Lambda_1 \simeq [-1/2 ;1/2]^3$, we observe that the grand-canonical operator $\bigoplus_{N=0}^\infty H_{L,N}$, acting on the Fock space $\cF (\Lambda) = \bigoplus_{N\geq 0}L^2_s(\Lambda_L^N)$, is unitarily equivalent to the operator $L^{-2}\cH_N$ acting on $\cF (\Lambda) = \bigoplus_{N\geq 0}L^2_s(\Lambda^N)$, where
\begin{equation*}\label{eq:HN-fock}
\cH_N  = \sum_{p\in\Lambda^*} p^2 a_p^* a_p
    + \frac{N^\kappa}{2N} \sum_{p,q,r\in\Lambda^*} \hat{V}(r/N^{1-\kappa}) a_{p+r}^* a_{q-r}^* a_q a_p,
\end{equation*}
and $a_p^*$ and $a_p$, for $p\in \Lambda^* := 2\pi\mathbb{Z}^3$, are the bosonic creation and annihilation operators.
\begin{theorem}\label{theorem:Fockspaceresult}
   Let $0\leq V\in L^2(\mathbb{R}^3)$ be radial and with compact support. Then, for any $\kappa\in (0,2/3)$, $\epsilon>0$ and $N$ large enough there exists $\Psi^{\pm}_N\in\cF (\Lambda)$ with $\norm{\Psi^\pm_N}=1$ and constants $C,h>0$, such that
    \begin{equation}\label{eq:psiN-N}
      {\la \Psi^-_N, \cN \Psi^-_N \ra + N^{\frac{3\kappa}{2}-\epsilon }\leq N \leq  \la \Psi^+_N, \cN \Psi^+_N \ra- N^{\frac{3\kappa}{2}-\epsilon }\leq 2N},
    \end{equation}
  and the energy of $\Psi^\pm_N$ is bounded by
    \begin{equation}\label{eq:mainN} 
    \begin{split}
        \la \Psi^\pm_N, \cH_N  \Psi^\pm_N \ra
        \leq \; &4\pi \mathfrak{a} N^{1+\kappa}
        +\frac{512\sqrt{\pi}}{15}\mathfrak{a}^{5/2}N^{5\kappa/2}
        + C N^{5\kappa/2-h}.
    \end{split}
    \end{equation}
\end{theorem}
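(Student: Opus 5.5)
We construct $\Psi^\pm_N$ as a Weyl-transformed, quasi-free-type state carrying a scattering correlation structure, modified by the local particle number cutoff of \cite{BOSS}. Concretely, we fix a condensate density parameter $N_0$ (chosen as $N_0 = N \mp N^{3\kappa/2-\epsilon'}$ minus the expected number of excitations, so that \eqref{eq:psiN-N} holds) and apply a Weyl operator $W(\sqrt{N_0})$ creating the condensate in the zero mode. On top of it we apply two correlation operators.

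The first is $e^{B_1}$, with $B_1 = \frac12\sum_{p\neq 0}\eta_p(a_p^*a_{-p}^* - \hc)$, where $\eta_p$ is built from the Fourier coefficients of $N_0(1-f_{\ell})$. Here $f_\ell$ is the scattering solution of \eqref{eq:0en} (rescaled to length $N^{\kappa-1}$), truncated at an intermediate length $\ell$ with $N^{\kappa-1}\ll \ell \ll N^{-\kappa/2}$, the latter being the healing length. This operator replaces $\hat V$ by $8\pi\mathfrak a$ at high momenta.

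The second is a Bogoliubov transformation $e^{B_2}$ acting on momenta $|p|\lesssim N^{\kappa/2+\delta}$. Its coefficients are chosen to diagonalize the effective quadratic Hamiltonian $\sum_p p^2 a_p^*a_p + 4\pi\mathfrak a N^\kappa\sum_p (2a_p^*a_p + a_p^*a_{-p}^* + a_pa_{-p})$. This diagonalization produces the Lee-Huang-Yang integral
\[
\frac12\sum_p\Big(\sqrt{p^4+16\pi\mathfrak a N^\kappa p^2}-p^2-8\pi\mathfrak a N^\kappa+\frac{(8\pi\mathfrak aN^\kappa)^2}{2p^2}\Big)\approx \frac{512\sqrt\pi}{15}\mathfrak a^{5/2}N^{5\kappa/2}.
\]

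The key obstruction in the thermodynamic regime is that for $\kappa>0$ the expected number of excitations is of order $N^{3\kappa/2}$, which is large. Consequently cubic and quartic terms are not controlled by simple norm bounds on $\mathcal N$. Following \cite{BOSS}, we therefore do not use $e^{B_1}$ directly. Instead we implement the short-scale correlations via a product over pairs (a Jastrow/Dyson-type factor) or, equivalently, via a quadratic generator multiplied by a local particle number cutoff. The cutoff is a projection $\chi(\mathcal N_{\ell\text{-box}}\le K)$ onto states with at most $K$ particles in every box of side $\ell$. On the range of this projection the pair correlations act locally, and the many-body errors (three-body collisions and the quartic term in excitations) become bounded by $K\cdot$(local density)$\cdot$(two-body error) rather than by global $\mathcal N$ powers.

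The proof then proceeds in the following order.

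1. We compute $\la\Psi,\cN\Psi\ra = N_0 + \sum_p \sinh^2(\cdot) + \text{errors}$ and tune $N_0$ to obtain both inequalities in \eqref{eq:psiN-N}. A shift of order $N^{3\kappa/2-\epsilon}$ changes the energy only by $O(N^\kappa\cdot N^{3\kappa/2-\epsilon})=o(N^{5\kappa/2})$.

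2. We conjugate $\cH_N$ by the Weyl operator. This produces the constant $\frac{N^\kappa}{2N}\hat V(0)N_0^2$ together with linear, quadratic, cubic and quartic terms.

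3. We conjugate by the short-scale correlation structure. Using the scattering equation, the constant, quadratic-pairing and quartic contributions combine, and $\hat V(0)$ is renormalized to $8\pi\mathfrak a$. This yields $4\pi\mathfrak a N^{1+\kappa}$ plus the second-order Born-series term that cancels the $\frac{(8\pi\mathfrak a N^\kappa)^2}{2p^2}$ counterterm.

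4. We conjugate by $e^{B_2}$ and compute the LHY sum by Riemann-sum approximation. The approximation error is $O(N^{5\kappa/2-h})$, since $L=N^{1-\kappa}$ makes the lattice fine relative to $N^{\kappa/2}$.

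5. We bound the cubic terms, the remaining quartic terms and the effect of the cutoff, each by $CN^{5\kappa/2-h}$.

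The linear terms vanish by momentum conservation, since the state is translation invariant.

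The main obstacle is step 5 together with the cutoff. We must show that inserting the local number cutoff costs little: the probability that some box has more than $K$ particles is exponentially small in $K$ under the quasi-free measure, which has essentially Poissonian local statistics. We must also show that the kinetic energy generated by the cutoff's boundary is negligible, so the cutoff has to be smooth in a suitable sense. To get these bounds we plan to compare the cut-off state with the quasi-free state via Wick's theorem and a large-deviation estimate on local particle numbers. We will choose $\ell$ and $K=N^{\delta'}$ so that every error is polynomially smaller than $N^{5\kappa/2}$, which fixes the constant $h>0$.
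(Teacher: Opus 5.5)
Your trial state is, apart from the cutoff, quasi-free (a Weyl operator composed with quadratic Bogoliubov transformations), and in step 5 you treat all cubic terms as errors of size $CN^{5\kappa/2-h}$. That is where the argument fails: a state of this type does not reach the constant $\frac{512\sqrt{\pi}}{15}\mathfrak{a}^{5/2}$.

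Quadratic conjugations turn $\hat V$ into $\widehat{Vf}$ only in the condensate--condensate constant and in the pairing terms. The interaction between the roughly $\norm{\sigma}_2^2\sim N^{3\kappa/2}$ depleted particles and the condensate keeps the bare potential; these are the direct and exchange terms $\frac{N^\kappa N_0}{N}(\hat V(0)+\hat V(p/N^{1-\kappa}))a_p^*a_p$. You can see this explicitly by redoing the computation of Lemma~\ref{lemma:quadraticrenormalization} with $\xi$ replaced by $\Omega$. It gives $4\pi\mathfrak{a}N^{1+\kappa}$ and the Bogoliubov sum with $\widehat{Vf}$, plus
\[
-N^{\kappa-1}\sum_{p,q}\hat V(p/N^{1-\kappa})\big(\eta_{\infty,p}+\eta_{\infty,p+q}\big)\sigma_q^2\approx 2\big(\hat V(0)-8\pi\mathfrak{a}\big)N^{\kappa}\norm{\sigma}_2^2 ,
\]
by \eqref{eq:etacorrection} and \eqref{eq:scatteringdiscrete}. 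This is strictly positive for $V\neq 0$ and of the same order $N^{5\kappa/2}$ as the LHY term. So the effective Hamiltonian you diagonalize in step 4, with $8\pi\mathfrak{a}$ also in the $2a_p^*a_p$ term, is not what quadratic conjugations produce, and \eqref{eq:mainN} fails. This is the known reason quasi-free trial states give the LHY coefficient only in a Born-type approximation.

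The missing idea is the cubic transformation $e^{\Theta A^*-A\Theta}$ with $A=N^{-1/2}\sum_{p,q}\eta_p\sigma_q a_{p+q}a_{-p}a_{-q}$. It creates triples in which $x$ is correlated with $y$ on the scattering scale (via $\check\eta$) and Bogoliubov-paired with $z$ on the healing scale (via $\check\sigma$). With it, $\cC_N+\cC_N^*$ is not an error term. A second-order Duhamel expansion of $\la\xi,(\cK+\cC_N+\cC_N^*+\cV_N)\xi\ra$ (Lemmas \ref{lemma:C_N}, \ref{lm:VN}, \ref{lm:Kxi}, \ref{lm:cubic-exp}) produces exactly $+N^{\kappa-1}\sum_{p,q}\hat V(p/N^{1-\kappa})(\eta_{\infty,p}+\eta_{\infty,p+q})\sigma_q^2$, which cancels the discrepancy.

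This also means the local cutoff sits in the wrong place in your plan. A quadratic Bogoliubov transformation acts explicitly on $a_p,a_p^*$ and needs no cutoff; the paper uses $e^{B-B^*}$ as it is. The cutoff $\Theta(\cL_n/N^{1/\delta})$ is attached to the cubic generator. There it makes $A\Theta$ bounded, so the cubic unitary is well defined, and it closes the Gr\"onwall arguments for $\la\cL_n\ra$, $\la\cN\ra$ and $\la\cV_N\ra$ along the cubic flow (Lemmas \ref{lemma:propertiesL}, \ref{lm:boundN}, \ref{lm:VN}). The smallness of $(1-\Theta)\xi$ then follows from that Gr\"onwall bound and Markov's inequality. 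It does not come from Wick's theorem and Poissonian large deviations, which are unavailable once the state is no longer quasi-free.

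Finally, a Jastrow factor is not ``equivalent'' to a cut-off quadratic generator. It could carry the needed three-body structure, but then it is not a unitary conjugation and your steps 2--5 do not apply. Its three-body contribution would also have to be computed at order $N^{5\kappa/2}$, not bounded as an error.
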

With Theorem \ref{theorem:Fockspaceresult}, we can prove our main result. 
\begin{proof}[Proof of Theorem \ref{theorem:main}]
Rescaling to a box of length $L_0=\rho^{-\gamma}$, with $\gamma:=\frac{1-\kappa}{2-3\kappa}$, we obtain states $\Psi^\pm_{L_0}$ on the periodic box $\Lambda_{L_0}$ satisfying
\begin{align}
\label{eq:particle_L_0}
    L_0^{-3}\la \Psi^-_{L_0}, \cN \Psi^-_{L_0} \ra + \rho^{3/2+\frac{\epsilon}{2-3\kappa}}& \leq \rho \leq  L_0^{-3}\la \Psi^+_{L_0}, \cN \Psi^+_{L_0} \ra- \rho^{3/2+\frac{\epsilon}{2-3\kappa}} \leq 2 \rho ,\\
    \nonumber 
    L_0^{-3}\la \Psi^\pm_{L_0}, \cH_N  \Psi^\pm_{L_0} \ra & \leq 4\pi \mathfrak{a} \rho^2  +\frac{512\sqrt{\pi}}{15}\mathfrak{a}^{5/2}\rho^{5/2}  + C \rho^{5/2+\frac{h}{2-3\kappa}},
\end{align}
Utilizing the methods in \cite[Appendix A]{BCS}, we find states $\Phi^{\pm}_L$ on the thermodynamic box $\Lambda_L$ with 
\begin{align}
\label{eq:particle_L}
   |L^{-3}\la \Phi^\pm_{L}, \cN \Phi^\pm_{L} \ra-L_0^{-3}\la \Psi^\pm_{L_0}, \cN \Psi^\pm_{L_0} \ra|\leq \rho^{3/2+\frac{\epsilon}{2-3\kappa}},
\end{align}
and $L^{-3}\la \Phi^\pm_{L}, \cH_N  \Phi^\pm_{L} \ra  \leq L_0^{-3}\la \Psi^\pm_{L_0}, \cH_N  \Psi^\pm_{L_0} \ra + C \rho^{1/2+2\gamma-\frac{\epsilon}{2-3\kappa}}$. Consequently, \eqref{eq:particle_L_0} and \eqref{eq:particle_L} guarantee that there exists a convex combination $\rho L^3=t \la \Phi^-_{L}, \cN \Phi^-_{L} \ra + (1-t)\la \Phi^+_{L}, \cN \Phi^+_{L} \ra$, $t\in [0,1]$, or put in a different way there exists a $t\in [0,1]$ such that
\begin{align*}
\Gamma_L:=t\Pi_{\Phi^-_L}+(1-t)\Pi_{\Phi^+_L}    
\end{align*}
 defined via projections $\Pi_{\Phi^\pm_L}$ satisfies $\mathrm{Tr}[\cN \, \Gamma_L]=\rho L^3$. According to the equivalences of ensembles \cite{Rue} we conclude
 \begin{align*}
    e(\rho)\leq \liminf_L L^{-3}\mathrm{Tr}[\cH_L \Gamma_L]\leq 4\pi \mathfrak{a} \rho^2  +\frac{512\sqrt{\pi}}{15}\mathfrak{a}^{5/2}\rho^{5/2}   + C \rho^{5/2+\frac{h}{2-3\kappa}}+ C\rho^{1/2+2\gamma-\frac{\epsilon}{2-3\kappa}},
\end{align*}
where we choose e.g. $\epsilon:=1/3$ and $\kappa<\frac{2}{3}$ large enough such that $1/2+2\gamma-\frac{1}{3(2-3\kappa)}$ is larger than $\frac{5}{2}$.
\end{proof} 

\medskip

\textit{Acknowledgements.} We gratefully acknowledge financial support from the Swiss National Science Foundation through the Grant “Bogoliubov theory for bosonic systems” and from the European Research Council through the ERC-AdG CLaQS. Morris Brooks gratefully acknowledges funding from the UZH Postdoc Grant FK-25-103.
Jakob Oldenburg was supported by the French State support managed by ANR under the France 2030 program through the MaQuI CNRS Risky and High-Impact Research programme $(\textrm{RI})^2$ (grant agreement ANR-24-RRII-0001).

\section{Construction and properties of the trial state}

We sketch the construction of the trial state $\Psi_N \in \cF (\Lambda )$ satisfying \eqref{eq:psiN-N} and \eqref{eq:mainN}. 

It is well-understood that the correct leading order $4\pi \mathfrak{a}N^{1+\kappa}$ in the ground-state energy of $\cH_N$ arises from a perfect Bose-Einstein condensate dressed by correlations between particle pairs on the short length-scale $N^{\kappa-1}$, which is proportional to the range of the potential. While a perfect condensate is simply a product function in the canonical picture, there are more ambiguities in the grand-canonical setting and we decide to model the condensate by a coherent state, i.e. we use the Weyl operator $W_{N_0}$ satisfying \begin{equation}\label{eq:weyl1} W_{N_0}^* a_p W_{N_0} = a_p + \delta_{p,0} \sqrt{N_0} \end{equation}
for all $p \in \Lambda^* = 2\pi \bZ^3$, which creates (in average) $N_0$ particles with zero momentum. The short-range correlations are then implemented using the Bogoliubov transformation
\begin{align}
\label{eq:BT0}
    e^{B-B^*}:=e^{ \frac{1}{2} \sum_{p \in \Lambda^* \backslash \{ 0 \}} \mu_p a_p a_{-p}-\left( \frac{1}{2} \sum_{p \in \Lambda^* \backslash \{ 0 \}} \mu_p a_p a_{-p}\right)^*},
\end{align}
with sensibly chosen coefficients $\mu_p$, which actually creates excitations on all length-scales up to $N^{-\kappa/2}$. We will then consider a trial state of the form 
\begin{equation}\label{eq:psi0} \Psi_N = W_{N_0} e^{B-B^*} \xi, \end{equation} 
where the state $\xi$ encodes additional correlations on the healing length $N^{-\kappa/2}$ that contribute to the Lee-Huang-Yang term. By a formal Duhamel expansion, it is evident that the excitation vector should be of the form
\begin{align}
\label{Eq:Pre_Zeta}
    \xi_\mathrm{formal}:=e^{A^* - A} \Omega
\end{align}
with $\Omega$ being the vacuum state and $A$ given as the operator 
\begin{equation*}\label{eq:A-def}
A = \frac{1}{\sqrt{N}}\sum_{p,q \in \Lambda^*} \eta_p\sigma_q \, a_{p+q} a_{-p} a_{-q},
\end{equation*} 
where the correct coefficients $\eta_p$ and $\sigma_q$ are introduced in the following Subsection \ref{subsec:bogo}. The formal definition in \eqref{Eq:Pre_Zeta} suffers from multiple issues. First of all since $A$ is cubic in creation and annihilation operators we expect that $A^*-A$ is not essentially selfadjoint on the finite-particle subspace, hence there might be an ambiguity in the choice of a selfadjoint extension. Even more important, and related to the first point, it is not possible to (formally) close a Grönwall argument, which would be necessary to control the remainder in a Duhamel expansion. \\

Both issues with $\xi_\mathrm{formal}$ are addressed in this manuscript, using a cutoff operator $\Theta$ that allows us to control the local particle density. With $A$ and $\Theta$ at hand, we introduce the excitation vector $\xi$ in the definition of our trial state $\Psi_N$ in \eqref{eq:psi0} as
\begin{align*}
    \xi:=e^{\Theta A^* - A\Theta} \Omega.
\end{align*}
Notably, we choose $\Theta$ such that $A$ is bounded on the range of $\Theta$ and in particular we observe that the unique anti-selfadjoint extension of $\Theta A^* - A\Theta$ is given by $(A\Theta)^*-(A\Theta)$.
The relevant contributions of $\xi$ to the energy are then extracted via Duhamel expansion of order up to two.

\subsection{Coefficients} 
\label{subsec:bogo}
The coefficients appearing in the Bogoliubov transformation \eqref{eq:BT0} are chosen via the scattering problem \eqref{eq:0en} as follows: define first for $p\in \Lambda^*_+ = 2\pi \mathbb{Z}^3\backslash\{0\}$
\begin{equation}\label{eq:defetainfty}
\eta_{\infty,p} = \frac{-N^\kappa \widehat{Vf}(p/N^{1-\kappa})}{2p^2}
\end{equation}
and let $\eta_{\infty,0}= 0$.
Note that the solution of \eqref{eq:0en} can be rewritten as $1-f = (-2\Delta)^{-1}Vf$ emphasizing that these coefficients are the Fourier coefficients of a rescaled version of $1-f$. However, $1-f$ only decays as $\frac{\aa}{\abs{x}}$ on $\mathbb{R}^3$ and hence is not a function on the torus. The torus implicitly imposes a long-distance cutoff and thereby (with the choice $\eta_{\infty,0}= 0$) also regularizes the scattering solution in the sense that $\eta_\infty \in L^1\cap L^\infty$ whereas this is not true for $1-f$. We show in Lemma \ref{lemma:scattering} below that $\eta_\infty$ satisfies a discrete version of the scattering equation with a subleading finite size error.
Next, define for $p\in \Lambda^*$
\begin{equation*}\label{eq:defmuinfty}
\mu_{\infty,p} = -\frac{1}{4} \log (1-4\eta_{\infty,p}).
\end{equation*}
Observe \[ \tanh (2\mu_{\infty,p})  = - \frac{
\widehat{Vf}(p/N^{1-\kappa})}{p^2 + \widehat{Vf}(p/N^{1-\kappa})},\]
and hence, these coefficients would be the right choice to diagonalize a quadratic Hamiltonian consisting of kinetic energy and an interaction $Vf$.
As the action of a Bogoliubov transformation contains the $\sinh$ and $\cosh$ of the kernels we define
\[
\sigma_{\infty,p} = \sinh \mu_{\infty,p} \textrm{ and } \gamma_{\infty,p} = \cosh \mu_{\infty,p}.
\]

We do not directly use these coefficients but rather impose a momentum cutoff as it turns out that momenta $\abs{p}\ll N^{\kappa/2}$ corresponding to the healing length do not matter to the precision we require. Pick $\epsilon > 0$ and define the length scale $\ell_\sigma = N^{-\kappa/2+\epsilon}$ slightly bigger than the healing length. We will remove all momenta below $\ell_\sigma^{-1}$. Note that this implies that the correlations we implement live on short scales as the Fourier transforms of the various kernels should decay rapidly in position space at distances much larger than $\ell_\sigma$. This decay will be crucial when analyzing the cubic transformation. In order to make this statement rigorous, we use that the kernels in momentum space can be extended to $\mathbb{R}^3$ in an obvious manner and use a smooth cutoff as we need to take derivatives in momentum space in order to prove the decay estimates.
Thus, first define for every $0\neq p \in \bR^3$
\[
\begin{split}
&\eta_\infty (p) = - \frac{N^\kappa\widehat{Vf} (p/ N^{1-\kappa})}{2 p^2},
 \quad\mu_\infty (p) = -\frac{1}{4}  \log (1-4 \eta_\infty (p)) ,\\
 &\sigma_\infty (p) = \sinh \mu_\infty (p),
 \quad\gamma_\infty (p) = \cosh \mu_\infty (p).
 \end{split}
\]
Now we introduce the low momentum cutoff: Pick $\chi_l \in C^{\infty}(\mathbb{R})$ with $0\leq \chi_l \leq 1$, $\chi_l(x)=0$ if $x\leq 1$, $\chi_l(x)=1$ if $x\geq 2$, and we set $\chi (p) = \chi_l (\ell_\sigma\abs{p})$ and $\tilde\chi (p) = \chi_l (\ell_\eta\abs{p})$ for $\ell_\eta = N^{-1+\kappa+\epsilon} \ll \ell_\sigma$.
We can now define the kernels with cutoff as
\begin{align}
\label{eq:definition_cut_off_Bog_objects}
     \mu (p) \coloneqq \mu_{\infty}(p) \chi (p),
    \quad \eta(p) = \eta_{\infty}(p)\tilde\chi(p),
    \quad \sigma (p) = \sinh \mu (p),
    \quad \gamma (p) = \cosh \mu (p).
\end{align}
For all lattice points $p \in \Lambda^*$ we write
$\mu_p=\mu(p)$, $\sigma_p = \sigma(p)$, $\gamma_p = \gamma(p)$, $\eta_p = \eta(p)$. In position space, we set 
\[ \check{\sigma} (x) = \sum_{p \in \Lambda^*} \sigma_p e^{i p \cdot x} , \quad
\widecheck{\gamma-1} (x) = \sum_{p\in\Lambda^*} (\gamma_p-1) e^{i p \cdot x},
\quad
\text{and}\quad
\check{\eta} (x) = \sum_{p \in \Lambda^*} \eta_p e^{i p \cdot x}. \]

The discrete scattering equation and some further useful properties of $\eta_{\infty,p}$, $\sigma_{\infty,p}$ and $\gamma_{\infty,p}$ are contained in Lemma \ref{lemma:scattering}.

 \begin{lemma}\label{lemma:scattering}
 The coefficients $\eta_{\infty,p}$ satisfy the discrete scattering equation
\begin{equation}\label{eq:scatteringdiscrete}
\begin{split}
p^2 \eta_{\infty,p} +\frac{N^\kappa}{2} \hat{V}(p/N^{1-\kappa}) +\frac{N^\kappa}{2N} \sum_{q\in \Lambda_+^*} \hat{V}((p-q)/N^{1-\kappa})\eta_{\infty,q}
= O(N^{2\kappa-1})
\\
\frac{N^\kappa}{2N} \sum_{q\in \Lambda_+^*} \hat{V}(\frac{p-q}{N^{1-\kappa}})\eta_{\infty,q}
= - \frac{N^\kappa}{2} \widehat{Vw}(\frac{p}{N^{1-\kappa}}) + O(N^{2\kappa-1})
\end{split}
\end{equation} 
uniformly in $p \in \Lambda^*_+$. Furthermore, 
\begin{equation}\label{eq:etacorrection}
\sum_{p\in \Lambda_+^*} N^\kappa \hat{V}(p/N^{1-\kappa})\eta_{\infty,p} 
= (8 \pi \mathfrak{a}- \hat{V}(0)) N^{1+\kappa} + O(N^{2\kappa}).
\end{equation} 
Additionally, $\gamma_{\infty,p}$ and $\sigma_{\infty,p}$ satisfy the bounds
 \begin{equation}\label{eq:sigmainftymomentum}
 \abs{\gamma_{\infty,p}-1}\leq \abs{\sigma_{\infty,p}}
 \leq C \abs{\widehat{Vf}(p/N^{1-\kappa})} \,  \min \Big( \frac{N^{\kappa/4}}{\abs{p}^{1/2}}, \frac{N^{\kappa}}{\abs{p}^{2}} \Big),
 \end{equation}
 and therefore,
 \begin{equation*}\label{eq:sigmainftynorms}
 \norm{\sigma_\infty}_2^2,  \norm{\gamma_\infty-1}_2^2
 \leq C N^{3\kappa/2},
 \quad
 \norm{\sigma_\infty}_1,  \norm{\gamma_\infty-1}_1
 \leq CN.
 \end{equation*}
Lastly, we have
 \begin{equation}\label{eq:gammasigmainfty-etainftysum}
 \norm{\gamma_{\infty}\sigma_{\infty} - \eta_{\infty}}_1,  \norm{\gamma_\infty-1}_1
 \leq CN^{3\kappa/2}.
 \end{equation}
 
 \end{lemma}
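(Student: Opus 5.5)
We derive all statements from the continuum scattering equation $-\Delta f+\frac12 Vf=0$, with $w:=1-f$, rescaled to the length scale $N^{\kappa-1}$. The only errors come from replacing integrals over $\mathbb R^3$ by Riemann sums over $\Lambda^*$. In Fourier space the scattering equation reads
\begin{equation*}
p^2\widehat{w}(p)=\tfrac12\widehat{Vf}(p),
\qquad
\widehat{Vf}=\widehat V-\widehat{Vw}=\widehat V-(2\pi)^{-3}\,\widehat V*\widehat w .
\end{equation*}
By the definition \eqref{eq:defetainfty}, $\eta_{\infty,p}=-\frac{N^\kappa}{2p^2}\widehat{Vf}(p/N^{1-\kappa})$ for $p\neq0$, so $p^2\eta_{\infty,p}=-\frac{N^\kappa}{2}\widehat V(p/N^{1-\kappa})+\frac{N^\kappa}{2}\widehat{Vw}(p/N^{1-\kappa})$. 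The first line of \eqref{eq:scatteringdiscrete} therefore follows from the second. For the second line we write $\widehat{Vw}(p/N^{1-\kappa})$ as the integral $(2\pi)^{-3}\int \widehat V(p/N^{1-\kappa}-k)\,\widehat w(k)\,dk$. Using $\widehat w(k)=\widehat{Vf}(k)/(2k^2)$ and substituting $k=q/N^{1-\kappa}$, this equals, up to sign, $N^{-1}\cdot N^{\kappa}\cdot N^{-\kappa}\cdots$ times an integral whose Riemann sum over the lattice $\Lambda^*=2\pi\mathbb Z^3$ (cell volume $(2\pi)^3$) is exactly $\frac{1}{N}\sum_{q\neq0}\widehat V((p-q)/N^{1-\kappa})\eta_{\infty,q}$. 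Concretely, $\frac{N^\kappa}{2N}\sum_q \widehat V((p-q)/N^{1-\kappa})\eta_{\infty,q}$ is a Riemann sum for $-\frac{N^\kappa}{2}\widehat{Vw}(p/N^{1-\kappa})$.

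The main step, and the expected obstacle, is bounding the Riemann sum error by $O(N^{2\kappa-1})$ uniformly in $p$. The summand $F(q)=\widehat V((p-q)/N^{1-\kappa})\,\widehat{Vf}(q/N^{1-\kappa})/q^2$ is singular like $|q|^{-2}$ at the origin, where the sum omits $q=0$. I would split the error into two parts.

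\emph{The unit cell around $0$.} Here $\int_{|q|\lesssim1}|q|^{-2}\,dq=O(1)$, and $\widehat V,\widehat{Vf}$ are bounded, so after the prefactor $\frac{N^\kappa}{N}\cdot N^\kappa$ this cell contributes $O(N^{2\kappa-1})$.

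\emph{The remaining cells.} On each cell $Q$ centred at $q\neq 0$ we compare $F(q)$ with the cell average using a second-order Taylor expansion; the linear term cancels by symmetry. This requires smoothness bounds on the cutoff-free functions. Since $V$, and hence $Vf$, is compactly supported, $\widehat V$ and $\widehat{Vf}$ are smooth with bounded derivatives, so $|\nabla^2 F|\lesssim |q|^{-4}+N^{-(1-\kappa)}|q|^{-3}+N^{-2(1-\kappa)}|q|^{-2}$. Summing $|q|^{-4}$ over $q\neq0$ converges. Summing $|q|^{-3}$ up to $|q|\lesssim N^{1-\kappa}$ gives a factor $\log N$, and beyond that scale the decay of $\widehat V$ (use $V\in L^2$, i.e.\ $\widehat V\in L^2$, together with Cauchy--Schwarz) controls the tail. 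If the logarithm causes trouble, I would instead use first-order Taylor estimates combined with the decay in $|q|$ to absorb it, since only $O(N^{2\kappa-1})$ is required. Uniformity in $p$ follows because all these bounds are uniform in $p$.

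Identity \eqref{eq:etacorrection} follows in the same way. The sum $\frac{N^\kappa}{N}\sum_p\widehat V(p/N^{1-\kappa})\eta_{\infty,p}$ is a Riemann sum for $-N^{\kappa}\cdot(2\pi)^{-3}\int\widehat V\,\widehat w=-N^\kappa\int Vw=N^\kappa(8\pi\aa-\widehat V(0))$, with error $O(N^{2\kappa-1})$. Multiplying through by $N$ gives the claimed statement.

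For \eqref{eq:sigmainftymomentum}, set $x=4\eta_{\infty,p}\le0$, since $\widehat{Vf}(0)=8\pi\aa>0$ near $0$; in general we use $|x|$. Then $e^{-4\mu}=1-x$, so $e^{2\mu}=(1+|x|)^{-1/2}$ up to sign conventions. This gives $|\sinh\mu|\lesssim\min(|x|,|x|^{1/4})$, with the fourth root arising when $|x|$ is large. Since $|x|\lesssim N^\kappa|\widehat{Vf}|/p^2$, we get $|x|^{1/4}\lesssim N^{\kappa/4}|p|^{-1/2}|\widehat{Vf}|^{1/4}$, and boundedness of $\widehat{Vf}$ lets us write the prefactor as stated, up to adjusting the power of $\widehat{Vf}$. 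The inequality $|\cosh\mu-1|\le|\sinh\mu|$ is elementary. The norm bounds then follow by splitting the sums at $|p|=N^{\kappa/2}$. The $\ell^2$ sums give $\sum_{|p|<N^{\kappa/2}}N^{\kappa/2}|p|^{-1}+\sum_{|p|>N^{\kappa/2}}N^{2\kappa}|p|^{-4}\sim N^{3\kappa/2}$. For the $\ell^1$ sums we use $\widehat{Vf}\in L^2$ at large $|p|\gtrsim N^{1-\kappa}$ to obtain $\sum N^\kappa|p|^{-2}|\widehat{Vf}(p/N^{1-\kappa})|\lesssim N$.

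Finally, for \eqref{eq:gammasigmainfty-etainftysum} we compute exactly: $\gamma\sigma=\frac12\sinh2\mu=\frac14\big((1-x)^{-1/2}-(1-x)^{1/2}\big)$. For small $x$ this equals $x/4+O(x^2)$, so $|\gamma\sigma-\eta|\lesssim\min(|x|^2,|x|^{1/2})$, and $|\gamma-1|\lesssim\min(|x|^2,|x|^{1/2})$ as well. Summing over $p$ with $|x|\sim N^\kappa/p^2$ gives $\sum_{|p|<N^{\kappa/2}}N^{\kappa/2}|p|^{-1}+\sum_{|p|>N^{\kappa/2}}N^{2\kappa}|p|^{-4}\lesssim N^{3\kappa/2}$.
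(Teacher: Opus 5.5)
Your derivation of \eqref{eq:sigmainftymomentum}--\eqref{eq:gammasigmainfty-etainftysum} from the closed forms (e.g.\ $\gamma_{\infty,p}\sigma_{\infty,p}=\eta_{\infty,p}/\sqrt{1-4\eta_{\infty,p}}$) is essentially the paper's. For the discrete scattering equation and \eqref{eq:etacorrection}, however, you take a different route.

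\textbf{The paper's route.} It never compares a lattice sum with an integral. It multiplies the rescaled scattering solution by a smooth cutoff $\chi$ on the torus, equal to $1$ near the origin. Since $w=\mathfrak{a}/|x|$ is harmonic outside $\mathrm{supp}\,V$, one gets $-\Delta\big(Nw(N^{1-\kappa}\cdot)\chi\big)=\frac{N^{3-2\kappa}}{2}(Vf)(N^{1-\kappa}\cdot)+N^\kappa g$ with $g$ smooth and compactly supported. Hence $\eta_{\infty,q}=-\mathcal{F}[Nw(N^{1-\kappa}\cdot)\chi](q)+N^\kappa\mathcal{F}[g](q)/q^2$ holds exactly. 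The convolution sum is then an exact torus Fourier coefficient of $(Vw)(N^{1-\kappa}\cdot)\chi=(Vw)(N^{1-\kappa}\cdot)$. The only errors are $N^{2\kappa-1}\sum_{q\neq0}|\mathcal{F}[g](q)|/q^2$ and the omitted $q=0$ mode, both $O(N^{2\kappa-1})$.

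\textbf{Your route.} Your quadrature argument reaches the same $O(1)$ error in the unscaled sum. Your accounting of the origin cell and of the three second-derivative terms for $|q|\le N^{1-\kappa}$ is right; the logarithm comes with a factor $N^{-(1-\kappa)}$ and is harmless. The price is the tail $|q|>N^{1-\kappa}$. There the terms $N^{-(1-\kappa)}|q|^{-3}$ and $N^{-2(1-\kappa)}|q|^{-2}$ are not summable with merely bounded coefficients. You therefore need $\ell^2$ control of lattice values of $\partial^\alpha\widehat V((p-\cdot)/N^{1-\kappa})$, $|\alpha|\le 2$, at arbitrary shifts; write the midpoint remainder in integral form rather than taking cellwise suprema. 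This control comes from torus Parseval, $\sum_{q\in\Lambda^*}|\widehat{x^\alpha V}((q+\zeta)/N^{1-\kappa})|^2=N^{3-3\kappa}\|x^\alpha V\|_2^2$ for every $\zeta$, using compact support of $V$. It does not follow from $\widehat V\in L^2$ and Cauchy--Schwarz alone. The same identity for $Vf$ is what justifies your tail estimate in $\|\sigma_\infty\|_1\le CN$.

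\textbf{Two local claims need fixing.} Neither affects the conclusions.

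(i) The form of the $\sigma$ bound. In the branch $|\sigma_{\infty,p}|\lesssim|x|^{1/4}$ you get the prefactor $|\widehat{Vf}|^{1/4}$. Boundedness of $\widehat{Vf}$ only gives $|\widehat{Vf}|\lesssim|\widehat{Vf}|^{1/4}$, which is the wrong direction. What you need is that the minimum in \eqref{eq:sigmainftymomentum} equals $N^{\kappa/4}|p|^{-1/2}$ only for $|p|\le N^{\kappa/2}\ll N^{1-\kappa}$. There $\widehat{Vf}(p/N^{1-\kappa})\ge c>0$ because $\widehat{Vf}(0)=8\pi\mathfrak{a}$, so powers of $\widehat{Vf}$ are interchangeable. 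For $|p|\ge N^{\kappa/2}$ one has $|x|\lesssim 1$, and $|\sinh\mu|\lesssim|x|$ already gives the stated form. The same remark settles the sign issue you left open. $\widehat{Vf}$ can be negative only for $|p|\gtrsim N^{1-\kappa}$, where $|x|\lesssim N^{3\kappa-2}$. So $1-4\eta_{\infty,p}>0$ throughout, and $|\sinh\mu_{\infty,p}|\lesssim|x|$ there.

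(ii) The bound $|\gamma_{\infty,p}\sigma_{\infty,p}-\eta_{\infty,p}|\lesssim\min(|x|^2,|x|^{1/2})$ is false for large $|x|$. With $x=-|x|$ one has $\gamma\sigma-\eta=\frac{|x|}{4}\big(1-(1+|x|)^{-1/2}\big)\sim|x|/4$. The reason is that $\gamma\sigma$ grows only like $|x|^{1/2}$ while $\eta$ grows like $|x|$. The correct bound $\min(x^2,|x|)$ still gives $\sum_{|p|\lesssim N^{\kappa/2}}N^{\kappa}|p|^{-2}\sim N^{3\kappa/2}$, so \eqref{eq:gammasigmainfty-etainftysum} holds as stated.
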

The proof can be found in \cite[Lemma 3]{BOSS}, we also include a simplified version in the appendix for completeness. 
In the following we show properties of the kernels containing the momentum cutoff and in particular make the decay estimates in position space rigorous.
The proof is a straightforward adaptation of \cite[Lemma 5]{BOSS}, we give the details in the appendix. 
\begin{lemma}\label{lemma:propertiesquadratickernels}
The kernels $\gamma, \sigma$ and $\eta$ satisfy
\begin{equation*}\label{eq:boundsigmamomentum}
\begin{split}
    \abs{\gamma (p) -1}
    \leq\abs{\sigma (p) }
    &\leq C \abs{\widehat{Vf}(p/ N^{1-\kappa})} \min(N^{\kappa/4}\abs{p}^{-1/2}, N^\kappa  \abs{p}^{-2})\\
    \abs{\eta (p) }
    &\leq C \abs{\widehat{Vf}(p/ N^{1-\kappa})} N^\kappa  \abs{p}^{-2},
    \end{split}
\end{equation*}
and thus also
\begin{equation*}
    \norm{\sigma}_\infty, \norm{\gamma-1}_\infty\leq CN^{\epsilon/2},
    \quad
    \norm{\sigma}_2^2,  \norm{\gamma-1}_2^2 
    \leq C N^{3\kappa/2},
    \quad
    \norm{\sigma}_1,  \norm{\gamma-1}_1 
    \leq C N
\end{equation*}

\begin{equation}\label{eq:normseta}
    \norm{\eta}_\infty\leq C N^{3\kappa-2+2\epsilon}, \quad \norm{\eta}_2^2\leq C N^{3\kappa-1+\epsilon}, \quad \norm{\eta}_1\leq CN.
\end{equation}

Moreover 
\begin{equation}\label{eq:eta-etainftysum}
	\norm{\eta-\eta_\infty}_1 \leq C N^{1-\epsilon},
    \norm{\sigma -\sigma_\infty}_1\leq N^{3\kappa/2-\epsilon}.
\end{equation}

Furthermore, for any $m\geq 1$, there exists $C_m > 0$ such that 
\begin{equation}\label{eq:boundsigmaxm}
\abs{\check\sigma (x)}, \abs{\widecheck{\gamma-1} (x)}
\leq C_m  N
\left(\frac{\ell_\sigma}{\abs{x}}\right)^m
\end{equation}
where $\ell_\sigma=N^{-\kappa/2+\epsilon}$.
The previous bounds also imply
\begin{equation}\label{eq:boundsigmax1}
\norm{\check\sigma}_1, \norm{\widecheck{\gamma-1}}_1
\leq C N^{2\epsilon},
\quad \norm{\check\sigma}_\infty, \norm{\widecheck{\gamma-1}}_\infty
\leq C N
\end{equation}
Similarly, with $\ell_\eta=N^{-(1-\kappa)+\epsilon}$, we have for any $m\geq 1$
\begin{equation}\label{eq:boundetam}
\abs{\check\eta (x)}
\leq C_m  N
\left(\frac{\ell_\eta}{\abs{x}}\right)^m.
\end{equation}
In addition we have for $r>0$
\begin{align}
\label{eq:boundnablasigma}
    \int_{|x|\geq r}dx | \nabla \check  \sigma(x)|^2  \leq C_m N^{1+\kappa+2\epsilon}\left(\frac{\ell_\sigma}{r}\right)^{2m}.
\end{align}
\end{lemma}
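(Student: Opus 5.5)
We prove the momentum-space bounds first, by comparing with Lemma \ref{lemma:scattering}, and then obtain the position-space decay by integrating by parts in momentum space.

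\emph{Pointwise bounds.} The cutoffs satisfy $0\le\chi,\tilde\chi\le 1$. Since $\mu_\infty(p)$ has a definite sign, multiplying by $\chi$ only shrinks it, so $|\mu(p)|\le|\mu_\infty(p)|$. Because $\sinh$ is monotone, this gives $|\sigma(p)|\le|\sigma_\infty(p)|$. Similarly $|\gamma(p)-1|=\cosh\mu-1\le|\sinh\mu|$, and $|\eta|\le|\eta_\infty|$ directly. The pointwise bounds then follow from \eqref{eq:sigmainftymomentum} and \eqref{eq:defetainfty}, extended to real $p\neq 0$, where the same argument works.

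\emph{Norm bounds.} We sum the pointwise bounds over $\Lambda^*$, splitting at $|p|\sim N^{\kappa/2}$ and using that $\widehat{Vf}(p/N^{1-\kappa})$ is bounded and decays once $|p|\gg N^{1-\kappa}$.
\begin{itemize}
\item For $\sigma$ the support condition $|p|\ge\ell_\sigma^{-1}$ gives $\|\sigma\|_\infty\le C N^{\kappa/4}\ell_\sigma^{1/2}=CN^{\epsilon/2}$.
\item For $\eta$ the support condition $|p|\ge\ell_\eta^{-1}$ gives $\|\eta\|_\infty\le N^\kappa\ell_\eta^2$ and $\|\eta\|_2^2\le N^{2\kappa}\ell_\eta$.
\item The $L^1$ bounds of order $N$ come from the region $|p|\lesssim N^{1-\kappa}$. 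There one needs $\widehat{Vf}$ to be square integrable against $|p|^{-2}$ at large momenta, which holds since $Vf\in L^2$.
\end{itemize}
For \eqref{eq:eta-etainftysum}, $\eta-\eta_\infty$ is supported on $|p|\le 2\ell_\eta^{-1}$, which gives $N^\kappa\ell_\eta^{-1}=N^{1-\epsilon}$. Likewise $\sigma-\sigma_\infty$ is supported on $|p|\le 2\ell_\sigma^{-1}$ with $|\sigma_\infty|\lesssim N^{\kappa/4}|p|^{-1/2}$, so its $\ell^1$ norm is bounded by $N^{\kappa/4}\ell_\sigma^{-5/2}=N^{3\kappa/2-5\epsilon/2}$.

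\emph{Decay in position space.} For \eqref{eq:boundsigmaxm} we write $|x|^{2m}\check\sigma(x)$ as the inverse Fourier transform of $(-\Delta_p)^m\sigma$. On the torus this identity is exact only up to a Poisson summation comparison. I would treat it by regarding $\sigma$ as a Schwartz-type function on $\mathbb R^3$ and using Poisson summation, $\check\sigma(x)=\sum_{k}\mathcal F^{-1}\sigma(x+k)$, where only $k=0$ matters for $|x|\le 1/2$; alternatively one can use summation by parts with discrete differences.

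The core of the argument is then to show $|\partial_p^\alpha\sigma(p)|\le C_\alpha|p|^{-|\alpha|}\times(\text{bound on }|\sigma(p)|)$. This is a symbol-type estimate. It holds because $\chi(p)=\chi_l(\ell_\sigma|p|)$ has derivatives bounded by $\ell_\sigma^{|\alpha|}\lesssim|p|^{-|\alpha|}$ on its transition region, and because $\widehat{Vf}(p/N^{1-\kappa})$ has derivatives of order $N^{-(1-\kappa)|\alpha|}\le|p|^{-|\alpha|}$ on the relevant region $|p|\lesssim N^{1-\kappa}$. Here $Vf$ is compactly supported, so $\widehat{Vf}$ is smooth with all derivatives bounded, and beyond that region $\widehat{Vf}$ decays. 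The same reasoning applies to the compositions with $\log$, $\sinh$ and $\cosh$, by the chain rule.

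Integrating these symbol bounds over $|p|\ge\ell_\sigma^{-1}$ gives $\int|\partial^{\alpha}\sigma|\,dp\lesssim N\,\ell_\sigma^{|\alpha|}$ up to constants, the low-momentum end dominating, which yields \eqref{eq:boundsigmaxm}. The same argument with $\ell_\eta$ gives \eqref{eq:boundetam}. For \eqref{eq:boundnablasigma} we apply Plancherel to $|x|^m\nabla\check\sigma$, which bounds the left-hand side by $r^{-2m}\|\partial^m(p\sigma)\|_2^2$. The symbol bounds then give $\|\partial^m(p\sigma)\|_2^2\lesssim\ell_\sigma^{2m}\int|p|^2|\sigma|^2\,dp\lesssim\ell_\sigma^{2m}N^{1+\kappa+2\epsilon}$, up to the stated epsilon losses. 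Finally, \eqref{eq:boundsigmax1} follows by splitting $\int|\check\sigma|$ at $|x|=\ell_\sigma$: for $|x|\le\ell_\sigma$ we use $\|\check\sigma\|_\infty\le\|\sigma\|_1\le CN$, which gives $N\ell_\sigma^3\le N^{1-3\kappa/2+3\epsilon}$, and for $|x|\ge\ell_\sigma$ we use the decay bound.

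\textbf{Main obstacle.} The delicate step is the $L^1$ estimate $\|\check\sigma\|_1\le CN^{2\epsilon}$. The crude short-distance bound $N\ell_\sigma^3$ gives only $N^{1-3\kappa/2+3\epsilon}$, which is insufficient. To fix it, for $|x|\lesssim\ell_\sigma$ I would instead bound $\check\sigma$ by its $N^{\kappa/4}|p|^{-1/2}$ and $N^{\kappa}|p|^{-2}$ behaviour. This gives $|\check\sigma(x)|\lesssim N^{\kappa}|x|^{-1}$ for $N^{\kappa-1}\lesssim|x|\lesssim\ell_\sigma$, with the scattering-function-like $|x|^{-1}$ profile, and $\int_{|x|\le\ell_\sigma}N^\kappa|x|^{-1}\,dx\sim N^\kappa\ell_\sigma^2=N^{2\epsilon}$. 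Making this intermediate-distance bound rigorous is the step I expect to require care, again by dyadic decomposition in momentum and integration by parts.
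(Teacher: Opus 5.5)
\textbf{Gap: the bound $\norm{\check\sigma}_1\le CN^{2\epsilon}$ is not established.} Most of your outline follows the paper's proof. This includes the pointwise bounds from $0\le\chi,\tilde\chi\le1$ and Lemma \ref{lemma:scattering}; here you need not, and cannot, assume a definite sign of $\mu_\infty$, since $\widehat{Vf}$ may change sign, but $|\chi\mu_\infty|\le|\mu_\infty|$ is enough. It also includes the norm bounds by summation, the position-space decay from $m$-th momentum derivatives (the paper uses the discrete differences $(e^{2\pi i x_j}-1)^m\check\sigma(x)$ instead of Poisson summation, which is equivalent), and Plancherel for \eqref{eq:boundnablasigma}. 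The genuine gap is the step you flag yourself, and it is worse than you describe. Besides the unproved intermediate bound $|\check\sigma(x)|\lesssim N^\kappa|x|^{-1}$, your outer region also fails with the split at $|x|=\ell_\sigma$. Using \eqref{eq:boundsigmaxm} there gives $\int_{|x|\ge\ell_\sigma}C_mN(\ell_\sigma/|x|)^m\,dx\sim C_mN\ell_\sigma^3=C_mN^{1-3\kappa/2+3\epsilon}$. That is exactly the insufficient size you found on the inner ball. Your dyadic route could be completed, but only if it also produces a tail bound with the much smaller prefactor $N^\kappa\ell_\sigma^{-1}$ in place of $N$.

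\textbf{The fix used in the paper.} The missing idea is to use the $L^2$ bound rather than the $L^\infty$ bound near the origin, and to split at a slightly enlarged radius $R=\ell_\sigma N^{\epsilon/3}$. By Cauchy--Schwarz and Plancherel on the unit torus,
$\int_{|x|\le R}|\check\sigma|\le |B_R|^{1/2}\norm{\check\sigma}_{L^2(\Lambda)}=CR^{3/2}\norm{\sigma}_2\le C\ell_\sigma^{3/2}N^{\epsilon/2}N^{3\kappa/4}=CN^{2\epsilon}$.
Beyond $R$, $\int_{|x|\ge R}C_mN(\ell_\sigma/|x|)^m\,dx\le C_mN\ell_\sigma^3N^{-(m-3)\epsilon/3}$, which is negligible for $m$ large. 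The same argument works for $\widecheck{\gamma-1}$ using $\norm{\gamma-1}_2^2\le CN^{3\kappa/2}$. No pointwise information at intermediate distances is needed.

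\textbf{Secondary point: the symbol estimate.} Your estimate $|\partial^\alpha\sigma(p)|\le C_\alpha|p|^{-|\alpha|}\times(\text{bound on }|\sigma(p)|)$ cannot hold with the factor $|\widehat{Vf}(p/N^{1-\kappa})|$ kept. For $|p|\gtrsim N^{1-\kappa}$, derivatives falling on $\widehat{Vf}(p/N^{1-\kappa})$ produce $N^{-(1-\kappa)}\widehat{x^kVf}(p/N^{1-\kappa})$, which is not $O(|p|^{-1})$ and not pointwise controlled by $|\widehat{Vf}|$. Moreover, $\widehat{Vf}$ has no guaranteed pointwise decay rate there, since $V$ is only in $L^2$. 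The paper instead bounds every derivative by a factor $\ell_\sigma$, using $|p|^{-1}\le\ell_\sigma$ on the support and $N^{-(1-\kappa)}\le\ell_\sigma$. It also replaces $|\widehat{Vf}|$ by $F_m$ from \eqref{eq:defFm}, which is bounded and square-summable because $x^kVf\in L^2$. This is all the decay argument requires.
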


\subsection{The Local Particle Number Cutoff}

We now construct the local particle number cutoff $\Theta$ used in the trial state in \eqref{eq:psi0}. We impose conditions on the number of excitations on balls of size $\ell_{B} = N^{-\kappa/2+2\epsilon}$ with $\epsilon>0$ small enough as before, and in particular we will assume $0<\epsilon<\frac{2-3\kappa}{7}$ to ensure that the expected number of excitations in a ball of radius $\ell_{B} $ is of the order $O(1)$. As we will see, on the support of the cutoff the number of excitations in such balls is bounded by $N^\delta$ for a suitably chosen small $\delta > 0$.
Observe that $\ell_\eta \ll \ell_\sigma \ll \ell_B$, and hence, the transformations create correlations on scales much shorter than the ones on which restrict the particle number. This allows us to estimate error terms as follows: at short scales we use the local particle number cutoff, whereas at large scales, we use the rapid decay of the kernels.

In order to construct $\Theta$, let $\chi_{\ell_B} \in C_c^\infty(\mathbb{R}^3)$ be such that $0\leq \chi_{\ell_B}\leq 1$, $\chi_{\ell_B}(x)=1$ if $\abs{x}\leq \ell_B$ and $\chi_{\ell_B}(x)=0$ if $\abs{x}\geq 2\ell_B$. We then define
\begin{equation*}
    \cN_{\chi_{\ell_B}} (w) \coloneqq \int dv \, \chi_{\ell_B}(w-v) a_v^* a_v,
\end{equation*}
which essentially counts the number of particle close to a given point $w$, and for $n \in \bN$ we introduce 
\begin{equation*} \label{eq:def-Theta} 
    \cL_n \coloneqq \int dw \, (\cN_{\chi_{\ell_B}} (w) +1)^n.
\end{equation*}
The cut-off operator is then defined as
$
    \Theta(\cL_n/N^{\frac{1}{\delta}}) \, ,
$
where $\Theta\in C^\infty(\mathbb{R})$ is a smooth function with $0\leq \Theta\leq1$, $\Theta(x)=1$ if $x\leq 1$ and $\Theta(x)=0$ if $x\geq 2$. Note that in the following, we choose $n=n(\delta)$ large enough such that on the support of the cutoff operator $\Theta(\cL_n/N^{\frac{1}{\delta}})$ the value of the local particle number operators $\cN_{\chi_\ell} (w)$ is of the magnitude $N^\delta$. To be precise, we fix $n(\delta):=\lceil \frac{1}{\delta}+\frac{1}{\delta^2}\rceil$, which is sufficiently large as is explained in the proof of Lemma \ref{lemma:propertiesL_pre}.

With the cutoff $\Theta(\cL_n/N^{\frac{1}{\delta}})$ at hand, we introduce the unitary cubic transformation \begin{equation}\label{eq:defcubictransform}
    \exp \Big(\Theta(\cL_n/N^{\frac{1}{\delta}})\int dxdydz \check{\eta}(x-y)\check{\sigma}(x-z) a_x^* a_y^* a_z^* -\hc\Big)
    \eqqcolon \exp(\Theta A^*-\hc),
\end{equation}
where, on the right side of \eqref{eq:defcubictransform}, we have omitted the argument $\cL_n/N^{\frac{1}{\delta}}$ in the operator $\Theta(\cL_n/N^{\frac{1}{\delta}})$ to keep the notation light. While we deem the inclusion of the cut-off $\Theta$ a necessity to obtain a convergent Duhamel expansion (which we will truncate at a finite level), it is a pivotal observation that the trial state $\xi:=\exp(A^*\Theta-\hc)\Omega$ is mostly supported on the subspace $\Theta=1$; the difference is of order $N^{-1/(2\delta)}$, see \eqref{eq:norm1-Theta}. Clearly, such a property is relevant as we want to recover formal computations based on the (ill-defined) trial state $\exp(A^*-A)\Omega$. In particular, the support properties of $\xi$ allow us to neglect commutators with the cut-off $\Theta$ arising in a Duhamel expansion, which could easily spoil the energy of our trial state. In order to derive such support properties, we first show that the expectation of $\mathcal{L}_n$ in $\xi$ is finite as part of the fundamental Lemma \ref{lemma:propertiesL}. Before we show Lemma \ref{lemma:propertiesL}, we collect various properties of $\mathcal{L}_n$ and the local particle number operators in the subsequent Lemma \ref{lemma:propertiesL_pre}. We find it convenient to further introduce for $\ell>0$ a local particle number operator with a sharp cut-off
\begin{equation*}
    \cN_{\ell} (w) \coloneqq \int dv \mathbbm{1}(\abs{w-v}\leq \ell) \, a_v^* a_v .
\end{equation*}

\begin{lemma}\label{lemma:propertiesL_pre}
For every $k \in \bN$ and state $\zeta \in \cF (\Lambda )$, there is a $C_k > 0$ such that
\begin{align}
\label{eq:pull_through_local}
        C_k^{-1}\norm{\prod_{i=1}^k a_{x_i} (\cN_{\chi_{\ell_B}} (w) +1)\zeta}
    \leq \norm{ (\cN_{\chi_{\ell_B}} (w) +1)\prod_{i=1}^k a_{x_i}\zeta}
    \leq \norm{\prod_{i=1}^k a_{x_i} (\cN_{\chi_{\ell_B}} (w) +1)\zeta}.
\end{align}
Moreover, there is a $C > 0$ such that, for every $\delta >0$ 
{
\begin{equation}\label{eq:localNbound}
    \cN_{k \ell_B} (w)\Theta(\cL_n/N^{\frac{1}{\delta}})
    \leq 
   C  k^3 N^{\delta} \Theta( \cL_n/N^{\frac{1}{\delta}})
\end{equation}
and also 
\begin{equation}\label{eq:localNboundexample}
    \norm{\cN_{k\ell_B}^{1/2}(w)\prod_{j=1}^m a_{x_j}\Theta(\cL_n/N^{\frac{1}{\delta}})\zeta}^2
    \leq C  k^3 N^{\delta}
    \norm{\prod_{j=1}^m a_{x_j}\Theta(\cL_n/N^{\frac{1}{\delta}})\zeta}^2 
\end{equation}
}
for all $w \in \Lambda$, $k\geq 1$ and $m \in \bN$. 
\end{lemma}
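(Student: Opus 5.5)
The lemma has three parts: the pull-through estimate \eqref{eq:pull_through_local}, the local bound \eqref{eq:localNbound}, and its consequence \eqref{eq:localNboundexample}. I would prove them in that order.

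\textbf{Pull-through estimate.} I would use the commutation relation
\[
a_x \cN_{\chi_{\ell_B}}(w) = \big(\cN_{\chi_{\ell_B}}(w) + \chi_{\ell_B}(w-x)\big) a_x .
\]
Iterating it gives
\[
\prod_{i=1}^k a_{x_i}\,(\cN_{\chi_{\ell_B}}(w)+1) = \big(\cN_{\chi_{\ell_B}}(w)+1+c\big)\prod_{i=1}^k a_{x_i},
\qquad c:=\sum_{i=1}^k \chi_{\ell_B}(w-x_i)\in[0,k].
\]
Since $\cN_{\chi_{\ell_B}}(w)\geq 0$, we have the operator inequalities
\[
(\cN_{\chi_{\ell_B}}(w)+1)^2 \leq (\cN_{\chi_{\ell_B}}(w)+1+c)^2 \leq (1+k)^2(\cN_{\chi_{\ell_B}}(w)+1)^2 .
\]
These commute with everything involved, so evaluating them on $\prod_i a_{x_i}\zeta$ gives both inequalities with $C_k=1+k$. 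To be rigorous I would do this fiberwise in the $n$-particle sectors, where $\cN_{\chi_{\ell_B}}(w)$ acts as multiplication by $\sum_j \chi_{\ell_B}(w-y_j)$. Removing the variables $x_i$ then just changes this multiplication function by $c$.

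\textbf{Local bound \eqref{eq:localNbound}.} Everything is diagonal in position space on each $n$-particle sector. So it suffices to prove the following pointwise statement for a configuration $(y_1,\dots,y_M)$: if $\int dw\,(N_\chi(w)+1)^n\leq 2N^{1/\delta}$, then $\#\{j: |y_j-w|\leq k\ell_B\}\leq Ck^3N^\delta$. Here $N_\chi(w)=\sum_j\chi_{\ell_B}(w-y_j)$, and $\Theta$ is a function of $\cL_n$, which is this multiplication operator.

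First I cover the ball $B(w,k\ell_B)$ by $O(k^3)$ balls of radius $\ell_B/2$ centred at points $w_i$. For each $w'\in B(w_i,\ell_B/2)$, every particle in $B(w_i,\ell_B/2)$ lies within distance $\ell_B$ of $w'$. Hence $N_\chi(w')\geq m_i:=\cN_{\ell_B/2}(w_i)$. Integrating over this small ball gives
\[
c\,\ell_B^3 (m_i+1)^n\leq 2N^{1/\delta},
\qquad\text{so}\qquad
m_i+1\leq (2N^{1/\delta}\ell_B^{-3}/c)^{1/n}.
\]
Since $\ell_B^{-3}\leq N^{3/2}\leq N^{1/\delta}$ for small $\delta$, the right-hand side is at most $CN^{2/(\delta n)}$. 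With $n\geq 1/\delta+1/\delta^2$ we get $2/(\delta n)\leq 2\delta/(1+\delta)\leq\delta$, at least after adjusting the prefactor $2$, for instance by using $n\delta^2\geq 1+\delta$ together with the fact that $\ell_B^{-3}$ is at most a fixed power. Summing over the $O(k^3)$ covering balls gives the claim.

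This exponent bookkeeping is the only delicate point. I expect it to work with the stated choice $n(\delta)=\lceil 1/\delta+1/\delta^2\rceil$, possibly with a slightly generous constant $C$ absorbing powers of $N$ that are bounded via $\ell_B^{-3}\leq N^{3/2}$.

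\textbf{Bound \eqref{eq:localNboundexample}.} Here the annihilators sit between $\cN_{k\ell_B}(w)$ and $\Theta$. Applying $a_{x_j}$ only removes particles, so the pointwise configuration bound persists:
\[
\prod_j a_{x_j}\,\Theta \;=\; \prod_j a_{x_j}\,\Theta\,\mathbbm{1}\big(\cN_{k\ell_B}(w)\leq Ck^3N^\delta\big),
\]
and $\cN_{k\ell_B}(w)$ evaluated on the reduced configuration is at most its value on the original one. Concretely, in the $M$-particle sector the vector $\prod_j a_{x_j}\Theta\zeta$ is a function of the remaining $M-m$ variables. It is supported on configurations which, together with $x_1,\dots,x_m$, satisfy $\cL_n\leq 2N^{1/\delta}$. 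On that support the count of particles in $B(w,k\ell_B)$ is bounded by the previous step, and dropping the $x_j$ only decreases it. Taking the expectation of the multiplication operator $\cN_{k\ell_B}(w)$ then yields \eqref{eq:localNboundexample}.
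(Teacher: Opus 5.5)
Your route is the same as the paper's. You use the commutator $[\cN_{\chi_{\ell_B}}(w),a_x^*]=\chi_{\ell_B}(w-x)a_x^*$ for the pull-through bound, and your $C_k=1+k$ is correct. You integrate over a ball of radius $\ell_B/2$ to get $\cL_n\geq c\,\ell_B^3(\cN_{\ell_B/2}(w_i)+1)^n$, then cover by $O(k^3)$ small balls. For \eqref{eq:localNboundexample} you use that annihilators only remove particles, and your argument there is more explicit than the paper's ``analogously''.

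The one step that fails as written is the exponent count, which you yourself flagged.

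\textbf{The inequality chain is false.} The bound $2/(\delta n)\leq 2\delta/(1+\delta)\leq\delta$ fails for $\delta<1$, since the last inequality is equivalent to $\delta\geq 1$. So bounding $\ell_B^{-3}$ by $N^{1/\delta}$ only gives $N^{2\delta/(1+\delta)}$.

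\textbf{The fallback does not repair it.} Using $\ell_B^{-3}\leq N^{3/2}$ gives the exponent $(1/\delta+3/2)/n$. When $n=1/\delta+1/\delta^2$ this equals $\delta(1+3\delta/2)/(1+\delta)>\delta$. A constant $C$ cannot absorb a power of $N$. The prefactor $2/c$ was never the issue, since $(2/c)^{1/n}\leq C$.

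\textbf{The fix.} Use the actual size of $\ell_B$: $\ell_B^{-3}=N^{3\kappa/2-6\epsilon}\leq N$, because $\kappa<2/3$. Then
\[
m_i+1\leq C\,N^{(1/\delta+3\kappa/2)/n},\qquad \frac{1/\delta+3\kappa/2}{n}\leq\frac{1/\delta+1}{1/\delta+1/\delta^2}=\delta ,
\]
which is exactly the paper's condition $n\geq \frac{3\kappa}{2\delta}+\frac{1}{\delta^2}$. So what makes the specific choice $n(\delta)=\lceil 1/\delta+1/\delta^2\rceil$ work is that the power of $N$ in $\ell_B^{-3}$ is at most $1$, not merely that it is some fixed power. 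For the later applications a bound $N^{2\delta}$ would be harmless, since $\delta$ is arbitrary. The lemma as stated, however, claims $CN^{\delta}$ with this $n(\delta)$, and only the corrected bookkeeping delivers that.
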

\begin{proof}
We note that
$
    [\cN_{\chi_{\ell_B}} (w), a_x^*] = \chi_{\ell_B}(w-x)a_x^*,
$
which immediately gives us \eqref{eq:pull_through_local}. Regarding \eqref{eq:localNbound}, we have that $(\cN_{\ell_B/2} (w)+1)^n$ is bounded by $ (\cN_{\chi_{\ell_B}} (y) +1)^n$ for arbitrary $w$ and $y$ with $\abs{w-y}\leq \ell_B/2$, and as a consequence we have for all $w\in \Lambda$
\begin{equation*}
\begin{split} \cL_n   &= \int dy \, (\cN_{\chi_{\ell_B}} (y) +1)^n \geq \int_{|w-y| \leq \ell_B/2}  dy \, (\cN_{\chi_{\ell_B}} (y) +1)^n \\ &\geq \int_{|w-y| \leq \ell_B/2}  dy \, (\cN_{\ell_B/2} (w) +1)^n  \geq C \ell_B^3   (\cN_{\ell_B/2} (w) +1)^n.
\end{split}  
\end{equation*}
Since the ball of radius $\ell_B$ can be covered by finitely many balls of radius $\ell_B/2$, i.e. $B_{\ell_B}(w)\subseteq \bigcup_{i\in I}B_{\ell_B/2}(w_i)$ with $|I|<\infty$, we obtain on the support of $\Theta(\cL_n/N^{\frac{1}{\delta}})$
\begin{equation*}
    \cN_{\ell_B} (w) \leq \sum_{i\in I}\cN_{\ell_B/2} (w_i)
    \leq C \ell_B^{-3/n}N^{1/(n\delta)} 
    \leq C  N^{\delta}
\end{equation*}
where we have used that $n= \lceil \frac{1}{\delta}+\frac{1}{\delta^2}\rceil \geq \frac{3\kappa}{2\delta}+\frac{1}{\delta^2}$. 
Decomposing balls of radius $k \ell_B$ into balls of radius $\ell_B$, we find 
\begin{equation*}
    \cN_{k\ell_B} (w) 
    \leq C  k^3 N^{\delta}.
\end{equation*}
if $n$ is large enough. The final statement \eqref{eq:localNboundexample} can be shown analogously.
\end{proof}

With the auxiliary Lemma \ref{lemma:propertiesL_pre} at hand, we can control the expectation value of $\mathcal{L}_n$ with respect to our trial state $\exp( {\Theta A^*}-\hc)\Omega$ in Lemma \ref{lemma:propertiesL}. In particular, we obtain strong bounds on the probability of leaving the spectral subspace $\Theta=1$ by Markov's inequality, see \eqref{eq:norm1-Theta}.

\begin{lemma}\label{lemma:propertiesL}
    For $\delta > 0$ small enough and $0\leq s\leq 1$, we have
    \begin{equation}\label{eq:expectationL}
    \la \exp(s  {\Theta A^*}-\hc)\Omega, \cL_n \exp(s  {\Theta A^*}-\hc)\Omega\ra \leq C.
\end{equation}
 As a consequence
\begin{equation}\label{eq:norm1-Theta}
    \norm{(1-\Theta(\cL_n/N^{\frac{1}{\delta}}))\exp( {\Theta A^*}-\hc)\Omega} \leq C N^{-1/(2\delta)}.
\end{equation}
Also, for $m\in \mathbb N$ and $\delta > 0$ small enough, we have 
 {
\begin{equation}\label{eq:easyboundNm}
    \la \exp( {\Theta A^*}-\hc ) \Omega , \cN^m \exp ( {\Theta A^*}- \hc ) \Omega \ra
    \leq C N^{\frac{3\kappa m}{2}-6\epsilon m}.
\end{equation}
}
Finally, we observe that, for any $k$ and $\nu$, we have  
  \begin{equation}\label{eq:AThetaA}
    \la \exp( {\Theta A^*}-\hc)\Omega,  {A}  \mathbbm{1}_{[1;\infty)} (\cL_n / N^{\frac{1}{\delta}})  { \mathcal{N}^k A^*} \exp( {\Theta A^*}-\hc)\Omega\ra \leq C N^{-\nu}
\end{equation}
if $\delta > 0$ is chosen small enough. The same holds for $\mathbbm{1}_{[1;\infty)}$ being replaced by $1-\Theta$.
\end{lemma}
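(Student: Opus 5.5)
The core of the argument is a Grönwall estimate for $f(s):=\la \xi_s, \cL_n \xi_s\ra$ along the flow $\xi_s:=\exp(s\Theta A^*-\hc)\Omega$. By construction $f(0)=\int dw\,1=1$, and
\begin{equation*}
f'(s)=\la \xi_s,[\cL_n,\Theta A^*-A\Theta]\xi_s\ra
=2\,\mathrm{Re}\,\la \xi_s,[\cL_n,A^*]\,\Theta\,\xi_s\ra ,
\end{equation*}
where we use that $\Theta=\Theta(\cL_n/N^{1/\delta})$ commutes with $\cL_n$. The aim is a bound $|f'(s)|\le C(1+f(s))$, with a constant independent of $N$; Grönwall on $[0,1]$ then gives \eqref{eq:expectationL}.

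To compute the commutator, we use that $A^*$ creates three particles at $x,y,z$. Hence
\begin{equation*}
[(\cN_{\chi_{\ell_B}}(w)+1)^n,\,a_x^*a_y^*a_z^*]
=a_x^*a_y^*a_z^*\Bigl[(\cN_{\chi_{\ell_B}}(w)+1+\chi_{\ell_B}(w-x)+\chi_{\ell_B}(w-y)+\chi_{\ell_B}(w-z))^n-(\cN_{\chi_{\ell_B}}(w)+1)^n\Bigr],
\end{equation*}
and this difference is bounded by $C_n(\cN_{\chi_{\ell_B}}(w)+1)^{n-1}$ times the indicator that one of $x,y,z$ lies within $2\ell_B$ of $w$. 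We then move the factor $a_xa_ya_z$ onto the left vector and keep the local number operators on the right, split as $(\cN+1)^{(n-1)/2}$ on both sides, using \eqref{eq:pull_through_local}. This reduces the problem to Cauchy--Schwarz bounds of the type
\begin{equation*}
\int dw\int dxdydz\,|\check\eta(x-y)||\check\sigma(x-z)|\,\mathbbm 1_{\{x,y,z\}\cap B_{2\ell_B}(w)\neq\emptyset}\,\norm{(\cN_{\chi}(w)+1)^{\frac{n-1}{2}}a_xa_ya_z\xi_s}\,\norm{(\cN_{\chi}(w)+1)^{\frac{n-1}{2}}\Theta\xi_s}.
\end{equation*}

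The main obstacle is the factor $N^{-1/2}$-type smallness in the cubic kernel, together with the need to control the three annihilation operators. I would handle this in two regimes.

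First, when $x,y,z$ are all close to $w$, i.e. within $k\ell_B$, the vector carries $\Theta$ on the right. In the adjoint form $\la\Theta\xi_s,A\,\cdots\ra$ we may apply \eqref{eq:localNbound} to bound the local particle numbers by $N^{\delta}$. The kernel norms from Lemma \ref{lemma:propertiesquadratickernels} then give $\norm{\check\eta}_2\norm{\check\sigma}_1\lesssim N^{(3\kappa-1+\epsilon)/2}N^{2\epsilon}$. This must combine with the particle counts in balls of size $\ell_B$, which are $O(N^{\delta})$, to produce an $O(1)$ constant times $(\cN_{\chi}(w)+1)^{n}$ integrated over $w$, i.e. a multiple of $\cL_n$. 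This is exactly where the condition $\epsilon<(2-3\kappa)/7$ should enter.

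Second, when $|x-z|$ or $|x-y|$ exceeds $k\ell_B$, I use the decay \eqref{eq:boundsigmaxm} and \eqref{eq:boundetam}. These give arbitrary powers $N^{-m\epsilon}$, which absorb the crude bounds on $\cN$ available from $\Theta$, namely $\cN\le C N^{\delta}\ell_B^{-3}$.

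Summing the two regimes yields $|f'|\le C f$, which completes the Grönwall step.

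The remaining statements follow from this. Markov's inequality gives \eqref{eq:norm1-Theta}: since $1-\Theta\le \mathbbm 1(\cL_n\ge N^{1/\delta})$,
\begin{equation*}
\norm{(1-\Theta)\xi}^2\le N^{-1/\delta}f(1)\le C N^{-1/\delta}.
\end{equation*}

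For \eqref{eq:easyboundNm}, I run a similar Grönwall argument directly for $\la\xi_s,(\cN+1)^m\xi_s\ra$. Each application of $\Theta A^*$ increases $\cN$ by $3$, and the coefficient is controlled by $\norm{\eta}_2\norm{\sigma}_2/\sqrt N\lesssim N^{3\kappa/2-\ldots}$ together with the local cutoff. Alternatively, one can cover $\Lambda$ by $\ell_B^{-3}$ balls and use the bound of order $1$ on the expected number of particles per ball, obtained from \eqref{eq:expectationL} via Jensen, which gives $\cN\lesssim \ell_B^{-3}=N^{3\kappa/2-6\epsilon}$ in every moment.

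Finally, for \eqref{eq:AThetaA}, I apply Cauchy--Schwarz to write
\begin{equation*}
\mathbbm 1_{[1,\infty)}(\cL_n/N^{1/\delta})\le (\cL_n/N^{1/\delta})^{j}
\end{equation*}
for large $j$. Bounding $\la\xi,\cL_n^{j}\cdots\xi\ra$ by a polynomial in $N$ requires a version of the Grönwall argument for $\cL_n^j$, or simply the crude bound $\cL_n\le(\cN+1)^n$ combined with polynomial moment bounds. Choosing $\delta$ small then makes $N^{-j/\delta}$ beat any polynomial, giving $N^{-\nu}$. The same reasoning applies with $1-\Theta$ in place of $\mathbbm 1_{[1,\infty)}$.
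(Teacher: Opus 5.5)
You follow the paper for \eqref{eq:expectationL}, \eqref{eq:norm1-Theta} and \eqref{eq:easyboundNm}: Gr\"onwall for $\la\xi_s,\cL_n\xi_s\ra$ with a short/long-range split of the kernel, then Markov. For \eqref{eq:easyboundNm}, what you need is the operator inequality $\cN^m\le C^m\ell_B^{-3m}\cL_n$, which holds only for $m\le n$; an $O(1)$ expected count per ball is not enough.

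The genuine gap is in \eqref{eq:AThetaA}. Bounding $\mathbbm 1_{[1,\infty)}(\cL_n/N^{1/\delta})\le(\cL_n/N^{1/\delta})^j$ and then using $\cL_n\le(\cN+1)^n$ cannot work, because $n=n(\delta)\ge\delta^{-2}$.
\begin{itemize}
\item You would need $\la\xi,(\cN+1)^{nj+k+3}\xi\ra$, which \eqref{eq:easyboundNm} does not give at the fixed $\delta$: the exponent exceeds $n$, and shrinking $\delta$ only increases $n$.
\item By Jensen this moment is at least $\la\xi,\cN\xi\ra^{nj}$. For $\kappa$ near $2/3$, where $\la\xi,\cN\xi\ra$ is expected to be a positive power of $N$, it is at least $N^{cj/\delta^2}$ with $c>0$.
\item That overwhelms the gain $N^{-j/\delta}$, so choosing $\delta$ small makes things worse, not better.
\end{itemize}
Your other option, a Gr\"onwall argument for $\cL_n^j$, is not carried out. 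It would in any case still require moving $\cL_n^j$ through $A^*$ in $\la\xi,A\cL_n^j\cN^kA^*\xi\ra$.

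The missing idea is the pull-through $\cL_n a_x^*a_y^*a_z^*=a_x^*a_y^*a_z^*\cL_n(x,y,z)$ with $\cL_n(x,y,z)\le4^n\cL_n$; you essentially wrote this down for the commutator. It gives $\mathbbm 1_{[1,\infty)}(\cL_n/N^{1/\delta})\cN^kA^*=\mathbbm 1_{[1,\infty)}(\cL_n/N^{1/\delta})\cN^kA^*\,\mathbbm 1_{[4^{-n},\infty)}(\cL_n/N^{1/\delta})$. Hence $Z\le CN^{\tau}\norm{\mathbbm 1_{[4^{-n},\infty)}(\cL_n/N^{1/\delta})\xi}\,\norm{(\cN+1)^{3+k}\xi}$, with $\tau$ independent of $\delta$.
\begin{itemize}
\item The first factor is at most $C2^nN^{-1/(2\delta)}$ by Markov and \eqref{eq:expectationL}; the factor $2^n$ does not depend on $N$.
\item The second factor needs \eqref{eq:easyboundNm} only for the fixed exponent $m=2(k+3)$.
\end{itemize}
So only the first moment of $\cL_n$ is ever used.

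Separately, fix the cutoff placement in your derivative. For the generator $\Theta A^*-A\Theta$ one has $f'(s)=2\,\mathrm{Re}\la\Theta\xi_s,[\cL_n,A^*]\xi_s\ra$, not $2\,\mathrm{Re}\la\xi_s,[\cL_n,A^*]\Theta\xi_s\ra$. This is not cosmetic: $a_xa_ya_z$ must act on $\Theta\xi_s$ so that \eqref{eq:localNboundexample} can replace the surplus local particle numbers by $N^{\delta}$. With your pairing you would face $\int dw\,\la\xi_s,\cN_{3\ell_B}(w)^3(\cN_{\chi_{\ell_B}}(w)+1)^{n-1}\xi_s\ra$, which $\cL_n$ does not control. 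Taking the adjoint does not change which vector gets annihilated.

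With the correct pairing, the small prefactor is $N^{\delta}\ell_B^{3/2}\norm{\check A}_2\lesssim N^{\delta-(2-3\kappa-7\epsilon)/2}$, where $\check A(u,v)=N^{-1/2}\check\eta(u)\check\sigma(v)$. The $\ell_B^3$ comes from the $w$-integral on the uncut side, not from $\norm{\check\eta}_2\norm{\check\sigma}_1$.
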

\begin{proof}
We note that \eqref{eq:norm1-Theta} is an immediate consequence of (\ref{eq:expectationL}), and \eqref{eq:easyboundNm} follows from (\ref{eq:expectationL}) for $m\leq n=\lceil \frac{1}{\delta}+\frac{1}{\delta^2}\rceil$ together with the simple observation
\begin{align}
\label{eq:compare_N_with_L}
    \cN^m \leq \left(C \ell_B^{-3} \int dw \, \cN_{\chi_{\ell_B}} (w)\right)^m \leq C^m \ell_B^{-3m} \cL^{m/n}_n\leq C^m \ell_B^{-3m}\cL_n.
\end{align}

To prove (\ref{eq:expectationL}), we set $\xi=\exp( {t\Theta A^*}-\hc)\Omega$ and compute
\begin{equation*}
\begin{split}
 & \Big| \frac{d}{dt} \langle e^{t   {\Theta A^*}-\hc } \Omega, \cL_n     e^{t   {\Theta A^*}-\hc} \Omega\rangle \Big|= 2\Big|  \mathfrak{Re}\langle \xi, [\cL_n,\Theta A^*]    \xi\rangle \Big|\\
    &\leq \abs{\la\xi, [\cL_n, \frac{1}{\sqrt{N}}  {\Theta(\cL_n/N^{\frac{1}{\delta}})} \int dx_1 dx_2 dx_3 \check{\eta}(x_1-x_2)\check{\sigma}(x_1-x_3)  a_{x_1}^* a_{x_2}^* a_{x_3}^* ]\xi\ra}\\
    &\leq \frac{1}{\sqrt{N}} \int dx_1 dx_2 dx_3 dw \, |\check{\eta}(x_1-x_2)\check{\sigma}(x_1-x_3)|   \, \abs{\la\xi, {\Theta(\cL_n/N^{\frac{1}{\delta}})}[(\cN_{\chi_{\ell_B}} (w) +1)^n, a_{x_1}^* a_{x_2}^* a_{x_3}^*] \xi\ra}\\
    &\leq \sum_{k=0}^{n-1}\int dx_1 dx_2 dx_3dw  \,  {\frac{1}{\sqrt{N}}}|\check{\eta}(x_1-x_2)\check{\sigma}(x_1-x_3)| \\ &\hspace{.4cm} \times \big| \la\xi, {\Theta(\cL_n/N^{\frac{1}{\delta}})}(\cN_{\chi_{\ell_B}} (w) +1)^k 
    \sum_{i=1}^3\chi_{\ell_B} (x_i-w)  a_{x_1}^* a_{x_2}^* a_{x_3}^*
    (\cN_{\chi_{\ell_B}} (w) +1)^{n-k-1} \xi\ra \big| \\
    &\leq C
    \sum_{i=1}^3
    \int dx_1 dx_2 dx_3 dw  {\frac{1}{\sqrt{N}}}|\check{\eta}(x_1-x_2)\check{\sigma}(x_1-x_3)|
    \chi_{\ell_B}(x_i-w)\\
    &\hspace{.4cm} \times\norm{a_{x_1} a_{x_2} a_{x_3}   (\cN_{\chi_{\ell_B}} (w) +1)^{(n-1)/2}  {\Theta(\cL_n/N^{\frac{1}{\delta}})}\xi}
    \norm{(\cN_{\chi_{\ell_B}} (w) +1)^{(n-1)/2} \xi}\\
    & =: \sum_{i=1}^3\left(X^{(i)}_{\ell_B}+X^{(i)}_\infty\right),
\end{split}
\end{equation*}
 {where we split the kernel $\frac{1}{\sqrt{N}}\check{\eta}(u)\check{\sigma}(v)=\check{A}_{\ell_B}(u,v)+\check{A}_\infty(u,v)$ in the main contribution $\check{A}_{\ell_B}(u,v):=\mathbbm{1}(|u|,|v|\leq \ell_B)\check{A}(u,v)$ and the residuum $\check{A}_\infty$}. Considering e.g. $i=1$, Cauchy-Schwarz yields
\[ \begin{split}
    &X^{(1)}_{\ell_B}\leq C
    \sum_{i=1}^3
    \Big(\int dx_1 dx_2 dx_3 dw \, \chi_{\ell_B}(x_1-w)\mathbbm{1}(\abs{x_1-x_2},\abs{x_1-x_3}\leq \ell_B)\\
    &\hspace{3cm}\times\norm{a_{x_1} a_{x_2} a_{x_3}   (\cN_{\chi_{\ell_B}} (w) +1)^{\frac{(n-1)}{2}} {\Theta(\cL_n/N^{\frac{1}{\delta}})} \xi}^2 \Big)^{1/2}\\
    &\hspace{.1cm} \times  \Big(\int dx_1 dx_2 dx_3 dw |\check{A}_{\ell_B}(x_1-x_2,x_1-x_3)|^2 
    \chi_{\ell_B}(x_1-w)\norm{(\cN_{\chi_{\ell_B}} (w) +1)^{\frac{(n-1)}{2}} \xi}^2\Big)^{1/2} \\
    & \leq C  \Big(\int  dw \, \norm{\cN_{2\ell_B} (w)(\cN_{\chi_{\ell_B}} (w) +1)^{\frac{n}{2}}  {\Theta(\cL_n/N^{\frac{1}{\delta}})} \xi}^2 \Big)^{1/2}\\
    &\hspace{.1cm} \times    \Big(\int  dw \, \| \check{A}_{\ell_B}(x_1-x_2,x_1-x_3) \chi_{\ell_B}(x_1-w)\|_2\, \norm{(\cN_{\chi_{\ell_B}} (w) +1)^{\frac{(n-1)}{2}}\xi}^2\Big)^{1/2}\\
   & \leq C N^{\delta} \Big(\int  dw \, \norm{(\cN_{\chi_{\ell_B}} (w) +1)^{\frac{n}{2}}  \xi}^2 \Big)^{1/2}\ell_B^3\|   \check{A}_{\ell_B}\|_2   \Big(\int  dw \, \norm{(\cN_{\chi_{\ell_B}} (w) +1)^{\frac{(n-1)}{2}}\xi}^2\Big)^{1/2}\\
    & \leq C N^{\delta +\frac{3\kappa-2+7\epsilon}{2}} \la\xi, \cL_n\xi\ra.
\end{split} \]    
if $\delta$ is small enough. Here, we used (\ref{eq:localNboundexample}) to control $\cN_{2\ell_B}(w)$, and Lemma \ref{lemma:propertiesquadratickernels} to control $\ell_B^3\|   \check{A}_{\ell_B}\|^2_2\lesssim N^{3\kappa-2+7\epsilon}$. The residual term can be treated similarly
\begin{align*}
  &  X^{(1)}_\infty \leq C  \Big(\int  dw \, \norm{\cN(\cN_{\chi_{\ell_B}} (w) +1)^{\frac{n}{2}}  {\Theta(\cL_n/N^{\frac{1}{\delta}})} \xi}^2 \Big)^{1/2}\\
    &\hspace{.1cm} \times \| \check{A}_{\infty}\|_2   \Big(\int  dw \, \norm{(\cN_{\chi_{\ell_B}} (w) +1)^{\frac{(n-1)}{2}}\xi}^2\Big)^{1/2}\\
    & \leq \| \check{A}_{\infty}\|_2 N^{\frac{1}{n\delta }+3\kappa/2-6\epsilon} \la\xi, \cL_n\xi\ra\leq \la\xi, \cL_n\xi\ra,
\end{align*}
for $\delta$ small enough, where we have used \eqref{eq:compare_N_with_L} to control $\mathcal{N}$ and $\| \check{A}_{\infty}\|_2\leq C_m N^{-m}$ for any $m$ according to Lemma \ref{lemma:propertiesquadratickernels}. Choosing $0 < \delta < (2-3\kappa-7\epsilon)/2$, and proceeding for the terms $X^{(i)}_{\ell_B}+X^{(i)}_\infty$ with $i\in \{2,3\}$ in an analogue fashion, we conclude that 
 \[ \Big| \frac{d}{dt} \langle e^{t   {\Theta A^*}-\hc } \Omega, \cL_n     e^{t   {\Theta A^*}-\hc} \Omega\rangle \Big|  \leq C  \langle e^{t   {\Theta A^*}-\hc} \Omega, \cL_n     e^{t   {\Theta A^*}-\hc} \Omega\rangle  \]
 for every $t \in \bR$. Applying Gronwall's lemma in the interval $t \in [0;1]$, we obtain (\ref{eq:expectationL}). 

Finally, we prove (\ref{eq:AThetaA}). Since  $\mathcal{L}_n a_x^* a_y^* a_z^*=a_x^* a_y^* a_z^* \mathcal{L}_n(x,y,z)$ with 
\[\mathcal{L}_n(x,y,z):=  \int dw \, \big( \cN_{\chi_{\ell_B}} (w) +\chi_{\ell_B} (x) +\chi_{\ell_B} (y) +\chi_{\ell_B} (z)  + 1 \big)^n,\]
and since we have the trivial bound $  \mathcal{L}_n(x,y,z)\leq  4^n \mathcal{L}_n$, we find 
\begin{align*}
    Z: & =\langle e^{\Theta A^* - A\Theta} \Omega, A  \mathbbm{1}_{[1;\infty)} (\cL_n / N^{\frac{1}{\delta}}) \mathcal{N}^k A^* e^{\Theta A^* - A\Theta} \Omega \rangle \\
    & = \langle e^{\Theta A^* - A\Theta} \Omega, A  \mathbbm{1}_{[1;\infty)} (\cL_n / N^{\frac{1}{\delta}})\mathcal{N}^k A^* \mathbbm{1}_{[4^{-n},\infty)} (\cL_n / N^{\frac{1}{\delta}}) e^{ \Theta A^* - A\Theta} \Omega \rangle.
\end{align*}
With $\| (\cN+1)^{-(3+k)} A  \mathbbm{1}_{[1;\infty)} (\cL_n / N^{\frac{1}{\delta}})\mathcal{N}^k A^*  \| \leq C N^\tau$, for some $\tau > 0$, we conclude 
\[  Z \leq  C N^\tau \|  \mathbbm{1}_{[4^{-n},\infty)} (\cL_n / N^{\frac{1}{\delta}})  e^{\Theta A^* - A\Theta} \Omega \| \| (\cN+1)^{3+k} e^{\Theta A^* - A\Theta} \Omega  \|. \]
From \eqref{eq:norm1-Theta}, which clearly holds as well for $1-\Theta$ being replaced by $ \mathbbm{1}_{[4^{-n};\infty)} (\cL_n / N^{\frac{1}{\delta}})$, and \eqref{eq:easyboundNm} we obtain $Z  \leq C N^{-\nu}$ if $\delta > 0$ is chosen small enough. 
\end{proof}

\subsection{The number of particles in the trial state} 

 We recall the definition of the trial state
\begin{equation}\label{eq:psi1} \Psi_N = W_{N_0} e^{B-B^*} e^{\Theta A^*  - A \Theta } \Omega \end{equation} 
in (\ref{eq:psi0}), where $N_0$ represents the number of particles in the condensate created by the Weyl transformation $W_{N_0}$. Notably, the Bogoliubov transformation $e^{B-B^*}$ creates additional $\sum_{p \in \Lambda_+^*} \sigma_p^2$ many particles outside of the condensate, and hence we expect that $N_0 +\sum_{p \in \Lambda_+^*} \sigma_p^2$ approximates the total number of particles in our trial state $\Psi_N$. To guarantee that the total number of particles is close to $N$, we choose the number of particles in the condensate $N_0$ so that 
\begin{align}
    \label{eq:particle_number_0}
    \abs{N_0 +\sum_{p \in \Lambda_+^*} \sigma_p^2 - N}
\leq C N^{3\kappa/2-\epsilon}
\end{align}
with $\sigma_p = \sinh \mu_p$ as defined below \eqref{eq:definition_cut_off_Bog_objects}. 

In the next lemma, we check that, with these definitions, our trial state has the correct number of particles. To be precise, we find $N_-$ and $N_+$, both consistent with \eqref{eq:particle_number_0} such that the first corresponding state $\Psi_N^-$ has slightly less and the other state $\Psi_N^+$ slightly more than $N$ particles. As is explained in the proof of Theorem \ref{theorem:main}, the states $\Psi_N^\pm$ can be used to construct a state $\Gamma$ by a convex combination that has the exact right number of particles $N$. In the next section, we compute for any $N_0$ satisfying \eqref{eq:particle_number_0} the energy of the corresponding state $\Psi_N$. 
\begin{lemma}\label{lm:NN2}
The state $\Psi_N$ defined in (\ref{eq:psi1}) satisfies
\begin{align*}
   \abs{\la \Psi_N,\cN \Psi_N \ra - N_0 -\sum_{p \in \Lambda_+^*} \sigma_p^2}
\leq C N^{3\kappa/2-5\epsilon}.  
\end{align*}
In particular, we find $N_-,N_+$ such that $N_0:=N_\pm$ satisfies \eqref{eq:particle_number_0} and the corresponding states $\Psi^{\pm}_N$ satisfy
 \begin{align*}
     \la \Psi^{-}_N,\cN \Psi^{-}_N \ra+ N^{\frac{3\kappa}{2}-\epsilon} \leq N\leq \la \Psi^{+}_N, \cN \Psi^{+}_N\ra- N^{\frac{3\kappa}{2}-\epsilon}\leq 2 N.
 \end{align*}
 \end{lemma}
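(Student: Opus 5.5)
We conjugate $\cN$ successively through the three unitaries and then estimate what remains using Lemma \ref{lemma:propertiesL}.

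\textbf{Conjugation through the Weyl operator and the Bogoliubov transformation.} By \eqref{eq:weyl1},
\[
W_{N_0}^*\cN W_{N_0}=\cN+\sqrt{N_0}(a_0+a_0^*)+N_0 .
\]
The coefficients $\mu_p$ vanish at $p=0$, and the operators $A$ and $\Theta$ do not couple to the zero mode in a way that creates $a_0$-coherence. More precisely, $\xi:=e^{\Theta A^*-A\Theta}\Omega$ and $e^{B-B^*}\xi$ have vanishing expectation of $a_0$. This follows from a parity/momentum argument: $A^*$ creates three particles of total momentum zero, so $\xi$ and $e^{B-B^*}\xi$ are invariant under translations and contain only zero total momentum components, while $a_0$ preserves momentum. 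The argument is therefore phrased as a parity argument in particle number mod 3 combined with the pairing structure. If this is not clean, I will instead bound the term by $\sqrt{N_0}\,\abs{\la e^{B-B^*}\xi,a_0e^{B-B^*}\xi\ra}$ and use $e^{-(B-B^*)}a_0e^{B-B^*}=a_0$ (since $\mu_0=0$). That reduces it to $\sqrt{N_0}\,\abs{\la\xi,a_0\xi\ra}$. Here $\la\xi,a_0\xi\ra=0$ because $\xi$ lives only on particle numbers divisible by $3$: $\Theta$ commutes with $\cN$, and $A^*$ raises $\cN$ by exactly $3$.

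Next,
\[
e^{-(B-B^*)}\cN e^{B-B^*}=\sum_{p\neq0}\bigl[(\gamma_p^2+\sigma_p^2)a_p^*a_p+\gamma_p\sigma_p(a_p^*a_{-p}^*+a_pa_{-p})\bigr]+\sum_{p\neq 0}\sigma_p^2 .
\]
Hence
\[
\la\Psi_N,\cN\Psi_N\ra-N_0-\sum_p\sigma_p^2=\sum_p(\gamma_p^2+\sigma_p^2)\la\xi,a_p^*a_p\xi\ra+2\,\mathfrak{Re}\sum_p\gamma_p\sigma_p\la\xi,a_pa_{-p}\xi\ra .
\]

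\textbf{Bounding the remainder.} The pairing term vanishes by the same mod-$3$ argument, since $a_pa_{-p}$ changes $\cN$ by $2$. For the diagonal term, use $\gamma_p^2+\sigma_p^2\le 1+2\norm{\sigma}_\infty^2\le CN^{\epsilon}$ from Lemma \ref{lemma:propertiesquadratickernels}. This bounds the term by $CN^{\epsilon}\la\xi,\cN\xi\ra$, and \eqref{eq:easyboundNm} with $m=1$ gives $CN^{\epsilon}N^{3\kappa/2-6\epsilon}=CN^{3\kappa/2-5\epsilon}$, which is the claim.

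\textbf{Choosing $N_\pm$.} Set
\[
N_\pm:=N-\sum_p\sigma_p^2\pm 2N^{3\kappa/2-\epsilon}.
\]
These satisfy \eqref{eq:particle_number_0}. They are positive, since $\sum_p\sigma_p^2\le CN^{3\kappa/2}\ll N$ as $\kappa<2/3$. By the first part,
\[
\la\Psi_N^\pm,\cN\Psi^\pm_N\ra=N\pm2N^{3\kappa/2-\epsilon}+O(N^{3\kappa/2-5\epsilon}),
\]
which for $N$ large yields both inequalities and the bound by $2N$.

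The only delicate step is justifying that the off-diagonal terms vanish. This relies on $\Theta$ being a function of $\cL_n$, which commutes with $\cN$, so that the anti-selfadjoint generator maps $\cN$-eigenspaces to eigenspaces shifted by $\pm3$. Otherwise the pairing terms would have to be estimated directly, at a cost of $\norm{\gamma\sigma}_1\approx N$, which is far too large.
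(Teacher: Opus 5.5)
Your proposal is correct and follows the paper's proof. You conjugate through $W_{N_0}$ and $e^{B-B^*}$, kill the $a_0$ and pairing terms because $\xi$ is supported on $\cN\in 3\bZ$, and bound the diagonal term by $CN^{\epsilon}\la\xi,\cN\xi\ra\le CN^{3\kappa/2-5\epsilon}$ via \eqref{eq:easyboundNm}; you make explicit the mod-$3$ symmetry that the paper only asserts (your opening translation-invariance remark does not give this by itself, since $a_0$ conserves momentum), and your explicit choice $N_\pm=N-\sum_p\sigma_p^2\pm 2N^{3\kappa/2-\epsilon}$ supplies the step the paper leaves implicit.
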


\begin{proof}
Using the explicit action of the Weyl operator $W_{N_0} $ and the Bogoliubov operator $ e^{B-B^*}$ on creation and annihilation operators $a_p^*$ and $a_p$, and the fact that taking the expectation with respect to the state $e^{A^* \Theta- \hc} \Omega$ of both $a_0, a_0^*$ and $a_p^* a_{-p}^*$ yields zero, we find
\begin{align*}
    \la \Psi_N, \cN \Psi_N \ra =
 N_0 + \sum_{p \in \Lambda^*_+} \sigma_p^2 
 + \la e^{A^* \Theta - \Theta A} \Omega, \sum_{p \in \Lambda_+^*} (\gamma_p^2 + \sigma_p^2) a_p^* a_p e^{A^* \Theta - \Theta A} \Omega\ra 
 . 
\end{align*}
Utilizing the fact that $\gamma_p^2 + \sigma_p^2\lesssim N^{\epsilon}$, together with \eqref{eq:easyboundNm} for $m=1$, yields
\begin{align*}
   \la e^{A^* \Theta - \Theta A} \Omega, \sum_{p \in \Lambda_+^*} (\gamma_p^2 + \sigma_p^2) a_p^* a_p e^{A^* \Theta - \Theta A} \Omega\ra 
\lesssim N^{\epsilon}\la e^{A^* \Theta - \Theta A} \Omega, \mathcal{N} e^{A^* \Theta - \Theta A} \Omega\ra \lesssim N^{3\kappa/2-5\epsilon}
 .
\end{align*}
\end{proof}

\section{Energy of the trial state}

Next, we compute the energy $\langle \Psi_N, \cH_N \Psi_N \rangle$ of the trial state $\Psi_N= W_{N_0} e^{B^* - B} \xi$ leading to the proof of Theorem \ref{theorem:Fockspaceresult}. 

\subsection{Action of Weyl and Bogoliubov transformations on $\cH_N$} 

We first extract the leading order energy in the next Lemma \ref{lemma:quadraticrenormalization} by applying the Weyl transformation $W_{N_0}$ and the Bogoliubov transformation $e^{B^*-B}$ on the Hamiltonian 
\begin{equation}\label{eq:HNFock}
    \cH_N = \cK + \cV_N
    = \sum_{p\in\Lambda^*} p^2 a_p^* a_p
    + \frac{N^\kappa}{2N} \sum_{p,q,r\in\Lambda^*} \hat{V}(r/N^{1-\kappa}) a_{p+r}^* a_{q-r}^* a_q a_p.
\end{equation} 
At this point, we extract the leading order and a contribution to the LHY-term and deal with the expectation of the remaining operators in the state $\xi$ later. The proof of the Lemma can be found in the appendix.
\begin{lemma}\label{lemma:quadraticrenormalization}
     {Let $N_0$ satisfy \eqref{eq:particle_number_0}.} Then, we have 
    \begin{equation}\label{eq:lmHN1}
    \begin{split}
        \la \Psi_N, \cH_N \Psi_N \ra 
        &= \; 4\pi\aa N^{1+\kappa}\\
        &+ \frac{1}{2}\sum_{p\in\Lambda^*_+}
        \Big(
        \sqrt{\abs{p}^4+2p^2 N^\kappa\widehat{Vf}(\frac{p}{N^{1-\kappa}})}
        - p^2 - N^\kappa\widehat{Vf}(\frac{p}{N^{1-\kappa}})
        + \frac{N^{2\kappa}\widehat{Vf}(\frac{p}{N^{1-\kappa}})^2}{2\abs{p}^2}
        \Big)
        \\
        &- \frac{N^\kappa \norm{\sigma}_2^2}{N} \sum_{p\in\Lambda^*}
        \hat{V}(p/N^{1-\kappa}) \eta_{\infty,p}
        - N^{\kappa-1} \sum_{p,r\in\Lambda^*}
        \hat{V}(r/N^{1-\kappa})
        \eta_{\infty,p+r}\sigma_{p}^2\\
        &+ \la \xi, (\cK + \cC_N^* + \cC_N + \cV_N)\xi \ra
        + \la \xi, (\cQ + \tilde\cC_N^* + \tilde\cC_N + \tilde\cV_N)\xi \ra + O(N^{5\kappa/2-\epsilon}+N^{2\kappa})
    \end{split}
    \end{equation}
    with
    \begin{equation}\label{eq:CtCtV}
    \begin{split} 
        \cC_N^* &=
        \frac{N^\kappa N_0^{1/2}}{N} \sum_{p, r\in\Lambda^*} \hat{V}(r/N^{1-\kappa}) 
        \sigma_p 
        a_{p+r}^* a_{-r}^* a_{-p}^*, \\
        \tilde\cC_N^* &=
        \frac{N^\kappa N_0^{1/2}}{N} \sum_{p, r\in\Lambda^*} \hat{V}(r/N^{1-\kappa}) 
        ((\gamma_{p+r}\gamma_r-1)\sigma_p + \sigma_{p+r}\sigma_r\gamma_p)
        a_{p+r}^* a_{-r}^* a_{-p}^*, \\
        \tilde\cV_N
        &=\frac{N^\kappa}{2N} \sum_{p,q,r\in\Lambda^*} \hat{V}(r/N^{1-\kappa}) 
        (\gamma_{p+r} \gamma_{q-r}\gamma_{q}\gamma_{p}-1
        +\sigma_{p+r}\sigma_{q-r}\sigma_{q}\sigma_{p}
        +2\gamma_{p+r}\sigma_{q-r}\sigma_{q}\gamma_p
        ) \\ &\hspace{5cm} \times 
        a_{p+r}^* a_{q-r}^*
         a_{q}a_{p}
        \\
        &\hspace{.4cm} +\frac{N^\kappa}{N} \sum_{p,q,r\in\Lambda^*} \hat{V}(r/N^{1-\kappa}) 
        \gamma_{p+r}\sigma_{q-r}\gamma_{q}\sigma_{p}
        a_{p+r}^*a_{-p}^*  a_{-q+r} a_{q} 
\end{split}
\end{equation}
and where $\cQ$ is a quadratic operator satisfying
$
    \pm\cQ \leq C N^{\kappa+\epsilon} \cN.
$
\end{lemma}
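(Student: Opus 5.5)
The plan is a direct conjugation computation. First I apply the Weyl operator using \eqref{eq:weyl1}, so that $W_{N_0}^*\cH_N W_{N_0}$ becomes a polynomial in $a_p,a_p^*$ with $a_0\mapsto a_0+\sqrt{N_0}$. I sort the result by degree: a constant $\frac{N^\kappa}{2N}\hat V(0)N_0^2$, linear terms, quadratic terms (kinetic plus $\frac{N^\kappa N_0}{2N}\hat V(r/N^{1-\kappa})(a_r^*a_{-r}^*+\hc)$ plus diagonal parts), cubic terms with prefactor $N^\kappa N_0^{1/2}/N$, and the quartic $\cV_N$.

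Next I conjugate with $e^{B-B^*}$. It acts as $a_p\mapsto \gamma_p a_p+\sigma_p a_{-p}^*$; the sign convention is fixed by \eqref{eq:BT0}. Its action on $a_0$ is trivial because $\mu_0=0$. I normal order every term. The parts with one factor $1$ coming from $\gamma$ and the remaining factors $\gamma-1$ or $\sigma$ give the main operators: the cubic part gives $\cC_N$, and the quartic part gives $\cV_N$ together with $\tilde\cV_N$. The remaining cubic products give $\tilde\cC_N$.

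The constant terms from the quadratic part are $\sum_p p^2\sigma_p^2+\frac{N^\kappa N_0}{N}\sum_p \hat V(p/N^{1-\kappa})\gamma_p\sigma_p$. Normal ordering the quartic part adds $\frac{N^\kappa}{2N}\sum \hat V\,\sigma\gamma\sigma\gamma$ and terms of the form $\hat V(0)\|\sigma\|_2^2$. I then replace $N_0$ by $N-\|\sigma\|_2^2$ using \eqref{eq:particle_number_0}. This generates the cross terms $-\frac{N^\kappa\|\sigma\|_2^2}{N}\sum\hat V\eta_\infty$ and $-N^{\kappa-1}\sum\hat V(r)\eta_{\infty,p+r}\sigma_p^2$. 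To reach these forms I use \eqref{eq:gammasigmainfty-etainftysum} and \eqref{eq:eta-etainftysum} to trade $\gamma\sigma$ for $\eta_\infty$.

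For the leading order I combine $\frac{N^\kappa}{2}\hat V(0)N$ with the pairing term, via \eqref{eq:etacorrection} and the discrete scattering equation \eqref{eq:scatteringdiscrete}. The pairing term $N^\kappa\sum\hat V\gamma\sigma$ is written as $N^\kappa\sum\hat V\eta_\infty$ plus a correction. The result is $4\pi\aa N^{1+\kappa}$ plus a remainder. In that remainder $\hat V$ is replaced by $\widehat{Vf}$ (using the second line of \eqref{eq:scatteringdiscrete}), which gives the one-dimensional sums $\sum_p[p^2\sigma_p^2+\frac{N^\kappa}{2}\widehat{Vf}(\sigma_p^2+\gamma_p\sigma_p)]$. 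With the choice of $\mu_\infty$, namely $\tanh 2\mu=-\widehat{Vf}/(p^2+\widehat{Vf})$, this sum evaluates exactly to $\frac12(\sqrt{p^4+2p^2N^\kappa\widehat{Vf}}-p^2-N^\kappa\widehat{Vf})$. The extra $+\frac{N^{2\kappa}\widehat{Vf}^2}{2p^2}$ is the leftover from using $\eta_\infty$ rather than $\gamma\sigma$ in the leading order term. Removing the cutoff $\chi$ for $|p|\le 2\ell_\sigma^{-1}$ costs at most $\sum_{|p|\lesssim N^{\kappa/2-\epsilon}}N^{2\kappa}/p^2\lesssim N^{5\kappa/2-\epsilon}$, which fits the error.

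The leftover quadratic operators in $\xi$ are collected into $\cQ$. Its coefficients are $p^2(\gamma_p^2+\sigma_p^2-1)$, the terms $N^\kappa\widehat{V}\gamma\sigma$, and the diagonal terms $N^\kappa\hat V(0)$ times $\|\sigma\|^2/N$. These are bounded pointwise by $N^{\kappa+\epsilon}$ using $\|\sigma\|_\infty\le N^{\epsilon/2}$ and $p^2\sigma_p^2\lesssim N^{\kappa/2}|p|$ in the relevant range. Together with the off-diagonal $a^*a^*$ terms this gives $\pm\cQ\le CN^{\kappa+\epsilon}\cN$.

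The linear terms have zero expectation in $\xi$, since $\xi$ contains only particle numbers divisible by three. Alternatively, they vanish because the Bogoliubov transformation preserves zero momentum. Other error contributions come from quartic terms with factors $\hat V(0)\|\sigma\|_2^4/N$ and from constants involving $\sigma\sigma\gamma\gamma$ convolved with $\hat V$. These are bounded by $N^{\kappa-1}\|\sigma\|_1\|\sigma\|_\infty\ldots$ and fall into $O(N^{2\kappa}+N^{5\kappa/2-\epsilon})$.

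The main obstacle is the exact bookkeeping of the constants. Matching the scattering-equation cancellation against $\gamma\sigma$ versus $\eta_\infty$ requires the $L^1$ bounds of Lemma~\ref{lemma:scattering} to land exactly at $N^{3\kappa/2}\cdot N^{\kappa}$. Any slip produces errors of order $N^{5\kappa/2}$, which is the same size as the LHY term itself.
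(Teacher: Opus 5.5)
Your route is the same as the paper's. You conjugate by $W_{N_0}$ and $e^{B^*-B}$, normal order, and drop every term changing the particle number by $1$, $2$ or $4$, since $\xi$ lives on the sectors with $3k$ particles. You then read off $\cC_N,\tilde\cC_N,\cV_N,\tilde\cV_N,\cQ$ and reduce the constants with \eqref{eq:particle_number_0}, \eqref{eq:scatteringdiscrete}, \eqref{eq:etacorrection} and \eqref{eq:sigmainfty}. However, your constant bookkeeping contains slips of exactly the size $N^{5\kappa/2}$ you warn about.

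The main one is the quartic pairing constant $\frac{N^\kappa}{2N}\sum_{p,r}\hat V(r/N^{1-\kappa})\sigma_{p+r}\gamma_{p+r}\gamma_p\sigma_p$. It is of order $N^{1+\kappa}$ and must be kept in full. Your error list (``constants involving $\sigma\sigma\gamma\gamma$ convolved with $\hat V$'') may contain only the exchange constant $\frac{N^\kappa}{2N}\sum_{p,r}\hat V(r/N^{1-\kappa})\sigma_{p+r}^2\sigma_p^2=O(N^{4\kappa-1})$. The pairing constant enters twice.
\begin{enumerate}[(i)]
\item Its $\eta_\infty\eta_\infty$ part is needed for the leading order. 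By \eqref{eq:etacorrection}, $\frac{N^{1+\kappa}}{2}\hat V(0)+N^\kappa\sum_p\hat V(p/N^{1-\kappa})\eta_{\infty,p}$ gives only $(8\pi\mathfrak{a}-\hat V(0)/2)N^{1+\kappa}$. You reach $4\pi\mathfrak{a}N^{1+\kappa}-\sum_p p^2\eta_{\infty,p}^2$ only after adding $\frac{N^\kappa}{2N}\sum\hat V\eta_\infty\eta_\infty$ and using \eqref{eq:scatteringdiscrete}.
\item Its cross part $\frac{N^\kappa}{N}\sum_{p,r}\hat V(r/N^{1-\kappa})\eta_{\infty,p+r}(\gamma_{\infty,p}\sigma_{\infty,p}-\eta_{\infty,p})$ has size $N^{\kappa-1}\norm{\eta_\infty}_1\norm{\gamma_\infty\sigma_\infty-\eta_\infty}_1\sim N^{5\kappa/2}$. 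It exactly cancels the $\widehat{Vw}$-part created when you replace $\hat V$ by $\widehat{Vf}$ in the pairing correction $N^\kappa\sum_p\hat V(p/N^{1-\kappa})(\gamma_{\infty,p}\sigma_{\infty,p}-\eta_{\infty,p})$. What remains is $-2\sum_p p^2\eta_{\infty,p}(\gamma_{\infty,p}\sigma_{\infty,p}-\eta_{\infty,p})$.
\end{enumerate}
Your sketch never states this cancellation; without it you are off by a term of LHY order.

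There are three smaller slips.
\begin{enumerate}[(i)]
\item The cross term $-N^{\kappa-1}\sum_{p,r}\hat V(r/N^{1-\kappa})\eta_{\infty,p+r}\sigma_p^2$ is not generated by $N_0\to N-\norm{\sigma}_2^2$. In the $\sigma^2$-constant that substitution costs only $O(N^{4\kappa-1})$. The term comes from the exchange constant $\frac{N^\kappa N_0}{N}\sum_p\hat V(p/N^{1-\kappa})\sigma_p^2$, produced by the $2a_p^*a_p$ term of $\cW_2$ and missing from your list. You obtain it by splitting $\hat V=\widehat{Vf}+\widehat{Vw}$ and using the second line of \eqref{eq:scatteringdiscrete}.
\item The constants then reduce to $\sum_p[p^2\sigma_{\infty,p}^2+N^\kappa\widehat{Vf}(p/N^{1-\kappa})(\sigma_{\infty,p}^2+\gamma_{\infty,p}\sigma_{\infty,p})+p^2\eta_{\infty,p}^2]$. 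The coefficient is $N^\kappa$, not your $N^\kappa/2$; with $N^\kappa/2$ the identity with $\frac12(\sqrt{p^4+2p^2N^\kappa\widehat{Vf}}-p^2-N^\kappa\widehat{Vf})$ is false. The last summand is what produces $N^{2\kappa}\widehat{Vf}^2/(4p^2)$.
\item The off-diagonal terms $a_p^*a_{-p}^*+\text{h.c.}$ cannot go into $\cQ$, because $\pm(a_p^*a_{-p}^*+\text{h.c.})\leq C\cN$ fails; test it on $\Omega+\beta a_p^*a_{-p}^*\Omega$ with small $\beta$. Like the linear terms, they must be discarded by the mod-$3$ argument. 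That argument is also the only valid reason for the linear terms: momentum conservation alone does not kill $\la\xi,a_0\xi\ra$.
\end{enumerate}
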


\subsection{Bounds for negligible cubic expectations} 
It is the content of this Subsection to demonstrate that the second term in the last line of \eqref{eq:lmHN1} does not contribute to the Lee-Huang-Yang order $N^{5\kappa/2}$. We will analyze the various terms $\cQ, \tilde\cC_N$ and $\tilde\cV_N$ separately in Lemma \ref{lm:boundN}, Lemma \ref{lm:tildeCN} and Lemma \ref{lm:tildeVN} respectively.

\begin{lemma}\label{lm:boundN}
There is a constant $C>0$ such that, for any $t\in \left[ 0;1\right]$, we have

\begin{equation*}\label{eq:N-bd}
        \la e^{t \Theta A^*-\hc}\Omega, \cN e^{t \Theta A^*-\hc}\Omega\ra
        \leq C N^{9\kappa/2 - 2 +\epsilon + 2\delta}.
    \end{equation*}
\end{lemma}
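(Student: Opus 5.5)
We will derive a Gronwall-type estimate for $f(t):=\la \xi_t,\cN\xi_t\ra$ with $\xi_t:=e^{t\Theta A^*-\hc}\Omega$. Since $\xi_0=\Omega$, we have $f(0)=0$. Because $\Theta$ is a function of $\cL_n$, the operator $\Theta A^*$ increases $\cN$ by exactly three, and so
\[
f'(t)=2\,\mathfrak{Re}\la \xi_t,[\cN,\Theta A^*]\xi_t\ra=6\,\mathfrak{Re}\la \xi_t,\Theta A^*\xi_t\ra .
\]
In position space this reads
\[
|f'(t)|\le \frac{6}{\sqrt N}\int dx\,dy\,dz\,|\check\eta(x-y)\check\sigma(x-z)|\,\norm{a_xa_ya_z\Theta\xi_t}\,\norm{\xi_t}.
\]
We cannot use $\norm{\xi_t}=1$ directly, because $\norm{\check\eta}_1\norm{\check\sigma}_1$ carries no smallness. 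Instead we pull one annihilator over to the left factor. We write
\[
\la \xi_t,\Theta a_x^*a_y^*a_z^*\xi_t\ra=\la a_x\Theta\xi_t,a_y^*a_z^*\xi_t\ra ,
\]
or symmetrically split as $\norm{a_xa_y\Theta\xi_t}\,\norm{a_z^*\xi_t}$, where $a_z^*$ is expressed through $a_z$ plus a commutator. The aim is to land on $\cN$ on both sides, with one extra local density factor that is bounded on the range of $\Theta$ by \eqref{eq:localNboundexample}.

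Concretely, as in the proof of Lemma~\ref{lemma:propertiesL}, we split the kernel $\frac{1}{\sqrt N}\check\eta(u)\check\sigma(v)=\check A_{\ell_B}+\check A_\infty$. For the local part, Cauchy--Schwarz in $(x,y,z)$ gives
\[
\Big(\int |\check A_{\ell_B}|^2\,\norm{a_x\xi_t}^2\Big)^{1/2}
\Big(\int \mathbbm 1(|x-y|,|x-z|\le\ell_B)\,\norm{a_xa_ya_z\Theta\xi_t}^2\Big)^{1/2}.
\]
The first factor is at most $\norm{\check A_{\ell_B}}_2\, f(t)^{1/2}$. For the second factor, integrating $y,z$ over $B_{\ell_B}(x)$ produces $\cN_{\ell_B}(x)^2$, and applying \eqref{eq:localNboundexample} twice bounds it by $CN^{2\delta}f(t)$. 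The tail $\check A_\infty$ is $O(N^{-m})$ by \eqref{eq:boundsigmaxm} and \eqref{eq:boundetam}; combined with \eqref{eq:easyboundNm} it contributes a negligible $N^{-m}$. This leads to
\[
|f'(t)|\le CN^{\delta}\norm{\check A_{\ell_B}}_2\, f(t)+CN^{-m}.
\]
Unfortunately this is homogeneous in $f$ and gives no source term, which is a sign that the bookkeeping above is wrong: we cannot annihilate on both sides, since the right factor is $\xi_t$ itself and the three creators must produce a vacuum-like contribution. The correct split is to keep $\norm{a_xa_ya_z\Theta\xi_t}$ on one side and $1$ on the other, while exploiting the decay of the local kernel. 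We then get
\[
|f'|\le C\Big(\int|\check A_{\ell_B}|^2\Big)^{1/2}\Big(\int_{\mathrm{loc}}\norm{a_xa_ya_z\Theta\xi_t}^2\Big)^{1/2}\le C\norm{\check A_{\ell_B}}_2N^{\delta}f(t)^{1/2}.
\]

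Using $\norm{\check A_{\ell_B}}_2^2\le N^{-1}\norm{\check\eta}_2^2\norm{\check\sigma}_2^2\le CN^{-1}N^{3\kappa-1+\epsilon}N^{3\kappa/2}$ from \eqref{eq:normseta} and Lemma~\ref{lemma:propertiesquadratickernels}, this gives $\frac{d}{dt}f^{1/2}\le CN^{9\kappa/4-1+\epsilon/2+\delta}$. Integrating over $[0,1]$ and squaring yields $f(t)\le CN^{9\kappa/2-2+\epsilon+2\delta}$, as claimed.

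The main obstacle is justifying the local Cauchy--Schwarz step. The factor $\norm{a_xa_ya_z\Theta\xi_t}^2$ integrated over the local region must be bounded by $N^{2\delta}\la\xi_t,\cN\xi_t\ra$, which requires \eqref{eq:localNboundexample} after commuting $\Theta$ past the annihilators. We would handle that commutation using the shifted operator $\cL_n(x,y,z)\le 4^n\cL_n$ together with the cutoff $\mathbbm 1_{[4^{-n},\infty)}$ and \eqref{eq:norm1-Theta}, exactly as in the proof of \eqref{eq:AThetaA}.
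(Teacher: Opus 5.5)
Your final argument is correct and essentially the paper's proof. You apply Cauchy--Schwarz with $\|\check A_{\ell_B}\|_2^2\le CN^{9\kappa/2-2+\epsilon}$ against $\int_{\mathrm{loc}}\|a_xa_ya_z\Theta\xi_t\|^2\le CN^{2\delta}\la\xi_t,\cN\xi_t\ra$, keep $\|\xi_t\|=1$ on the other side, and use rapid decay for the tail. The only difference is that you integrate $\frac{d}{dt}f^{1/2}$ directly (regularize as $(f+c)^{1/2}$ to absorb $f(0)=0$ and the $N^{-m}$ tail), where the paper uses AM--GM and Gr\"onwall; both give the same exponent.

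Your ``main obstacle'' is not actually one. In $a_xa_ya_z\Theta\xi_t$ the annihilators already act after $\Theta$, which is exactly the form of \eqref{eq:localNboundexample}, so no commutation of $\Theta$ is needed. Likewise, your opening claim that one cannot use $\|\xi_t\|=1$ is contradicted by your own final step, which does exactly that.
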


\begin{proof}
Denote $\xi_t = e^{t(\Theta A^* -A \Theta)} \Omega$ and observe that for any $m$ there is $C_m$ such that
\[ \begin{split} 
\big|& \la \xi_t, [ \cN , A ] \Theta \xi_t \ra \big| 
\\&\leq \frac{3}{\sqrt{N}} \int dx dy dz\, | \check{\eta} (x-y)| |\check{\sigma} (x-z)|  \| a_x a_y a_z \Theta \xi_t \| \| \xi_t \| \\ 
&\leq \frac{3}{\sqrt{N}} \norm{\eta}_2\norm{\sigma}_2
\left(\int dx \,   \| \cN_{\ell_B} (x)a_x \Theta \xi_t \|^2 \right)^{1/2} \norm{\xi_t}
+ C_m N^{-m} \big| \la \xi_t,  (\cN+1)^{3/2} \Theta \xi_t \ra \big|
\end{split} \]
where we split the integral according to the decomposition $\mathbbm{1}(|x-y|,|x-z|\leq \ell_B)+(1-\mathbbm{1}(|x-y|,|x-z|\leq \ell_B))$, and for the second term we used the rapid decay estimates in \eqref{eq:boundetam} and \eqref{eq:normseta}. Applying Lemma \ref{lemma:propertiesquadratickernels} together with the bounds \eqref{eq:localNbound}, \eqref{eq:easyboundNm}, and choosing $m$ large enough, we obtain 
\be 
\begin{split}
|\frac{d}{dt}\la \xi_t, \cN \xi_t\ra|\leq C \la \xi_t, \cN \xi_t\ra +CN^{9\kappa/2-2+\epsilon+2\delta}.
\end{split}
\ee
We conclude by Grönwall's Lemma.
\end{proof}

\begin{lemma}\label{lm:tildeCN}
Recall
\[ 
\tilde{\cC}_N =  N^{\kappa-1} N_0^{1/2} \sum_{p, r\in\Lambda^*} \hat{V}(r/N^{1-\kappa}) 
    ((\gamma_{p+r}\gamma_r-1)\sigma_p + \sigma_{p+r}\sigma_r\gamma_p)     a_{p+r} a_{-r} a_{-p}
    \]
    from \eqref{eq:lmHN1}.
For $\delta>0$ small enough, we have
    \begin{equation*}
      \big| \la e^{\Theta A^*-\hc}\Omega,  \tilde\cC_N e^{\Theta A^*-\hc}\Omega\ra \big| \leq CN^{5\kappa/2-3/4(2-3\kappa)+2\delta+\epsilon/2}.
    \end{equation*}
\end{lemma}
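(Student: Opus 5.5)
We bound the cubic expectation directly by Cauchy--Schwarz against the number operator, using Lemma \ref{lm:boundN} to control $\la\xi,\cN\xi\ra$; no Duhamel expansion is needed. Write $\xi=e^{\Theta A^*-\hc}\Omega$ and the coefficient
\[
K(p,r)=\hat V(r/N^{1-\kappa})\big((\gamma_{p+r}\gamma_r-1)\sigma_p+\sigma_{p+r}\sigma_r\gamma_p\big).
\]
The gain comes from the fact that every term of $K$ contains at least two factors among $\sigma$ and $\gamma-1$, whose $\ell^2$ norms are $O(N^{3\kappa/4})$ by Lemma \ref{lemma:propertiesquadratickernels}.

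The first step is to move to position space. There $\tilde\cC_N$ becomes an integral $\int dx\,dy\,dz\,\check K(x,y,z)\,a_xa_ya_z$, with a kernel built from $N^{\kappa}V(N^{1-\kappa}\cdot)$ convolved against $\check\sigma$ and $\widecheck{\gamma-1}$. Alternatively one can stay in momentum space and write
\[
\la\xi,\tilde\cC_N\xi\ra=N^{\kappa-1}N_0^{1/2}\sum_{p,r}K(p,r)\,\la a_{p+r}^*\xi,\,a_{-r}a_{-p}\xi\ra .
\]
Applying Cauchy--Schwarz in $(p,r)$ gives
\[
|\la\xi,\tilde\cC_N\xi\ra|\le N^{\kappa-1}N_0^{1/2}\Big(\sum_{p,r}|K(p,r)|^2\|a_{p+r}\xi\|^2\Big)^{1/2}\Big(\sum_{p,r}\|a_{-r}a_{-p}\xi\|^2\Big)^{1/2}.
\]
The second factor is $\la\xi,\cN^2\xi\ra^{1/2}$. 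The first is at most $\sup_q\big(\sum_r|K(q-r,r)|^2\big)^{1/2}\la\xi,\cN\xi\ra^{1/2}$.

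The second step is to bound these factors. Using $\norm{\sigma}_\infty,\norm{\gamma-1}_\infty\le CN^{\epsilon/2}$ and $\norm{\sigma}_2^2\le CN^{3\kappa/2}$, the supremum is at most $CN^{\epsilon/2}\norm{\sigma}_2$, up to harmless factors $\gamma\le CN^{\epsilon/2}$. For the number operator, Lemma \ref{lm:boundN} gives $\la\xi,\cN\xi\ra\le CN^{9\kappa/2-2+\epsilon+2\delta}$. This is small because $\kappa<2/3$. For $\la\xi,\cN^2\xi\ra$ one would like a comparable bound. The crude estimate \eqref{eq:easyboundNm} only gives $N^{3\kappa-12\epsilon}$, which is too weak.

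The main obstacle is therefore the factor $\cN^2$. I would avoid it by using the support properties of $\xi$. By \eqref{eq:norm1-Theta}, up to an error $N^{-1/(2\delta)}$ times polynomial factors (controlled by \eqref{eq:easyboundNm}), one may insert $\Theta$ so that local particle numbers are at most $N^\delta$ by \eqref{eq:localNbound}. The plan is then to rebalance the Cauchy--Schwarz in position space. Split $\check K$ into a short-range part, supported where $|x-y|,|x-z|\lesssim\ell_B$, and a tail that is negligible by the decay bounds \eqref{eq:boundsigmaxm}. In the short-range part, two of the annihilators $a_ya_z$ act near $x$. Moving one of them to the left produces $\cN_{2\ell_B}(x)^{1/2}$, which costs only $N^{\delta/2}$ by \eqref{eq:localNboundexample}. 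This yields
\[
|\la\xi,\tilde\cC_N\xi\ra|\lesssim N^{\kappa-1/2}N^{\delta}\,\|\check K_{\ell_B}\|\,\la\xi,\cN\xi\ra .
\]
Up to order one, the kernel norm is $\norm{\sigma}_2N^{\epsilon/2}\sim N^{3\kappa/4+\epsilon/2}$.

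Substituting the bound from Lemma \ref{lm:boundN} gives
\[
N^{\kappa-1/2+3\kappa/4+9\kappa/2-2+3\delta+\epsilon}=N^{5\kappa/2-\frac34(2-3\kappa)+\dots},
\]
since $\kappa-\tfrac12+\tfrac{3\kappa}{4}+\tfrac{9\kappa}{2}-2=\tfrac{25\kappa}{4}-\tfrac52=\tfrac{5\kappa}{2}-\tfrac34(2-3\kappa)\cdot\tfrac53$. I expect the precise exponent to be matched by finer bookkeeping, namely taking a square root of $\la\xi,\cN\xi\ra$ instead of the full power and paying with the local bound $N^{\delta}$. Concretely, the pairing I would use is
\[
|\la\xi,\tilde\cC_N\xi\ra|\lesssim N^{\kappa-1/2}\norm{\sigma}_2N^{\epsilon/2}\Big(\int\|\cN_{2\ell_B}(x)^{1/2}a_x\Theta\xi\|^2\Big)^{1/2}\la\xi,\cN\xi\ra^{1/2}\lesssim N^{\kappa-1/2+3\kappa/4+\epsilon/2+\delta/2}\la\xi,\cN\xi\ra .
\]
With $\la\xi,\cN\xi\ra\le N^{9\kappa/2-2+\epsilon+2\delta}$, the exponent is $\tfrac{25\kappa}{4}-\tfrac52+\dots$. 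Checking this against the claimed $\tfrac{5\kappa}{2}-\tfrac32+\tfrac{9\kappa}{4}=\tfrac{19\kappa}{4}-\tfrac32$, the difference is $\tfrac{3\kappa}{2}-1<0$. So the bound is in fact better than claimed, and the claimed estimate follows.
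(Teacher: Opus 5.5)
There is a genuine gap in the step that yields a bound linear in $\la\xi,\cN\xi\ra$. In $\la\xi,a_xa_ya_z\xi\ra$ all three annihilators act on the ket. An annihilator ``moved to the left'' becomes a creation operator on the bra, $\la a_x^*\xi,a_ya_z\xi\ra$, and $\norm{a^*(f)\xi}^2=\norm{a(f)\xi}^2+\norm{f}_2^2\norm{\xi}^2$. Since $\xi$ has a vacuum component of norm close to one, this factor is of size $\norm{f}_2$, not $\la\xi,\cN\xi\ra^{1/2}$.

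Your momentum-space display already contains the same slip. After writing $\la a^*_{p+r}\xi,a_{-r}a_{-p}\xi\ra$ you bound by $\norm{a_{p+r}\xi}^2$, whereas the correct quantity is $\norm{a^*_{p+r}\xi}^2=\norm{a_{p+r}\xi}^2+1$.

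In fact no estimate built only from Cauchy--Schwarz, the local particle bounds and $\la\xi,\cN\xi\ra$ can be linear in $\la\xi,\cN\xi\ra$. Take $\xi_\lambda=\sqrt{1-\lambda^2}\,\Omega+\lambda\phi_3$ with $\phi_3$ a normalized three-particle state; on such states $\Theta=1$ and all local particle numbers are at most $3$. Then $\la\xi_\lambda,\tilde\cC_N\xi_\lambda\ra=\lambda\sqrt{1-\lambda^2}\,\la\Omega,\tilde\cC_N\phi_3\ra$ is linear in $\lambda$, while $\la\xi_\lambda,\cN\xi_\lambda\ra=3\lambda^2$. The vacuum--three-particle contribution of the actual trial state is of exactly this type. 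So your finding that the bound beats the claim by $N^{3\kappa/2-1}$ was a symptom of the error.

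Separately, the kernel size $\norm{\sigma}_2N^{\epsilon/2}$ you insert is the momentum-space quantity $\sup_q(\sum_r|K(q-r,r)|^2)^{1/2}$, which belongs only to the unlocalized estimate. In a localized position-space Cauchy--Schwarz the kernel $K(x-y,x-z)$ is translation invariant. Hence $\int dx\,dz\,|K(x-y,x-z)|^2$ is the full squared $L^2$ norm for every $y$, and that norm is of order $\norm{\gamma-1}_2\norm{\sigma}_2\sim N^{3\kappa/2}$ before the prefactor $N^{\kappa-1}N_0^{1/2}$.

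The correct pairing, which is the paper's, keeps all three annihilators on the ket and only $\norm{\xi}=1$ on the bra. Write $\tilde\cC_N$ as a sum of four position-space operators $\int dx\,dy\,dz\,K^{(j)}(x-y,x-z)a_xa_ya_z$ with $\norm{K^{(j)}}_2\le CN^{(5\kappa-1)/2}$. After inserting $\Theta$ and discarding the tails, one has
\[
|\la\Theta\xi,\tilde\cC^{(j)}_N\Theta\xi\ra|\le C\Big(\int_{|x-y|,|x-z|\le\ell_B}dx\,dy\,dz\,\norm{a_xa_ya_z\Theta\xi}^2\Big)^{1/2}\norm{K^{(j)}}_2\norm{\xi},
\]
and Lemma \ref{lemma:propertiesL_pre} bounds the localized integral by $CN^{2\delta}\norm{\cN^{1/2}\xi}^2$. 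Lemma \ref{lm:boundN} then gives $N^{\delta}\cdot N^{(5\kappa-1)/2}\cdot N^{9\kappa/4-1+\epsilon/2+\delta}=N^{5\kappa/2-\frac34(2-3\kappa)+2\delta+\epsilon/2}$, exactly the stated exponent. By the example above, this $\la\xi,\cN\xi\ra^{1/2}$ scaling cannot be improved by such generic arguments.

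Note also the pieces whose kernel contains an extra convolution, e.g.\ $\int dw\,[V_N*\check\sigma](w)\check\sigma(w+y)(\widecheck{\gamma-1})(w+z)$ with $V_N=N^{2-2\kappa}V(N^{1-\kappa}\cdot)$. For these, the tail bound outside $|y|,|z|\le\ell_B$ does not follow from \eqref{eq:boundsigmaxm} alone. One has to split the $w$-integral at $|w|=|y|/2$, as the paper does.
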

\begin{proof}
We write $\tilde\cC_N = \sum_{j=1}^4 \tilde\cC^{(j)}_N$, with, having switched to position space, 
\[ \tilde\cC^{(j)}_N =  \int dx dy dz \, K^{(j)} (x-y , x-z) a_x a_y a_z \]
for 
\[ \begin{split} K^{(1)} (y,z) &= N_0^{1/2} [V_N* (\widecheck{\gamma-1})](y) \check\sigma(z) \\
K^{(2)} (y,z) &= N_0^{1/2} \int dw \, [V_N * \check\gamma](w) (\widecheck{\gamma-1})(w+y)\check\sigma(w+z)  \\ 
K^{(3)} (y,z) &= N_0^{1/2}  [V_N * \check\sigma](y) \check\sigma (z)\\ 
K^{(4)} (y,z) &= N_0^{1/2} \int dw \, [V_N*  \check\sigma](w) \check\sigma (w+y)(\widecheck{\gamma-1})(w+z) \end{split} \]
where $V_N (x)$ denotes $N^{2-2\kappa} V (N^{1-\kappa}x)$.
The bounds from Lemma \ref{lemma:propertiesquadratickernels} imply $\| K^{(j)} \|_2 \leq C N^{(5\kappa -1)/2}$, for all $j=1, \dots ,4$. Moreover, for $|x|\geq \ell_\sigma$ and for any $m\geq 1$, 
\be
\begin{split}
|V_N\ast \check{\sigma}(x)|&\leq\int_{\substack{|y|\leq cN^{\kappa-1}}} |V_N(y)| |\check{\sigma}(x-y)|  dy \leq C_mN^{\kappa}  
\left(\frac{\ell_\sigma}{\abs{x}}\right)^m.
\end{split}
\ee
Here $c>0$ is a constant chosen so that supp$(V)\subset B_{c}(0)$, and we used \eqref{eq:boundsigmaxm} and $|x-y|\geq |x|/2$. The same estimate also holds if $\check{\sigma}$ is replaced by $\check{\gamma}$ or $(\widecheck{\gamma-1})$.  
It follows that, for $|x|\geq \ell_B$, and for any $m\geq 1$,
\be
\begin{split}
|K^{(4)}(x,y)|&\leq N^{1/2} |\int_{|w|\geq |x|/2} dw \, [V_N*  \check\sigma](w) \check\sigma (w+x)(\widecheck{\gamma-1})(w+y)|\\&+ N^{1/2} |\int_{|w|\leq |x|/2} dw \, [V_N*  \check\sigma](w) \check\sigma (w+x)(\widecheck{\gamma-1})(w+y)|\\&\leq  C_mN^{3/2}  
\left(\frac{\ell_\sigma}{\abs{x}}\right)^m \int dw \Big(N^{\kappa-1}|\check\sigma (w+x)|+|[V_N*  \check\sigma]|(w)\Big)|(\widecheck{\gamma-1})(w+y)|
\end{split}
\ee
A similar estimate holds when $|y|\geq \ell_B$ and for $|K^{(2)}(x,y)|$.
Hence, for $j=1,..,4$ and any $m\geq 1$  
\begin{equation}
\label{eq_bound_K_j}
\int_{\{|x|\geq \ell_B\}\cup \{|y|\geq \ell_B\}} dx dy|K^{(j)}(x,y)|^2 \leq C_mN^{-m}.
\end{equation}
To estimate the expectation of $\tilde\cC_N$, we first write
\begin{equation} 
\label{eq:dec_tilde_C_N}
\begin{split}
\langle \xi, \tilde\cC^{(j)}_N \xi \rangle=\langle \Theta \xi , \tilde\cC^{(j)}_N \Theta \xi \rangle+ \langle \xi, \tilde\cC^{(j)}_N (1-\Theta) \xi \rangle+ \langle (1-\Theta) \xi , \tilde\cC^{(j)}_N \Theta \xi \rangle.
\end{split}
\end{equation}
The two last terms can be controlled using Lemma \ref{lemma:propertiesL}, in particular \eqref{eq:norm1-Theta} and \eqref{eq:easyboundNm}, by 
\be
\begin{split}
|\langle (1-\Theta) \xi , \tilde\cC^{(j)}_N \Theta \xi \rangle|\leq \|(1-\Theta) \xi\| \|K^{(j)}\|_2\|(\cN+1)^{3/2}\xi\|\leq CN^{-1/(4\delta)}
\end{split}
\ee
if $\delta>0$ is chosen small enough.
For the main term on the r.h.s of \eqref{eq:dec_tilde_C_N}, we bound, for any $m\geq 1$, 
\be
\begin{split}
&|\langle \Theta \xi, \tilde\cC^{(j)}_N \Theta \xi \rangle | \\& \leq C \Big( \int_{\substack{|x-y|,\\|x-z|\leq \ell_B}} dx dy dz\|a_xa_ya_z\Theta \xi\|^2\Big)^{1/2}\|K^{(j)}\|_2\|\xi\|+C_mN^{-m}\langle \xi, (\cN+1)^{3/2}\xi \rangle,
\end{split}
\ee 
having
in particular used \eqref{eq_bound_K_j} to bound the contributions associated to $|x-y|\geq \ell_B$ and $|x-z|\geq \ell_B$.  Using the bound \eqref{eq:localNboundexample} and Lemma \ref{lm:boundN} for the first term, and the bound \eqref{eq:easyboundNm} for the second, we finally obtain, choosing $m$ sufficiently large, 
\be 
|\langle \Theta \xi, \tilde\cC^{(j)}_N \Theta \xi \rangle |\leq CN^{\delta} \|K^{(j)}\|_2\|\cN^{1/2}\xi\| \|\xi\|\leq  CN^{5\kappa/2-3/4(2-3\kappa)+2\delta+\epsilon/2}.
\ee
\end{proof}

Finally, we estimate the expectation $\la \xi , \tilde{\cV}_N \xi \ra$, with $\tilde{\cV}_N$ as defined in (\ref{eq:CtCtV}). 
\begin{lemma} \label{lm:tildeVN}
Let $\epsilon,\delta>0$ be small enough such that $\mu:=  3/4(2-3\kappa)-7\epsilon/2-3\delta/2>0$. Then,
\be
\begin{split}
\la e^{ \Theta A^*  - \hc} \Omega, \tilde{\cV}_N e^{  \Theta A^*  - \hc} \Omega \ra &\leq CN^{-\mu}\big(\la e^{ \Theta A^*  - \hc} \Omega, \cV_N e^{  \Theta A^*  - \hc} \Omega \ra +N^{5\kappa/2}\big).
\end{split}
\ee
\end{lemma}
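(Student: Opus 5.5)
We want to bound $\la \xi,\tilde\cV_N\xi\ra$ with $\xi=e^{\Theta A^*-\hc}\Omega$. We split $\tilde\cV_N$ into its quartic pieces according to the coefficient structure in \eqref{eq:CtCtV}. Each coefficient
\[
\gamma_{p+r}\gamma_{q-r}\gamma_q\gamma_p-1,\qquad \sigma_{p+r}\sigma_{q-r}\sigma_q\sigma_p,\qquad 2\gamma_{p+r}\sigma_{q-r}\sigma_q\gamma_p,\qquad \gamma_{p+r}\sigma_{q-r}\gamma_q\sigma_p
\]
is expanded by writing $\gamma=1+(\gamma-1)$. The result is a finite sum of terms in which at least one factor $(\gamma-1)$ or $\sigma$ appears. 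In position space every such term reads
\[
\int dx_1\cdots dx_4\, V_N(\cdot)\,\prod \check k_i(\cdot)\; a^*a^*aa ,
\]
with $\check k_i\in\{\delta,\check\sigma,\widecheck{\gamma-1}\}$ and at least one $\check k_i\neq\delta$. Schematically, a non-trivial kernel smears one of the four operators $a_{x_i}$ over a region of size $\ell_\sigma$ around the interaction point.

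For a typical term, say $\int dx\,dy\,du\, V_N(x-y)\,\widecheck{\gamma-1}(x-u)\, a_u^* a_y^* a_y a_x$, we apply Cauchy--Schwarz with weight $V_N(x-y)$:
\[
|\la\xi,\cdot\,\xi\ra|\le \Big(\int V_N(x-y)\|a_xa_y\xi\|^2\Big)^{1/2}\Big(\int V_N(x-y)\,\Big\|\int du\,\widecheck{\gamma-1}(x-u)\,a_ya_u\xi\Big\|^2\Big)^{1/2}.
\]
The first factor is $\la\xi,\cV_N\xi\ra^{1/2}$ up to the constant $N^{\kappa-1}$ normalization. In the second factor we first insert $\Theta$ via \eqref{eq:norm1-Theta} and \eqref{eq:easyboundNm}, which costs $O(N^{-\nu})$. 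Next we split $\widecheck{\gamma-1}$ into the parts $|x-u|\le\ell_B$ and $|x-u|>\ell_B$. The far part is $O(N^{-m})$ by \eqref{eq:boundsigmaxm}, combined with \eqref{eq:easyboundNm}.

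For the near part we use Cauchy--Schwarz in $u$ and obtain
\[
\|\widecheck{\gamma-1}\|_2^2\int_{|x-u|\le\ell_B}\|a_ya_u\Theta\xi\|^2\le \|\gamma-1\|_2^2\,\|\cN_{\ell_B}(x)^{1/2}a_y\Theta\xi\|^2\le CN^{3\kappa/2+\delta}\|a_y\Theta\xi\|^2 ,
\]
using \eqref{eq:localNboundexample}. Integrating against $V_N$, with $\|V_N\|_1\sim N^{\kappa-1}\cdot N^{1-\kappa}\cdot$ scaling, leaves $N^{\kappa}\la\xi,\cN\xi\ra$, times the prefactor $N^{\kappa-1}$. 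Lemma \ref{lm:boundN} then gives $\la\xi,\cN\xi\ra\le CN^{9\kappa/2-2+\epsilon+2\delta}$. Collecting exponents, the product takes the form
\[
\la\xi,\cV_N\xi\ra^{1/2}\,\big(N^{5\kappa/2}\big)^{1/2}\,N^{-\mu},
\]
and the elementary inequality $ab\le a^2+b^2$ gives the claim. Here the gain $N^{-3(2-3\kappa)/4}$ comes from the smallness of $\la\xi,\cN\xi\ra$ relative to $N^{3\kappa/2}$, while the losses $\epsilon,\delta$ come from the $\ell_B$ volume and from $N^\delta$.

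Terms with two or more nontrivial kernels are handled in the same way. The extra kernels are bounded either by $\|\check\sigma\|_1,\|\widecheck{\gamma-1}\|_1\le CN^{2\epsilon}$ from \eqref{eq:boundsigmax1} (a Young-type estimate, at a loss of $N^{2\epsilon}$ each), or again through the local number bound. The term $\gamma\sigma\gamma\sigma\, a^*a^*aa$ with crossed momenta is treated identically after rewriting it in position space. When the nontrivial kernel sits on a creation operator, we move it to the other side of the Cauchy--Schwarz split, so that one side always carries a bare $V_N(x-y)a_xa_y$ and reproduces $\cV_N$.

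The main obstacle is the fully smeared term $\sigma\sigma\sigma\sigma$ (and the $\gamma\gamma\gamma\gamma-1$ terms containing only smeared operators). Here no factor $a_xa_y$ with $x,y$ at the interaction distance is available, so $\cV_N$ cannot be extracted. For these I would bound the expectation directly: using \eqref{eq:localNboundexample} twice together with the near/far splitting, one gets
\[
N^{\kappa-1}\|V_N\|_1\,\|\sigma\|_2^4\,N^{2\delta}\la\xi,\cN\xi\ra ,
\]
and one must check that this is $\le N^{5\kappa/2-\mu}$. Tracking the $\ell_B$ and $\epsilon$ losses precisely so that they fit into $\mu=3/4(2-3\kappa)-7\epsilon/2-3\delta/2$ is the delicate bookkeeping step.
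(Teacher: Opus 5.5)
Your architecture is the paper's: expand $\gamma=1+\theta$, pass to position space, insert $\Theta$, split kernels at $\ell_B$, use \eqref{eq:localNboundexample} and Lemma~\ref{lm:boundN}, and extract $\la\xi,\cV_N\xi\ra^{1/2}$ by Cauchy--Schwarz whenever a bare pair $a_xa_y$ is available. Your single-smearing estimate does close. With correct exponents its second factor is $N^{\delta}\|\theta\|_2^2\|V_N\|_1\la\xi,\cN\xi\ra\lesssim N^{7\kappa-3+\epsilon+3\delta}\le N^{5\kappa/2-2\mu}$. (Since $\|V_N\|_1\le CN^{\kappa-1}$ already is the full coupling, the extra $N^{\kappa-1}$ you carry is a double count.)

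The gap is exactly the step you defer as bookkeeping. Paying both smeared operators in $\int dx\,dy\,V_N(x-y)\|a(\check\theta_x)a(\check\theta_y)\Theta\xi\|^2$ with the local number bound cannot work. Done literally it leaves no $\cN$ and gives $N^{2\delta}\|V_N\|_1\|\theta\|_2^4\sim N^{4\kappa-1+2\delta}$. Even your stated bound $\|V_N\|_1\|\sigma\|_2^4N^{2\delta}\la\xi,\cN\xi\ra\lesssim N^{17\kappa/2-3+\epsilon+4\delta}$ exceeds the target $N^{5\kappa/2-\mu}=N^{19\kappa/4-3/2+7\epsilon/2+3\delta/2}$ for every $\kappa>2/5$. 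That includes $\kappa$ near $2/3$, which Theorem~\ref{theorem:main} needs.

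The same quantity is also the second Cauchy--Schwarz factor for $a_x^*a_y^*a(\check\theta_x)a(\check\theta_y)$, where you do extract $\cV_N$. There it must be $\lesssim N^{5\kappa/2-2\mu}=N^{7\kappa-3+7\epsilon+3\delta}$, and this route misses that for every fixed $\kappa>0$ once $\epsilon,\delta$ are small. Your claim that one side can always be made a bare $a_xa_y$ is also false for mixed terms such as $a_x^*a^*(\check\theta_y)a(\check\theta_x)a_y$. Those are harmless via two single-smearing bounds, but $a_x^*a^*(\check\theta_y)a(\check\theta_x)a(\check\theta_y)$ again needs the fully smeared estimate.

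The missing ingredient is the volume factor $\ell_B^3$. After restricting to $|x-s|,|y-t|\le\ell_B$, use \eqref{eq:localNboundexample} only for the $s$-integral and keep $a_t$. Integrating against $V_N$ then produces
\[
\int dt\,\|a_t\Theta\xi\|^2\int dy\,\mathbbm{1}(|y-t|\le\ell_B)\int dx\,V_N(x-y)\le C\|V_N\|_1\ell_B^3\la\xi,\cN\xi\ra .
\]
So the fully smeared quantity is $\lesssim N^{\delta}\|\theta\|_2^4\|V_N\|_1\ell_B^3\la\xi,\cN\xi\ra\lesssim N^{5\kappa/2-2\mu}$. This is the paper's bound, and it suffices for all the terms above.

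What is being used is that $\ell_B^3\la\xi,\cN\xi\ra\lesssim N^{-(2-3\kappa)+7\epsilon+2\delta}$ is small: there are few cubic excitations per $\ell_B$-ball, which is what $\epsilon<(2-3\kappa)/7$ guarantees. Your second factor $N^\delta$ throws this smallness away.

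Equivalently, you can pay the second smeared operator globally. One way is $\int dy\,a^*(\check\theta_y)a(\check\theta_y)=\sum_p\theta_p^2a_p^*a_p\le\|\theta\|_\infty^2\cN\le CN^{\epsilon}\cN$. Another is the Young-type bound with $\|\widecheck{\gamma-1}\|_1^2\le CN^{4\epsilon}$ that you mention elsewhere. Either replaces one factor $\|\theta\|_2^2N^\delta$ in your bound by at most $N^{4\epsilon}$ and closes the estimate.
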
 

\begin{proof}
We will bound the expectation of 
\[ \tilde{\cV}_N^{(0)} = \frac{N^\kappa}{2N} \sum_{p,q,r\in\Lambda^*} \hat{V}(r/N^{1-\kappa}) 
        (\gamma_{p+r} \gamma_{q-r}\gamma_{q}\gamma_{p}-1) a_{p+r}^* a_{q-r}^* a_q a_p . \]
The rest of the contributions to $\tilde\cV_N$ can be bounded in a similar way, by changing factors of $\gamma-1$ for factors of $\sigma$. In  order to evaluate $\la \xi, \tilde{\cV}_N^{(0)} \xi \rangle$, we write $\theta_p = \gamma_p -1$ and $V_N (x) = N^{2-2\kappa} V (N^{1-\kappa}x)$, represent $\tilde{\cV}_N^{(0)}$ in position space to obtain
 \begin{equation}
\label{tildeV_N}
\begin{split}
\tilde\cV^{(0)}_N= \; &\frac{1}{2}\int dxdy \, V_N(x-y) \, a^* (\check{\theta}_x) a^* (\check{\theta}_y) a (\check{\theta}_x) a (\check{\theta}_y) \\&+ \int dxdy \, V_N(x-y) \, a^*_{x} a^* (\check{\theta}_y) a (\check{\theta}_x) a (\check{\theta}_y) + \text{h.c.}\\&+ \int dxdy \, V_N(x-y) \, a^*_{x}a^* (\check{\theta}_y) a_{x} a (\check{\theta}_y) \\&+ \int dxdy \, V_N(x-y) \, a^*_{x}a^* (\check{\theta}_y) a (\check{\theta}_x) a_{y} \\&+ \frac{1}{2}\int dxdy \, V_N(x-y) a^*_{x}a^*_{y}a (\check{\theta}_x) a (\check{\theta}_y) + \text{h.c.}\\&+\int dxdy \, V_N(x-y) \, a^*_{x}a^*_{y}a_{x}a (\check{\theta}_y) + \text{h.c.}
\end{split}
\end{equation}
where $\check{\theta}_x (s)$ denotes $ \check{\theta} (x-s)$, for any $x \in \Lambda $. To estimate the expectation of these terms, we insert the cutoff $\Theta$ on both sides of the inner product. For the first four terms on the r.h.s of \eqref{tildeV_N}, this is achieved in the same way as in the proof of Lemma \ref{lm:tildeCN}. For the last two terms, this is done differently and, as an example, we show it for the last one. We write
\begin{equation}\label{eq:insTheta} \begin{split} \int &dx dy \, V_N (x-y) \la \xi , a_x^* a_y^* a_x a (\check{\theta}_y) \xi \ra \\ =\; & 
 \int dx dy \, V_N (x-y) \la \Theta \xi , a_x^* a_y^* a_x a (\check{\theta}_y) \Theta \xi \ra +
  \int dx dy \, V_N (x-y) \la \xi , a_x^* a_y^* a_x a (\check{\theta}_y) (1-\Theta) \xi \ra \\ &+
   \int dx dy \, V_N (x-y) \la (1-\Theta) \xi , a_x^* a_y^* a_x a (\check{\theta}_y) \Theta \xi \ra. \end{split} \end{equation} 
The second (and similarly the third) term on the r.h.s. can be bounded by 
\[ \begin{split} \Big|   \int &dx dy \, V_N (x-y) \la \xi , a_x^* a_y^* a_x a (\check{\theta}_y) (1-\Theta) \xi \ra \Big| \\ \leq\; & \Big( \int dx dy \| a_x a_y (\cN+1) \xi \|^2 \Big)^{1/2} \Big( \int dx dy \, V_N^2 (x-y) \| a_x a (\check{\theta}_y) (\cN+1)^{-1} (1-\Theta) \xi \|^2 \Big)^{1/2} \\ \leq \; &\sup_y \| \check{\theta}_y \|_2 \| V_N \|_2 \| (\cN+1)^2 \xi \|  \| (1-\Theta) \xi \|. \end{split} \]
With Lemma \ref{lemma:propertiesL}, we obtain, for $\delta$ small enough,  
\[ \Big|   \int dx dy \, V_N (x-y) \la \xi , a_x^* a_y^* a_x a (\check{\theta}_y) (1-\Theta) \xi \ra \Big| \leq C N^{-1/(4\delta)} \]
Therefore, all that remains is to evaluate the first term on the r.h.s. of (\ref{eq:insTheta}).
It can be bounded, for any $m$, with
\be 
\begin{split}
&\Big|   \int dx dy \, V_N (x-y) \la \Theta \xi , a(\check{\theta}_x)^* a(\check{\theta}_y)^* a(\check{\theta}_x) a (\check{\theta}_y) \Theta \xi \ra \Big| \\&\leq \int _{|s|,|s'|,|t|,|t'|\leq \ell_B}ds dtds'dt' dx dy|V_N(x-y)||\check{\theta}(s)|^2 |\check{\theta}(t)|^2 \|a_{x-s'}a_{y-t'}\Theta\xi\|^2\\&\hspace{1cm}+CN^{-m}|\langle \xi, (\cN+1)^2\xi \rangle|\\&\leq CN^{\delta}\|\cN^{1/2}\Theta\xi\|^2\|\theta\|_2^4\|V_N\|_1\ell_B^3+CN^{-m+3\kappa-12\epsilon},
\end{split}
\ee
 having used \eqref{eq:localNbound}  for the contribution associated to $\mathbbm{1}(|s|,|t|,|s'|,|t'|\leq \ell_B)$, and \eqref{eq:boundsigmaxm} together with \eqref{eq:easyboundNm} to bound the remaining one. By Lemma \ref{lemma:propertiesquadratickernels}, Lemma \ref{lm:boundN} and $\|V_N\|_1\leq CN^{-1+\kappa}$, and choosing $m$ large enough we conclude that
\be 
\Big|   \int dx dy \, V_N (x-y) \la \Theta \xi , a(\check{\theta}_x)^* a(\check{\theta}_y)^* a(\check{\theta}_x) a (\check{\theta}_y) \Theta \xi \ra \Big| \leq CN^{5\kappa/2-3/2(2-3\kappa)+7\epsilon+3\delta}.
\ee
The second, third and fourth terms of \eqref{tildeV_N} can be estimated in a similar way. For the fifth one, we estimate, choosing $m$ large enough,
\be 
\begin{split}
&\Big|   \int dx dy \, V_N (x-y) \la \Theta \xi , a_x^* a_y^* a(\check{\theta}_x) a (\check{\theta}_y) \Theta \xi \ra \Big|\\& \leq \Big(\int _{|s|,|t| \leq \ell_B}ds dt dx dy|V_N(x-y)||\check{\theta}(s)|^2 |\check{\theta}(t)|^2 \|a_{x}a_{y}\Theta\xi\|^2\Big)^{1/2} \\& \hspace{2cm}\times \Big(\int _{|s|,|t|\leq \ell_B}ds dt dx dy|V_N(x-y)| \|a_{x-s}a_{y-t}\Theta\xi\|^2 \Big)^{1/2}\\&\hspace{1cm}+ C_mN^{-m}\langle \xi,(\cN+1)^{2}\xi \rangle   \\&\leq CN^{\delta/2} \|\cV_N^{1/2}\xi\|\|\cN^{1/2}\xi\|\|\theta\|_2^2\|V_N\|^{1/2}_1\ell_B^{3/2}\\&\leq CN^{-3/4(2-3\kappa)+7\epsilon/2+3\delta/2}(\langle \xi,\cV_N \xi \rangle +N^{5\kappa/2}).
\end{split}
\ee 
A similar bound holds for the last term of \eqref{tildeV_N}.

\end{proof}

\subsection{Bounds for relevant cubic expectations} 

In this subsection, we prove bounds on the expectations $\la \xi, (\cC_N^* + \cC_N) \xi \ra$, $\la \xi, \cV_N \xi \ra$ and $\la \xi, \cK \xi \ra$, for $\xi  = e^{\Theta A^*  -\hc} \Omega$. These conclude the energy estimate of the trial state $\Psi_N = W_{N_0} e^{B^* - B} \xi$ by inserting them in \eqref{eq:lmHN1}. We find it convenient to introduce the symmetrized version $\check{A}^{\mathrm{sym}}$ of the kernel $\check{A}(u,v):=N^{-\frac{1}{2}}\check{\eta}(u)\check{\sigma}(v)$ as
\begin{align*}
    \check{A}^{\mathrm{sym}}(u,v): = \! \check{A}(u,v) \! + \! \check{A}(v,u) \! + \! \check{A}(-  u,v \! - \! u) \! + \! \check{A}(v \! - \! u,- u)+\check{A}(-  v,u \! - \! v) \! + \! \check{A}(u \! - \! v,- v)  ,
\end{align*}
which is chosen exactly such that for all $(x_1,x_2,x_3)\in \Lambda^3$
\begin{align*}
   \check{A}^{\mathrm{sym}}(x_1-x_2,x_1-x_3)=\sum_{\sigma\in S_3} \check{A}(x_{\sigma(1)}-x_{\sigma(2)},x_{\sigma(1)}-x_{\sigma(3)}).
\end{align*}
In the following we denote $\check{A}^{\text{sym}}_{\ell_B}(u,v):= \mathbbm{1}(|u|,|v|\leq \ell_B)\check{A}^{\text{sym}}(u,v)$ and $\check{A}^{\text{sym}}_{\infty}(u,v):=\check{A}^{\text{sym}}(u,v)-\check{A}^{\text{sym}}_{\ell_B}(u,v)$. By Lemma \ref{lemma:propertiesquadratickernels}, for any $m\geq1$, we can find a constant $C_m>0$ such that $\|\check{A}_{\infty}\|_2 \leq C_mN^{-m}$.

\begin{lemma}
\label{lemma:C_N}
Recall the definition of the cubic operator from \eqref{eq:lmHN1}  
\begin{equation*}\label{eq:CN-pos} 
\cC_N=
         N_0^{1/2}\int dxdydz N^{2-2\kappa}V(N^{1-\kappa}(x-y))\check{\sigma}(x-z)a^*_xa^*_ya^*_z . 
       \end{equation*}
For $\delta>0$ small enough, we have
\begin{equation*}
\label{eq_Cn_f}
\begin{split} 
 \la e^{t\Theta A^*-\hc}\Omega, &(\cC_N+\cC_N^*) e^{t\Theta A^*-\hc}\Omega\ra \\ &= 2tN_0^{1/2}\int dxdy N^{2-2\kappa}V(N^{1-\kappa}x)\check{\sigma} (y) \check{A}^{\mathrm{sym}}(x,y) + \cE_{\cC}
 \end{split} 
 \end{equation*}
 where $\cE_{\cC}$ verifies
 \be 
 \pm \cE_{\cC} \leq CN^{-(2-3\kappa)+15\epsilon/2+2\delta}\Big(\int_0^t ds\; \la e^{ s\Theta A^*  - \hc} \Omega, \cV_N e^{  s\Theta A^*  - \hc} \Omega \ra+N^{5\kappa/2}\Big)
 \ee
 for all $t\in [0;1]$.
\end{lemma}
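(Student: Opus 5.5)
The plan is a first-order Duhamel expansion in $t$. Write $\xi_s = e^{s\Theta A^* - \hc}\Omega$, so the quantity of interest vanishes at $s=0$ and
\[
\frac{d}{ds}\la \xi_s, (\cC_N+\cC_N^*)\xi_s\ra = \la \xi_s, [\cC_N+\cC_N^*, \Theta A^* - A\Theta]\xi_s\ra .
\]
Integrating over $s\in[0,t]$ reduces the lemma to two claims about this commutator: it equals the constant $2N_0^{1/2}\int dxdy\, N^{2-2\kappa}V(N^{1-\kappa}x)\check\sigma(y)\check A^{\mathrm{sym}}(x,y)$, up to an error controlled by $\la\xi_s,\cV_N\xi_s\ra+N^{5\kappa/2}$ with the stated prefactor.

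First I remove the cutoff $\Theta$ from the commutator. I write $[\cC_N,\Theta A^*] = \Theta[\cC_N,A^*] + [\cC_N,\Theta]A^*$, and similarly for the other pieces. The terms involving $1-\Theta$, or commutators with $\Theta$, are supported where $\cL_n\geq N^{1/\delta}$ on at least one side, because $\cC_N$ and $A^*$ shift $\cL_n$ by at most a factor $4^n$ (same argument as for \eqref{eq:AThetaA}). By Cauchy--Schwarz, \eqref{eq:norm1-Theta}, \eqref{eq:easyboundNm} and \eqref{eq:AThetaA}, they are $O(N^{-\nu})$ for any $\nu$, and so can be absorbed in the error. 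The commutators $[\cC_N,A]$ and $[\cC_N^*,A^*]$ (creation with creation, annihilation with annihilation) vanish. It remains to compute $[\cC_N,A^*]$, which is a commutator of three annihilators with three creators. By Wick ordering it splits into a fully contracted constant, a normal-ordered quadratic term $a^*a$ with one line uncontracted, and a quartic term $a^*a^*aa$ with two lines uncontracted.

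The fully contracted constant comes from summing over the $3!$ pairings. This yields exactly the expression with $\check A^{\mathrm{sym}}$, since $\check A^{\mathrm{sym}}$ was defined to symmetrize over $S_3$. The factor $2$ arises from adding the hermitian conjugate. Multiplied by $\la\xi_s,\xi_s\ra=1$ and integrated over $s$, this gives the main term $2t\cdots$.

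The remainder $\cE_\cC$ consists of the quadratic and quartic terms. I treat them as in Lemmas \ref{lm:tildeCN} and \ref{lm:tildeVN}: reinsert $\Theta$ on both sides at cost $O(N^{-\nu})$, and split the kernels into $\ell_B$-local and far parts using \eqref{eq:boundsigmaxm} and \eqref{eq:boundetam}. The far parts are $O(N^{-m})$ by \eqref{eq:easyboundNm}. On the local parts, the quadratic terms have kernels like $N_0^{1/2}N^{-1/2}(V_N*\check\eta)\ast\check\sigma\,\check\sigma$, and I bound them by $\cN$. Using Lemma \ref{lm:boundN}, this gives something like $N^{\kappa+\epsilon}N^{9\kappa/2-2+\epsilon+2\delta}$, which is within $N^{-(2-3\kappa)}N^{5\kappa/2}$ up to $\epsilon,\delta$ losses. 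For the quartic terms, I keep $V_N(x-y)$ attached to $a_xa_y$ on one side and apply Cauchy--Schwarz as in the fifth term of Lemma \ref{lm:tildeVN}. This produces $\|\cV_N^{1/2}\xi_s\|$ on one side. On the other side I get $\|a_{z}a_{w}\Theta\xi_s\|$ smeared over $\ell_B$-balls, which I control by \eqref{eq:localNboundexample} (a factor $N^\delta$ per local number operator) together with $\|\check\eta\|_2,\|\check\sigma\|_2$ from Lemma \ref{lemma:propertiesquadratickernels}.

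The main obstacle is the quartic term, in which the remaining singular factor $V_N$ sits next to the short-range kernel $\check\eta$ at scale $\ell_\eta$. A naive bound via $\cN^2$ loses too many powers of $N$. Using AM--GM as $\|\cV_N^{1/2}\xi\|\,X\leq N^{-\mu}(\la\xi,\cV_N\xi\ra+N^{\mu'}X^2)$ with the local particle bounds should give exactly the prefactor $N^{-(2-3\kappa)+15\epsilon/2+2\delta}$. I expect this exponent bookkeeping, together with the fact that the result must hold uniformly in $s$ before integrating, to be the delicate part.
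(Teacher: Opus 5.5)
Your proposal is correct and follows essentially the same route as the paper: a first-order Duhamel expansion, disposal of all $(1-\Theta)$ and $[\,\cdot\,,\Theta]$ contributions via Lemma \ref{lemma:propertiesL}, and the splitting $[\cC_N,A^*]=T_0+T_2+T_4$. Here $T_0$ is the fully contracted constant, $T_2$ is a quadratic term controlled by $\cN$ and Lemma \ref{lm:boundN}, and $T_4$ is a quartic term handled by Cauchy--Schwarz with $\cV_N$ on one side and \eqref{eq:localNboundexample} on the other. Two small points to tidy: for two of the three single-contraction quartic types, $V_N$ carries a contracted variable, so there you use $\|V_N\|_1$ and $\la\xi_s,\cN\xi_s\ra$ instead of $\cV_N$ (this gives the smaller $N^{5\kappa/2-\frac74(2-3\kappa)}$ contribution); and in the quartic term you also need Lemma \ref{lm:boundN} for $\|\cN^{1/2}\xi_s\|$, not just \eqref{eq:easyboundNm}, to reach the exponent $-(2-3\kappa)$.
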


\begin{proof} 
Writing $\xi_t=e^{t\Theta A^*-\hc}\Omega$, we expand with Duhamel
\begin{equation*}
\label{eq:expansion_Duhamel}
\begin{split}
\la & \xi_t, (\cC_N+\cC_N^*) \xi_t\ra \\&= \int_0^t ds \; \la  \Theta \xi_s,[\cC_N,A^*]\Theta\xi_s \ra+ \la  \Theta \xi_s,[\cC_N,A^*](1-\Theta)\xi_s \ra\\&\hspace{1cm}+\langle \xi_s, \big( \cC_N (\Theta-1)A^*-(\Theta-1)\cC_N A^* -A\cC_N (\Theta-1)+ A(\Theta-1) \cC_N  \big) \xi_s \rangle + \hc
\end{split}
\end{equation*}
The terms containing a $(\Theta-1)$ factor can be bounded by $CN^{-1/(4\delta)}$, if $\delta$ is chosen small enough, by using Lemma \ref{lemma:propertiesL}, in particular \eqref{eq:norm1-Theta}, \eqref{eq:easyboundNm}, and \eqref{eq:AThetaA}, as well as the bounds $\|A^*(\cN+1)^{-3/2}\|\leq C\|A\|_2$ and $\|\cC_N(\cN+1)^{-3/2}\|\leq CN^{1/2}\|V_N\|_2\|\sigma\|_2$. Using again the notation $V_N(\cdot)=N^{2-2\kappa}V(N^{1-\kappa} \cdot)$, we observe that 
\be 
[\cC_N,A^*]= T_0+T_2+T_4
\ee
for
\be 
\begin{split}
T_0&=2N_0^{1/2}\int dxdy V_N(x)\check{\sigma} (y) \check{A}^{\mathrm{sym}}(x,y)\\ T_2&=N_0^{1/2}\int dx dy dz dz' \check{A}^{\mathrm{sym}}(x-y,x-z') \times  \\&\hspace{1cm}\Big[V_N(x-y)\check{\sigma}(x-z)+V_N(x-z)\big(\check{\sigma}(x-y)+ \check{\sigma}(z-y)\big)\Big]a^*_{z'}a_{z}\\
T_4&=\frac{N_0^{1/2}}{2}\int dx dy dz dx'dy'  \check{A}^{\mathrm{sym}}(x'-y',x'-z)\times  \\&\hspace{1cm}\Big[V_N(x-y)\check{\sigma}(x-z)+V_N(x-z)\big(\check{\sigma}(x-y)+ \check{\sigma}(z-y)\big)\Big]a^*_{x'}a^*_{y'}a_{x}a_{y}.
\end{split}
\ee 
We find, 
\be
\begin{split}
&|\la  \Theta \xi_s,T_4\Theta\xi_s \ra|\\&  \leq CN^{1/2} \|V_N\|_1^{1/2}\|\sigma\|_2\Big(\int_{\substack{|x'-y'|,\\|x'-z|\leq \ell_B}} dx'dy'dz\|a_{x'}a_{y'}\Theta\xi_s\|^2\Big)^{1/2}\\& \hspace{0.5cm}\times \|\check{A}^{\mathrm{sym}}\|_2\Big(\int_{\substack{|x-y|,\\|x-z|\leq \ell_B}}dxdydz(|V_N(x-y)|+|V_N(x-z)|)\|a_xa_y\Theta\xi_s\|^2\Big)^{1/2} \\& \leq CN^{1/2+\delta/2}\|V_N\|_1^{1/2}\ell_B^{3/2}\|\sigma\|_2\|\check{A}^{\mathrm{sym}}\|_2\|\cN^{1/2}\xi_s\|\big(N^{\delta/2}\|V_N\|_1^{1/2}\|\|\cN^{1/2}\xi_s\|+\ell_B^{3/2}\|\cV_N^{1/2}\xi_s\|\big)\\& \leq C N^{-(2-3\kappa)+7\epsilon+3\delta/2}(\langle \xi_s,\cV_N \xi_s \rangle+N^{5\kappa/2}) + CN^{5\kappa/2-7/4(2-3\kappa)+9\epsilon/2+3\delta}.
\end{split}
\ee
Here, we used the bound \eqref{eq:boundsigmaxm}, the estimate $\|\check{A}_{\infty}\|_2 \leq C_mN^{-m}$, $\text{supp}(V_N)\subset B_{cN^{\kappa-1}}(0)$, for some constant $c>0$ and \eqref{eq:easyboundNm} to show that the contribution associated to $1-\mathbbm{1}(|x-y|,|x-z|,|x'-y'|,|x'-z|\leq \ell_B)$ is negligible. For the main contribution, we applied Lemma \ref{lemma:propertiesquadratickernels}, Lemma \ref{lemma:propertiesL_pre} and Lemma \ref{lm:boundN}. Then for $T_2$, we find
\be
\begin{split}
|\la  \Theta \xi_s,T_2\Theta\xi_s \ra| &\leq CN^{1/2} \ell_B^3\|V_N\|_2\|\sigma\|_2 \|\check{A}^{\mathrm{sym}}\|_2 \la \xi_s, \cN\xi_s \ra \\&\leq CN^{5\kappa/2-(2-3\kappa)+15\epsilon/2+2\delta}
\end{split}
\ee
where we used again \eqref{eq:boundsigmaxm}, the estimate $\|\check{A}_{\infty}\|_2 \leq C_mN^{-m}$ and the support of $V_N$, to show that the contribution associated to $1-\mathbbm{1}(|x-y|,|x-z|,|x-z'|\leq \ell_B)$ can be neglected. For the main contribution, we applied Lemma \ref{lemma:propertiesquadratickernels} and Lemma \ref{lm:boundN}.

\end{proof}

Next, we give an expression for the expectation of the potential energy operator $\cV_N$. 
\begin{lemma} \label{lm:VN} 
Recall
\begin{equation*}
\cV_N= \frac{1}{2}\int dxdy N^{2-2\kappa}V(N^{1-\kappa}(x-y))a^*_xa^*_ya_x a_y
\end{equation*}
from \eqref{eq:lmHN1}.
We have, for $\delta, \epsilon>0$ small enough,
\begin{equation}
 \label{eq_Vn_f}
\begin{split} 
 \la e^{t\Theta A^*-\hc}\Omega, &\cV_N e^{t\Theta A^*-\hc}\Omega\ra \\ &= \frac{t}{2}\int dxdy  N^{2-2\kappa}V(N^{1-\kappa}x)\check{A}^{\mathrm{sym}}(x,y)^2 + O (N^{5\kappa/2-(2-3\kappa)+15\epsilon/2+2\delta})
\end{split} \end{equation}
for any $t\in [0;1]$.
\end{lemma}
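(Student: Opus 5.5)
We prove Lemma \ref{lm:VN} by a second-order Duhamel expansion in $t$, following the scheme of Lemma \ref{lemma:C_N}. Write $\xi_s=e^{s\Theta A^*-\hc}\Omega$. Since $\la\Omega,\cV_N\Omega\ra=0$, we have
\[
\la\xi_t,\cV_N\xi_t\ra=\int_0^t ds\,\Big(\la\xi_s,[\cV_N,\Theta A^*]\xi_s\ra+\hc\Big).
\]
As before, we commute $\Theta$ through and move it to both sides of the inner product. All terms containing a factor $(1-\Theta)$ are $O(N^{-1/(4\delta)})$ by \eqref{eq:norm1-Theta}, \eqref{eq:easyboundNm} and \eqref{eq:AThetaA}, together with $\|\cV_N(\cN+1)^{-2}\|\le CN\|V_N\|_\infty$ (polynomial in $N$). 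This reduces the integrand to $2\mathfrak{Re}\la\Theta\xi_s,[\cV_N,A^*]\Theta\xi_s\ra$ up to negligible errors.

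Normal ordering gives $[\cV_N,A^*]=S_3+S_5$. The quintic part $S_5$ has the form $\int V_N(x-y)\check A^{\mathrm{sym}}(x-y,x-z)\,a^*a^*a^*a\,a$, with the kernel acting on one of the annihilated variables. The cubic part is
\[
S_3=\tfrac12\int dx\,dy\,dz\,V_N(x-y)\check A^{\mathrm{sym}}(x-y,x-z)\,a^*_z a_x a_y .
\]
We bound $S_5$ as $T_4$ was bounded in Lemma \ref{lemma:C_N}. We restrict to $|x-y|,|x-z|\le\ell_B$; the complement is $O(N^{-m})$ because $\|\check A_\infty\|_2\le C_mN^{-m}$. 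We then apply Cauchy--Schwarz, splitting into $\|\cV_N^{1/2}\Theta\xi_s\|$ and a factor carrying the local number $\cN_{2\ell_B}$, which is at most $CN^\delta$ by \eqref{eq:localNboundexample}. With $\|\check A^{\mathrm{sym}}\|_2\lesssim N^{-1/2}\|\eta\|_2\|\sigma\|_2$ and Lemma \ref{lm:boundN}, this gives $CN^{-\mu'}(\la\xi_s,\cV_N\xi_s\ra+N^{5\kappa/2})$ with $\mu'>0$.

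The cubic term $S_3$ is not small, and we treat it by a second Duhamel step. We write
\[
\la\xi_s,S_3\xi_s\ra=\int_0^s dr\,\la\xi_r,[S_3,\Theta A^*]\xi_r\ra+\hc,
\]
again removing $\Theta$ at cost $O(N^{-1/(4\delta)})$. The constant part of $[S_3,A^*]$, obtained by full contraction, equals $\tfrac12\int V_N(x)\check A^{\mathrm{sym}}(x,y)^2\,dx\,dy$ (up to the factor fixed by symmetrization). Integrating $\int_0^t ds\int_0^s dr$ with the two h.c.\ contributions should reproduce exactly the main term $\tfrac t2\int V_N\,(\check A^{\mathrm{sym}})^2$. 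I will check this constant bookkeeping carefully. Alternatively, the linear-in-$t$ form suggests organizing the argument as $\frac{d}{dt}\la\xi_t,\cV_N\xi_t\ra=\mathrm{const}+\text{error}$ and absorbing the $S_3$ piece at that level.

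The remaining quadratic and quartic parts of $[S_3,A^*]$ are estimated as $T_2$ and $T_4$ were, giving $CN^{5\kappa/2-(2-3\kappa)+15\epsilon/2+2\delta}$. For these we use $\ell_B^3\|V_N\|_1\|\check A^{\mathrm{sym}}\|_2^2$ and Lemma \ref{lm:boundN}.

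Finally, the errors involving $\la\xi_r,\cV_N\xi_r\ra$ are closed by Grönwall. The leading term is $O(N^{5\kappa/2})$, so the a priori bound $\la\xi_s,\cV_N\xi_s\ra\le CN^{5\kappa/2}$ follows, and this turns every $\cV_N$-dependent error into the stated $O(\cdot)$.

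The main obstacle is the $S_3$ term. Its expectation is of the leading order, so it cannot be bounded crudely; it requires the second expansion with exact extraction of the full contraction. Moreover, the terms $a^*_z a_x a_y$ with the singular $V_N$ must be controlled through $\cV_N^{1/2}$ and local particle bounds rather than through $\|V_N\|_2$, which would be too large.
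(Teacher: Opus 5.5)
Your outline matches the paper's proof. You take one Duhamel step and split $[\cV_N,A^*]+\hc$ into a cubic and a quintic part. You bound the quintic part by Cauchy--Schwarz against $\cV_N^{1/2}$ using the local number bound. You treat the cubic part by a second expansion, exactly as for $\cC_N$ in Lemma~\ref{lemma:C_N}, and close with Gr\"onwall.

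However, the commutator you wrote down is wrong, and taken literally the central step fails. $\cV_N$ conserves particle number and $A^*$ creates three particles, so every term of $[\cV_N,A^*]$ must raise $\cN$ by exactly $3$.
\begin{itemize}
\item The double contraction keeps both creators of $\cV_N$ and gives $\frac12\int dx\,dy\,dz\,V_N(x-y)\check{A}^{\mathrm{sym}}(x-y,x-z)\,a_x^*a_y^*a_z^*$.
\item The single contraction gives the quintic term $\frac12\int dx\,dy\,dz_1\,dz_2\,V_N(x-y)\check{A}^{\mathrm{sym}}(z_1-z_2,z_1-x)\,a_x^*a_y^*a_{z_1}^*a_{z_2}^*a_y$.
\end{itemize}
Your $a_z^*a_xa_y$ and $a^*a^*a^*aa$ change $\cN$ by $-1$ and $+1$. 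Since $\xi_s$ lives on the sectors $\cN\in 3\bN$, their expectations vanish identically. Moreover, $[a_z^*a_xa_y,A^*]$ has no fully contracted part, so no main term could come out of your $S_3$.

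The constant actually comes from the adjoint piece $\frac12\int V_N(x-y)\check{A}^{\mathrm{sym}}(x-y,x-z)\,a_za_ya_x$. Its commutator with $A^*$ has vacuum part $\frac12\int V_N(u)\check{A}^{\mathrm{sym}}(u,v)^2\,du\,dv$. The cubic term plus its adjoint is precisely $\cC_N+\cC_N^*$ with $N_0^{1/2}\check\sigma(x-z)$ replaced by $\frac12\check{A}^{\mathrm{sym}}(x-y,x-z)$. Lemma~\ref{lemma:C_N} therefore gives its expectation as $s\int V_N(\check{A}^{\mathrm{sym}})^2$ plus the stated error. The quintic term is estimated by pairing $\norm{a_y\Theta\xi_s}$ with $\norm{a_{z_1}a_{z_2}a_xa_y\Theta\xi_s}$, which produces the $\cV_N^{1/2}$ factor.

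The $t$-bookkeeping also needs correcting. Integrating $s\int V_N(\check{A}^{\mathrm{sym}})^2$ over $s\in[0,t]$ gives $\frac{t^2}{2}\int V_N(\check{A}^{\mathrm{sym}})^2$, not $\frac t2$. This is forced anyway by $\xi_t=\Omega+tA^*\Omega+O(t^2)$ together with $\la A^*\Omega,\cV_N A^*\Omega\ra=\frac12\int V_N(\check{A}^{\mathrm{sym}})^2$.

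So the $\frac t2$ in the statement is right only at $t=1$, which is the only value used later. The paper's own integration of $\la\xi_s,T_3\xi_s\ra$ likewise yields $\frac{t^2}{2}$. Gr\"onwall only needs the a priori bound $\la\xi_s,\cV_N\xi_s\ra\le CN^{5\kappa/2}$, which is unaffected. Your alternative, writing $\frac{d}{dt}\la\xi_t,\cV_N\xi_t\ra$ as a constant plus an error, cannot work, since this derivative vanishes at $t=0$.

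A minor point: $\norm{\cV_N(\cN+1)^{-2}}\le CN\norm{V_N}_\infty$ is unavailable because $V$ is only assumed in $L^2$. You do not need it. Since $[\cV_N,\Theta]=0$, the $(1-\Theta)$ errors involve only $[\cV_N,A^*]$. That commutator is bounded polynomially in $N$ relative to powers of $\cN$ using $\norm{V_N}_2$, as in \eqref{eq:insTheta}.
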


\begin{proof}
Writing $\xi_s = e^{sA^* \Theta -\hc} \Omega$, we obtain, for any $t \in [0;1]$, 
    \begin{equation*}\label{eq:VN00} 
    \begin{split}
        &\la e^{ t\Theta A^*-\hc}\Omega, \cV_N e^{t \Theta A^*-\hc}\Omega\ra 
        = 
        \int_0^t ds \la \xi_s , \big( [\cV_N, A^*] + \hc \big) \xi_s \ra +\cE_\Theta
    \end{split}
    \end{equation*}
where $\cE_\Theta$ consists of all the contributions proportional to $(1-\Theta)$. We used, in particular, $[\cV_N, \Theta]=0$, since both $\cV_N$ and $\Theta$ are multiplication operators in position space for fixed number of particles. By Lemma \ref{lemma:propertiesL} and proceeding as in (\ref{eq:insTheta}), $\cE_\Theta$ can be included in the error by taking $\delta > 0$ small enough. We write 
 \begin{equation}
\label{eq:comVN}
\begin{split}
    [\cV_N,A^*] +\hc =&\; \frac{1}{2}\int dxdydz \, V_N (x-y) \check{A}^{\mathrm{sym}}(x-y,x-z) a_x^*a_y^*a_z^* + \hc \\ &+\frac{1}{2}\int dx dy dz_1 dz_2 \, V_N (x-y) \check{A}^{\mathrm{sym}}(z_1-z_2,z_1-x)a_x^*a_y^* a_{z_1}^* a_{z_2}^* a_y + \hc \\=& \; T_3+T_5, 
\end{split}
\end{equation}
where again $V_N (x)$ denotes $N^{2-2\kappa} V (N^{1-\kappa} x)$. To estimate $\la \xi_s, T_5 \xi_s \ra$, we apply Cauchy-Schwarz and Lemma \ref{lemma:propertiesL} to obtain
\begin{equation*}
    \begin{split}
\abs{\la \xi_s,T_{5} \xi_s \ra} 
&\leq \; C
\Big( \int dxdydz_1dz_2 
|V_N(x-y)| |\check{A}^{\mathrm{sym}}_{\ell_B}(z_1-z_2,z_1-x)|^2\|a_y \Theta\xi_s\|^2 \Big)^{1/2} \\
&\hspace{1cm}\times \Big( \int_{\substack{|z_1-z_2|,\\|z_1-x| \leq \ell_B}} dxdydz_1dz_2 |V_N(x-y)| \|a_{z_1}a_{z_2} a_xa_y\Theta\xi_s\|^2 \Big)^{1/2}
+\cE_\Theta \\ 
&\leq CN^{\delta} \|V_N\|_1^{1/2}  
 \|\check{A}^{\mathrm{sym}}\|_2 \|\cN^{1/2} \xi_s \| \| \cV_N^{1/2} \xi_s \| + \cE_\Theta  \end{split} 
\end{equation*} 
so that
\begin{equation*}\label{eq:T5-fin}
|\la \xi_s, T_5 \xi_s \ra |
\leq CN^{-5/4(2-3\kappa)+\epsilon+2\delta} (\la \xi_s , \cV_N \xi_s \ra+N^{5\kappa/2})
\end{equation*} 
for all $s \in [0;t]$. We now turn to the expectation of the cubic term $T_3$ in \eqref{eq:comVN}. Notice that this term has the same form as the cubic operator $\cC_N$ analyzed in Lemma \ref{lemma:C_N}, if the factor $N_0^{1/2} \check{\sigma} (x-z)$ is replaced by $\check{A}^{\mathrm{sym}} (x-y, x-z)/2$, and thus can be bounded similarly. Hence, 
\be
\langle \xi_s,T_3 \xi_s\rangle =  s\int dxdy  V_N(x)\check{A}^{\mathrm{sym}}(x,y)^2+ \cE_{T_3}
\ee
with 
\be 
\pm \cE_{T_3} \leq CN^{-(2-3\kappa)+15\epsilon/2+2\delta}\Big(\int_0^s dr\; \la \xi_r, \cV_N \xi_r \ra+N^{5\kappa/2}\Big).
\ee
All together, we therefore have 
\begin{equation}
\label{V_N_gr}
\la \xi_t, \cV_N \xi_t\ra = \frac{t}{2}\int dxdy  V_N(x)\check{A}^{\mathrm{sym}}(x,y)^2 +\cE_{\cV_N}
\end{equation}
with 
\be 
\pm \cE_{\cV_N} \leq CN^{-(2-3\kappa)+15\epsilon/2+2\delta}\Big(\int_0^t ds\; \la \xi_s, \cV_N \xi_s \ra+N^{5\kappa/2}\Big).
\ee
The first term on the r.h.s of \eqref{V_N_gr} is of order $CN^{5\kappa/2}$, as can be seen by the bounds in Lemma \ref{lemma:propertiesquadratickernels}. Taking $\epsilon$ and $\delta $ small enough, we therefore first get  $\la \xi_t, \cV_N \xi_t\ra \leq CN^{5\kappa/2}$ by applying Gronwall's Lemma. Inserting this bound in \eqref{V_N_gr} then implies \eqref{eq_Vn_f}.
\end{proof}

Lastly, we turn to the kinetic energy operator. 
\begin{lemma} \label{lm:Kxi}
Let $\mu:=\min\{\epsilon,1-3\kappa/2-\epsilon - 2 \delta,2-3\kappa-15\epsilon/2 -2\delta\}$. For $\epsilon,\delta>0$ small enough,
\[ \begin{split}  \la e^{A^* \Theta - \hc} &\Omega, \cK e^{A^* \Theta - \hc} \Omega \ra \\ &\leq  - N^{1/2} \int dx dy \, N^{2-2\kappa} (Vf) (N^{1-\kappa} x) \check{\sigma} (y) \check{A}^{\mathrm{sym}} (x,y) + O (N^{5\kappa/2-\mu/2}). \end{split} \]
\end{lemma}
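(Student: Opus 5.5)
Following Lemmas \ref{lemma:C_N} and \ref{lm:VN}, I will use a Duhamel expansion in $t$ for $\xi_t = e^{t\Theta A^*-\hc}\Omega$, now applied to $\cK$. The cutoff $\Theta$ does not commute with $\cK$. Every term with a factor $(1-\Theta)$ or $[\cK,\Theta]$ will therefore be moved onto the spectral subspace $\Theta<1$ and bounded via \eqref{eq:norm1-Theta}, \eqref{eq:easyboundNm} and \eqref{eq:AThetaA}. This must be done as in \eqref{eq:insTheta}, placing $\Theta$ on both sides rather than commuting it through $\cK$. Up to errors $O(N^{-\nu})$ this gives
\[
\la\xi_1,\cK\xi_1\ra = \int_0^1 ds\,\la \Theta\xi_s,([\cK,A^*]+\hc)\Theta\xi_s\ra .
\]
In position space $[\cK,A^*] = \int dx_1dx_2dx_3\, (-\Delta_{x_1}-\Delta_{x_2}-\Delta_{x_3})\tfrac16\check A^{\mathrm{sym}}(x_1-x_2,x_1-x_3)\,a^*_{x_1}a^*_{x_2}a^*_{x_3}$. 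The key algebraic step is to rewrite the kinetic action on the kernel $N^{-1/2}\check\eta(x_1-x_2)\check\sigma(x_1-x_3)$ using the scattering equation of Lemma \ref{lemma:scattering}. In momentum space $p^2\eta_{\infty,p}=-\tfrac{N^\kappa}{2}\widehat{Vf}(p/N^{1-\kappa})$. Hence $-2\Delta\check\eta = -N^{1/2}\cdot N^{-1/2}(V_Nf_N)$ up to three pieces: cutoff corrections from $\tilde\chi$, controlled by \eqref{eq:eta-etainftysum}; the finite-size $O(N^{2\kappa-1})$ errors; and cross terms $\nabla\check\eta\cdot\nabla\check\sigma$ together with $\check\eta\,\Delta\check\sigma$.

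The leading part of $[\cK,A^*]$ is thus $-N_0^{1/2}N^{-1/2}\cdot\sqrt{N}\int V_Nf_N(x_1-x_2)\check\sigma(x_1-x_3)a^*a^*a^*$ plus symmetrizations, with $N_0\approx N$ by \eqref{eq:particle_number_0}. Up to sign and normalization this has exactly the structure of $\cC_N$, with $V$ replaced by $Vf$. I will then run the argument of Lemma \ref{lemma:C_N} verbatim on this cubic term, computing one more commutator with $A^*$. Its constant part produces $-sN^{1/2}\int V_N f_N(x)\check\sigma(y)\check A^{\mathrm{sym}}(x,y)$, which integrated over $s$ gives the main term; the quadratic and quartic remainders $T_2,T_4$ are bounded as there. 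The $T_4$ bound involves $\la\xi_s,\cV_N\xi_s\ra$, which Lemma \ref{lm:VN} controls by $CN^{5\kappa/2}$. This accounts for the entry $2-3\kappa-15\epsilon/2-2\delta$ in $\mu$.

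The subleading kinetic pieces ($\nabla\check\eta\cdot\nabla\check\sigma$, $\check\eta\,\Delta\check\sigma$, and the cutoff errors) are cubic operators with small kernels. I will bound them directly by Cauchy--Schwarz, using the local cutoff and \eqref{eq:localNboundexample} on the range $|u|,|v|\le\ell_B$ and the rapid decay \eqref{eq:boundsigmaxm}, \eqref{eq:boundetam}, \eqref{eq:boundnablasigma} outside. Together with Lemma \ref{lm:boundN}, this should give errors of the form $N^{5\kappa/2}\cdot N^{-(1-3\kappa/2-\epsilon-2\delta)/2}$ and $N^{5\kappa/2-\epsilon/2}$, matching the remaining entries of $\mu$. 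Here $\Delta\check\sigma$ lives at momenta $\gtrsim\ell_\sigma^{-1}$, so its pairing with $\check\eta$ should gain a factor $\ell_\sigma^{-2}/N^{2-2\kappa}$.

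I expect the main obstacle to be the cross term $\nabla\check\eta\cdot\nabla\check\sigma$ and, more generally, ensuring that only an upper bound is needed. The positive remainders of $\cK$ could in principle be large, since $\|\nabla\check\sigma\|_2^2\sim N^{1+\kappa}$ by \eqref{eq:boundnablasigma}. I would first try to treat these terms in the ``$T_2$'' fashion: after one further Duhamel step they become quadratic with kernel $\check A^{\mathrm{sym}}$ paired against a derivative kernel, bounded by $\|\check A^{\mathrm{sym}}\|_2$ times an $L^2$ norm restricted to $\ell_B$-balls. Where a sign is available, for instance the inequality $[\cK,A^*]+\hc\le$ its symmetrized main part plus a nonnegative term that we may drop, I would use it to obtain the stated inequality rather than an equality.
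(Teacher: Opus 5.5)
Your proposal is correct. For the main term it follows the paper: you use $p^2\eta_{\infty,p}=-\tfrac{N^\kappa}{2}\widehat{Vf}(p/N^{1-\kappa})$ to produce a $\cC_N$-type operator with $Vf$ in place of $V$, and then rerun Lemma \ref{lemma:C_N}, with Lemma \ref{lm:VN} controlling $T_4$. For the derivative terms you take a different route.

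\textbf{How the paper handles the derivative terms.} It splits $[\cK,A^*]=X_1+X_2$.
\begin{itemize}
\item $X_1$ carries only the Laplacian on $\check\eta$. Its $\tilde\chi$-correction is written as a $g$-average of shifted operators $\cC_{N,\zeta}$, each of which is also run through Lemma \ref{lemma:C_N}.
\item $X_2$ is a total $\nabla_x$-derivative of $(\nabla_2\check A)(x-y,x-z)$. Integrating by parts onto $a_x^*$ gives $|\la\xi_s,X_2\xi_s\ra|\le CN^{2\kappa+\delta+\epsilon/2-1/2}\norm{\cK^{1/2}\xi_s}$.
\item The $[\cK,\Theta]A^*$ term is bounded by $C_\nu N^{-\nu}\norm{(\cK+1)^{1/2}\xi_s}$.
\item Everything is closed by a Gr\"onwall inequality for $\norm{(\cK+1)^{1/2}\xi_s}$. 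This is why the lemma is stated only as an upper bound.
\end{itemize}

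\textbf{Your direct bound works.} You instead bound $\nabla\check\eta\cdot\nabla\check\sigma$ and $\check\eta\,\Delta\check\sigma$ directly, as in Lemma \ref{lm:tildeCN}. This closes, so the worry in your last paragraph is unfounded.
\begin{itemize}
\item Use $\norm{p\eta}_2^2,\norm{q\sigma}_2^2\le CN^{1+\kappa}$, $\norm{q^2\sigma}_2^2\le CN^{3-\kappa}$ and $\norm{\eta}_2^2\le CN^{3\kappa-1+\epsilon}$. Then both kernels have $L^2$-norm $\lesssim N^{1/2+\kappa+\epsilon/2}$.
\item By Lemma \ref{lm:boundN}, $N^{\delta}\norm{\cN^{1/2}\xi_s}\lesssim N^{9\kappa/4-1+\epsilon/2+2\delta}$.
\item The product is $N^{5\kappa/2-(2-3\kappa)/4+\epsilon+2\delta}$. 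This is the same gain as the paper's $X_2$ bound, which simply moves one derivative from the kernel onto $\xi_s$.
\end{itemize}

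\textbf{The $\tilde\chi$-correction.} The relevant input is the Fourier support $|p|\le 2\ell_\eta^{-1}$, not \eqref{eq:eta-etainftysum}. The correction kernel is smaller than that of $\cC_N$ by $N^{-3\epsilon/2}$ in $L^2$, so the direct bound is $N^{5\kappa/2-\epsilon+2\delta}$. This is fine if $\delta\ll\epsilon$. No finite-size error arises here, since the identity for $p^2\eta_{\infty,p}$ is exact.

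\textbf{Trade-offs.} Your route gives a two-sided estimate, so no sign argument or extra Duhamel step is needed. It does require tail bounds for $\nabla\check\eta$ and $\Delta\check\sigma$, analogous to \eqref{eq:boundnablasigma}. The paper's route needs only $\nabla_2\check A$, and it handles the $\Theta$-commutator inside the same Gr\"onwall argument.

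\textbf{One step you must add.} $[\cK,\Theta]$ is a first-order differential operator. Hence $\la\xi_s,[\cK,\Theta]A^*\xi_s\ra$ is not $O(N^{-\nu})$ from \eqref{eq:norm1-Theta}, \eqref{eq:easyboundNm} and \eqref{eq:AThetaA} alone; it is only $\le C_\nu N^{-\nu}\norm{(\cK+1)^{1/2}\xi_s}$. Since your route has no Gr\"onwall in $\cK$, you need an a priori bound on the kinetic energy. It is easy to get:
\begin{itemize}
\item Set $k(s)=\la\xi_s,\cK\xi_s\ra$. The term $\la\xi_s,\Theta[\cK,A^*]\xi_s\ra$ is bounded by a polynomial $P(N)$, using the $L^2$-norm of its kernel together with \eqref{eq:easyboundNm}.
\item Hence $k'(s)\le P(N)\bigl(1+\sqrt{k(s)+1}\bigr)$, so $\sqrt{k(s)+1}\le 1+P(N)s$.
\item This polynomial growth is absorbed by $N^{-\nu}$.
\end{itemize}
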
 

\begin{proof} 
Denote $\xi_t = e^{t  {\Theta A^*} - \hc} \Omega$ and compute
\begin{equation}\label{eq:Kexp} \begin{split} 
\la \xi_t, \cK \xi_t \ra & = \la \Omega, \cK \Omega \ra +  2 \text{Re } \int_0^t ds \, \la \xi_s , [ \cK, \Theta A^* ] \xi_t \ra \\ 
& = \; 2\text{Re } \int_0^t ds \, \la \xi_s, \Theta[\cK, A^*]  \xi_s \ra + 2\text{Re } \int_0^t ds \, \la \xi_s , [ \cK, \Theta ] A^* \xi_s \ra. \end{split} \end{equation} 
A straightforward computation exhibits with $\check{A}(u,v)=\frac{1}{\sqrt{N}}\check{\eta}(u)\check{\sigma}(v)$
\begin{align}
\nonumber 
  &  [ \cK, A^*]  = -2\int dx dy dz (\Delta_x+\Delta_y+ \Delta_z)\check{A}(x-y,x-z)a_x^* a_y^* a_z^*\\
  \nonumber 
    & =-2\int dx dy dz(\Delta_1 \check{A})(x-y,x-z)a_x^* a_y^* a_z^*-2\int dx dy dz \nabla_x (\nabla_2 \check{A})(x-y,x-z)a_x^* a_y^* a_z^*\\
    \label{Eq:Introduction_X_1_X_2}
    & =: X_1 + X_2.
\end{align}
We note at this point that by Lemma \ref{lemma:propertiesL} it follows that 
\begin{equation*} \label{eq:commKA2}
2 \text{Re } \int_0^t ds \, \la \xi_s, \Theta[ \cK, A^*]  \xi_s \ra = 2 \text{Re } \int_0^t ds \, \la \xi_s, [\cK, A^* ] \xi_s \ra + O (N^{-1/(4\delta)}) 
\end{equation*} 
if we choose the parameter $\delta >0$ in the definition of the cutoff $\Theta$ small enough.
{Splitting $\check{A}=\check{A}_{\ell_B}+\check{A}_\infty$ in the main contribution $\check{A}_{\ell_B}(u,v):=\mathbbm{1}(|u|,|v|\leq \ell_B)\check{A}(u,v)$ and the residuum $\check{A}_\infty$}, the $X_2$ term in (\ref{Eq:Introduction_X_1_X_2}) can be expressed and estimated as
\begin{align*}
  &  \text{Re } \la \xi_s, X_2\xi_s \ra = 4\text{Re }\int dx dy dz (\nabla_2 \check{A})(x-y,x-z)\la \xi_s,(\nabla a_x)^* a_y^* a_z^* \xi_s\ra \leq 4\|\cK^{1/2} \xi_s\|\\
  & \times \! \! \! \sum_{j\in \{\ell_B,\infty\} }\sqrt{\int dx dy dz dy' dz'(\nabla_2 \check{A}_j)(x-y,x-z)(\nabla_2 \check{A}_j)(x-y',x-z')\la \xi_s, a_{z'}a_{y'}a_y^* a_z^* \xi_s \ra }.
\end{align*}
For the main contribution $j=\ell_B$, we use the support properties of $\check{A}_{\ell_B}$ and Lemma \ref{lemma:propertiesL}
\begin{align*}
  &  \int dx dy dz dy' dz'(\nabla_2 \check{A}_{\ell_B})(x-y,x-z)(\nabla_2 \check{A}_{\ell_B})(x-y',x-z')\la \xi_s, a_{z'}a_{y'}a_y^* a_z^*\ra \\
  & \ \ \  \leq 4\left\|\nabla_2 \check{A}_{\ell_B}\right\|_2^2 \int dx \left\la \xi_s ,\left(\mathcal{N}_{\ell_B}(x)+2\right)^2\xi_s \right\ra 
  \lesssim N^{2\delta} \left\|\nabla_2 \check{A}_{\ell_B}\right\|^2=O\! \left(N^{4\kappa + 2\delta + \epsilon -1 }\right). 
\end{align*}
Using Lemma \ref{lemma:propertiesL} together with (\ref{eq:boundnablasigma}), one can show that the corresponding residual term $j=\infty$ is of the order $O(N^{-m})$ for any $m$ and hence  {
\begin{align*}
    \left|\la \xi_s, X_2\xi_s \ra\right|\leq CN^{2\kappa + \delta +\epsilon/2 -1/2 }\|\cK^{1/2} \xi_s\|.
\end{align*}
}
Concerning the $X_1$ term we note that the Fourier transform of $\Delta_1 \check{A}$ is given by $\frac{1}{2}\widetilde{\chi}(p)\left[N^{\kappa-1/2}\widehat{Vf}\! \left(\frac{p}{N^{1-\kappa}}\right)\sigma(q)\right]$ and we denote with $g$ the (inverse) Fourier transform of the smooth and compactly supported function $\widetilde{\chi}-1$. Then,
\begin{align*}
    \Delta_1 \check{A}(u,v) & =\frac{1}{2}N^{5/2-2\kappa}  \left(g*(Vf)(N^{1-\kappa}\cdot )\right)  (u)\check{\sigma}(v)+\frac{1}{2}N^{5/2-2\kappa} (Vf)(N^{1-\kappa}\cdot )  (u)\check{\sigma}(v)\\
    &= \frac{1}{2}\int_\Lambda  \mathrm{d}\zeta g(\zeta) \, N^{1/2}(Vf)_{N,\zeta}(u)\check{\sigma}(v)+ \frac{1}{2}N^{5/2-2\kappa} (Vf)_{N,0} (u)\check{\sigma}(v),
\end{align*}
where $(Vf)_{N,\zeta}(u):=N^{2-2\kappa}(Vf)(N^{1-\kappa}(u-\zeta))$. We observe that $N^{1/2}(Vf)_{N,\zeta}(u)\check{\sigma}(v)$ has almost exactly the same form as the kernel of the cubic operator $\cC_N$, just with the prefactor $N_0^{1/2}$ replacing by $N^{1/2}$ and with the potential $V$ replaced by $Vf$, and a shift $\zeta$. Denoting the modified cubic operator with the kernel $N^{1/2}(Vf)_{N,\zeta}(u)\check{\sigma}(v)$ by $\cC_{N,\zeta}$, we can proceed exactly as in Lemma \ref{lemma:C_N} and use Lemma \ref{lm:VN} with a shift $\zeta$ to obtain
\begin{align*}
    E(\zeta): & =\left\langle \xi_s,  (\cC_{N,\zeta} +\cC_{N,\zeta}^* )\xi_s\right\rangle-2s N^{1/2}\int dx dy \, (Vf)_{N,\zeta}(x) \check{\sigma} (y) \check{A}^{\mathrm{sym}} (x,y)\\
    & = O\! \left( N^{5\kappa/2 + 15\epsilon/2+2\delta-(2-3\kappa)}\right)
\end{align*}
uniformly in $\zeta\in \Lambda$. Using additionally that $\|g\|_{1}\lesssim 1$ yields
\begin{align}
\nonumber 
 &  \left\langle \xi_s,X_1 \xi_s \right\rangle  =-2s N^{1/2}\int_\Lambda \!  \mathrm{d}\zeta g(\zeta) \int dx dy \, (Vf)_{N,\zeta}(x) \check{\sigma} (y) \check{A}^{\mathrm{sym}} (x,y)\\
  \label{Eq:explicit_constant}
 &-2s N^{1/2}\int dx dy \, (Vf)_{N,0}(x) \check{\sigma} (y) \check{A}^{\mathrm{sym}} (x,y) + \! O\! \left( \! N^{5\kappa/2 + 15\epsilon/2+2\delta-(2-3\kappa)} \! \right) \! .
\end{align}
An explicit computation in Fourier space, using the bound \eqref{eq:sigmainftymomentum}, exhibits that the constant on the right hand side of \eqref{Eq:explicit_constant} is,  up to an error of the magnitude $O(N^{5\kappa/2-\epsilon})$, identical to $\lambda:= -2N^{1/2} \int dx dy \, N^{2-2\kappa} (Vf) (N^{1-\kappa} x) \check{\sigma} (y) \check{A}^{\mathrm{sym}} (x,y)$. We conclude for $\epsilon,\delta$ small enough, and $\mu$ as in the statement of the Lemma,
\begin{align}
\label{eq:main_theta_cK_I}
   2 \text{Re } \la \xi_s, \Theta [\cK, A^* ] \xi_s \ra \leq s\lambda + C N^{\frac{5\kappa}{4}-\frac{\mu}{2}}\|\cK^{1/2} \xi_s\|+CN^{\frac{5\kappa}{2} - \mu}.
\end{align}
Regarding the second term on the r.h.s. of (\ref{eq:Kexp}) we note that the cut-off operator $\Theta$ acts on the $m$-particle sector as multiplication by 
\begin{align*}
     F_m (x_1, \dots , x_m) : = \Theta \Big( \int_{\Lambda } dw \, \big( \sum_{j=1}^m \chi_{\ell_B} (x_j - w) + 1 \big)^n / N^{\frac{1}{\delta}} \Big),
\end{align*}
and similarly the action of $\mathcal{N}^{-2}[\cK, \Theta]$ is explicitly given by
\begin{equation*} 
\begin{split} - \frac{2}{m^2} \sum_{j=1}^m (\nabla_{x_j} F_m) (x_1, \dots , x_m) \cdot \nabla_{x_j} - \frac{1}{m^2}\sum_{j=1}^m (\Delta_{x_j} F_m) (x_1, \dots , x_m) .\end{split} \end{equation*} 
Using the bounds $|\nabla_{x_j} F_m (x_1, \dots , x_m)|^2  \leq C m \ell_B^{-2} $ and $|\Delta_{x_j} F_m (x_1, \dots , x_m)| \leq C m \ell_B^{-2}$, which follow directly from the definition of $F_m$, we obtain
\begin{align*}
   \| \frac{1}{\cN^2} [\cK , \Theta ] \xi_s \| \leq C \ell_B^{-2} \| (\cK+1)^{1/2} \xi_s\|.
\end{align*}
Since the function $\Theta'$ is supported in $[1,\infty)$, we have $ \frac{1}{\cN^2} [\cK , \Theta ]=  Q\frac{1}{\cN^2}[\cK , \Theta ]$, where $Q:=\mathbbm{1}_{[1,\infty)}(\cL_n / N^{\frac{1}{\delta}})$, and together with Lemma \ref{lemma:propertiesL} we obtain for any $\nu$ 
\begin{align*}
\Big|   \la \xi_s , [ \cK, \Theta ] A^*  \xi_s \ra \Big| \leq C \ell_B^{-2}  \| \cN^2 Q A^* \xi_s \| \| (\cK + 1)^{1/2} \xi_s \| \leq C_\nu N^{-\nu} \| (\cK+1)^{1/2} \xi_s \|,
\end{align*}
 if $\delta $ is small enough. In combination with (\ref{eq:Kexp}) and (\ref{eq:main_theta_cK_I}) we obtain
\begin{align*}
   \| (\cK+1)^{1/2} \xi_t \|^2= \la \xi_t, \cK \xi_t \ra\leq \lambda/2 + CN^{\frac{5\kappa}{2} - \mu} + CN^{\frac{5\kappa}{4}-\mu /2}\int_0^t ds \| (\cK+1)^{1/2} \xi_s \| .
\end{align*}
Grönwall's inequality therefore tells us that $\la \xi_1, \cK \xi_1 \ra\leq  e^{CN^{-\mu/2}} \! \left( \lambda/2+CN^{5\kappa/2-\mu/2}\right)$, which concludes the proof together with $ \lambda\lesssim N^{5\kappa/2}$.
\end{proof}

\subsection{Proof of Theorem \ref{theorem:Fockspaceresult}} 

We combine the results of Lemma \ref{lemma:C_N}, Lemma \ref{lm:VN} and Lemma \ref{lm:Kxi}.
\begin{lemma} \label{lm:cubic-exp} 
Recall $\cC_N$ from \eqref{eq:CtCtV} and $\cV_N$ as well as $\cK$ from \eqref{eq:HNFock}. There is $h>0$ such that for sufficiently small $\delta,\epsilon >0$, we have 
\[ \begin{split} 
\la \xi, (\cK + &\cC_N + \cC_N^* + \cV_N) \xi \ra = \; N^{\kappa-1} \sum_{p,q \in \Lambda^*} \hat{V} (p/N^{1-\kappa}) (\eta_{\infty,p} + \eta_{\infty,p+q}) \sigma_q^2  
 +O(N^{5\kappa/2-h}) .
        \end{split}\]
\end{lemma}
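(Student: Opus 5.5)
Set $t=1$ in Lemma \ref{lemma:C_N} and Lemma \ref{lm:VN}, and combine them with Lemma \ref{lm:Kxi}. This reduces the claim to an explicit identity for three constants plus error control.

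\textbf{Error terms.} Lemma \ref{lm:VN} gives $\la \xi_s,\cV_N\xi_s\ra\le CN^{5\kappa/2}$ uniformly in $s\in[0;1]$. Inserting this into the error $\cE_{\cC}$ of Lemma \ref{lemma:C_N} yields an error of order $N^{5\kappa/2-(2-3\kappa)+15\epsilon/2+2\delta}$. The error in Lemma \ref{lm:VN} is of the same order, and Lemma \ref{lm:Kxi} contributes $O(N^{5\kappa/2-\mu/2})$. Since $\kappa<2/3$, choosing $\epsilon,\delta$ small makes all exponents strictly below $5\kappa/2$, which fixes $h>0$. The main terms sum to
\[
2N_0^{1/2}\!\int V_N(x)\check\sigma(y)\check A^{\mathrm{sym}}(x,y)-N^{1/2}\!\int (Vf)_N(x)\check\sigma(y)\check A^{\mathrm{sym}}(x,y)+\frac12\int V_N(x)\check A^{\mathrm{sym}}(x,y)^2 ,
\]
with $V_N=N^{2-2\kappa}V(N^{1-\kappa}\cdot)$. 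I first replace $N_0^{1/2}$ by $N^{1/2}$. By \eqref{eq:particle_number_0}, $N_0=N+O(N^{3\kappa/2})$, and the integral is $O(N^{5\kappa/2-1/2})$, so this replacement costs only $O(N^{5\kappa-3/2})$, which is negligible.

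\textbf{Main terms in Fourier space.} I expand $\check A^{\mathrm{sym}}$ into its six permutation terms. In momentum space, $\check A(u,v)=N^{-1/2}\check\eta(u)\check\sigma(v)$ corresponds to $N^{-1/2}\eta_p\sigma_q$, and the permuted arguments correspond to the momentum relabellings $(p,q)\mapsto(q,p),(-p-q,q),\dots$. The pairing of $V_N(x)\check\sigma(y)$ against these terms gives sums of the form $N^{\kappa-1}\sum_{p,q}\hat V(p/N^{1-\kappa})\eta_{\cdot}\sigma_{\cdot}\sigma_q$. Only two kinds of permutation contribute at leading order.
\begin{itemize}
\item The \emph{diagonal} terms, where $\check\eta$ carries the argument $x$ paired with $V_N$, produce $N^{\kappa-1}\sum_{p,q}\hat V(p/N^{1-\kappa})\eta_p\sigma_q^2$.
\item The \emph{crossed} terms produce $N^{\kappa-1}\sum_{p,q}\hat V(p/N^{1-\kappa})\eta_{p+q}\sigma_q^2$.
\end{itemize}
All other permutations have $\eta$ and $\sigma$ evaluated at unrelated momenta, so they factor through $\check\eta(x)$ and $\check\sigma(x)$ near the origin. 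Using $\norm{\eta}_\infty\le CN^{3\kappa-2+2\epsilon}$ and the bounds of Lemma \ref{lemma:propertiesquadratickernels}, they should be $O(N^{5\kappa/2-h})$.

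\textbf{Cancellations.} The $Vf$-term from the kinetic energy has the same structure with $\hat V$ replaced by $\widehat{Vf}$. I would rewrite it using $p^2\eta_{\infty,p}=-\frac{N^\kappa}{2}\widehat{Vf}(p/N^{1-\kappa})$, i.e.\ the discrete scattering equation \eqref{eq:scatteringdiscrete}. This should combine with the quadratic term $\frac12\int V_N(\check A^{\mathrm{sym}})^2$, whose leading part is $N^{\kappa-1}\sum\hat V\,\eta\eta\,\sigma^2$-type sums. The mechanism is the convolution identity in \eqref{eq:scatteringdiscrete}: there $\hat V*\eta$ reproduces $\widehat{Vw}$ with $w=1-f$, so that $\widehat{V}-\widehat{Vw}=\widehat{Vf}$. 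The net effect is that, of the coefficient $2$ in front of the $\cC_N$ term, one unit is cancelled by the kinetic term and the quadratic term. What remains is exactly $N^{\kappa-1}\sum\hat V(\eta_{\infty,p}+\eta_{\infty,p+q})\sigma_q^2$.

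Finally, I replace $\eta$ by $\eta_\infty$ using \eqref{eq:eta-etainftysum}. Together with $\norm{\sigma}_2^2\le CN^{3\kappa/2}$ and $\norm{\hat V}_\infty<\infty$, this gives an error $N^{\kappa-1}N^{1-\epsilon}N^{3\kappa/2}=N^{5\kappa/2-\epsilon}$.

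\textbf{Main obstacle.} The hardest step is the bookkeeping of the six permutations in $\check A^{\mathrm{sym}}$, especially inside the square $(\check A^{\mathrm{sym}})^2$. There I must verify that the cross terms between different permutations are genuinely lower order, and that the cancellation between the $\hat V$ and $\widehat{Vf}$ contributions is exact up to the $O(N^{2\kappa-1})$ errors of Lemma \ref{lemma:scattering}. For these error sums I would try to bound each remainder by $\norm{\eta}_\infty\norm{\sigma}_2^2$- or $\norm{\eta}_1\norm{\sigma}_\infty^2$-type estimates, and check that they all fall below $N^{5\kappa/2}$.
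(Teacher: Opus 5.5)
Your proposal is correct and is essentially the paper's proof. The steps match: combine Lemmas \ref{lemma:C_N}, \ref{lm:VN} and \ref{lm:Kxi} at $t=1$ with $N_0^{1/2}$ replaced by $N^{1/2}$, and keep only the $\eta_p\sigma_q^2$ and $\eta_{p+q}\sigma_q^2$ contributions. Then turn the $\cV_N$ term, after replacing the convolved $\eta$ by $\eta_\infty$, into $-N^{\kappa-1}\sum_{p,q}\widehat{Vw}(p/N^{1-\kappa})(\eta_p+\eta_{p+q})\sigma_q^2$ via the convolution identity in \eqref{eq:scatteringdiscrete}. Finally conclude from $2\hat V-\widehat{Vf}-\widehat{Vw}=\hat V$ plus a last swap $\eta\to\eta_\infty$.

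A few minor points:
\begin{enumerate}
\item The $\widehat{Vf}$ term from $\cK$ needs no rewriting via $p^2\eta_{\infty,p}$; it enters the combination directly.
\item The $N_0^{1/2}\to N^{1/2}$ replacement costs $O(N^{4\kappa-1})$, not $O(N^{5\kappa-3/2})$. Your figure would wrongly force $\kappa<3/5$.
\item The off-diagonal permutation sums are closed by $N^{\kappa-1}\norm{\eta}_\infty\norm{\sigma}_1^2\leq CN^{4\kappa-1+2\epsilon}$. The paper instead uses $N^{\kappa-1}\norm{\eta}_2\norm{\sigma}_2\norm{\sigma}_1$, and $N^{\kappa-2}\norm{\eta}_2\norm{\sigma}_2\norm{\eta}_1\norm{\sigma}_1$ or $N^{\kappa-2}\norm{\eta}_2^2\norm{\sigma}_1^2$ inside the square.
\end{enumerate}
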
 

\begin{proof}
First note that all the error terms in the aforementioned lemmata are of order $O(N^{5\kappa/2-h})$ if $\delta,\epsilon >0$ are small enough. We combine the lemmata \ref{lemma:C_N} and \ref{lm:Kxi} to obtain
\begin{equation*} \begin{split} \langle \xi, (\cK + \cC_N + \cC_N^*) \xi \ra &= \; N^{\kappa-1} \sum_{p,q \in \Lambda^*} \big( 2 \widehat{Vw}  (p/N^{1-\kappa}) + \widehat{Vf} (p/N^{1-\kappa}) \big) \sigma_q \\ &\times (\eta_{p}\sigma_{q}
	+\eta_{q}\sigma_{p}
	+\eta_{p+q}\sigma_{q}
	+\eta_{q}\sigma_{p+q}
	+\eta_{p}\sigma_{p+q}
	+\eta_{p+q}\sigma_{p}
	)
+O(N^{5\kappa/2-h})\\
&= \; N^{\kappa-1} \sum_{p,q \in \Lambda^*} \big( 2 \widehat{Vw}  (p/N^{1-\kappa}) + \widehat{Vf} (p/N^{1-\kappa}) \big) \sigma_q^2 (\eta_{p}
	+\eta_{p+q}
	)
+O(N^{5\kappa/2-h})
\end{split} \end{equation*} 
after using first \eqref{eq:particle_number_0} to replace $N_0$ by $N$ in the term coming from $\cC_N$. 
Note regarding the second equality that all other terms are bounded by $N^{\kappa-1} \norm{\eta}_2 \norm{\sigma}_2 \norm{\sigma}_1 \leq C N^{5\kappa/2 +\epsilon/2 - (2-3\kappa)/4} $ thanks to Lemma \ref{lemma:propertiesquadratickernels}.
For $\cV_N$ on the other hand we use Lemma \ref{lm:VN} to write
\[
\begin{split}
\la \xi, \cV_N \xi \ra 
&= \frac{N^{\kappa-2}}{2} \sum_{p,q,r \in \Lambda^*} \hat{V} (r/N^{1-\kappa}) (\eta_{p}\sigma_{q}
	+\eta_{q}\sigma_{p}
	+\eta_{p+q}\sigma_{q}
	+\eta_{q}\sigma_{p+q}
	+\eta_{p}\sigma_{p+q}
	+\eta_{p+q}\sigma_{p}
	)\\
	&\times
	(\eta_{p-r}\sigma_{q}
	+\eta_{q}\sigma_{p-r}
	+\eta_{p-r+q}\sigma_{q}
	+\eta_{q}\sigma_{p-r+q}
	+\eta_{p-r}\sigma_{p-r+q}
	+\eta_{p-r+q}\sigma_{p-r}
	) 
 + O(N^{5\kappa/2-h}) \\
 &= \frac{N^{\kappa-2}}{2} \sum_{p,q,r \in \Lambda^*} \hat{V} (r/N^{1-\kappa}) (\eta_{p}
	+\eta_{p+q}
	)
	(\eta_{p-r}
	+\eta_{p-r+q}
	) \sigma_{q}^2
 + O(N^{5\kappa/2-h})\\
 &= N^{\kappa-2} \sum_{p,q,r \in \Lambda^*} \hat{V} (r/N^{1-\kappa}) (\eta_{p}
	+\eta_{p+q}
	)
	\eta_{\infty,p-r}
	\sigma_{q}^2
 + O(N^{5\kappa/2-h})\\
 &= -N^{\kappa-1} \sum_{p,q \in \Lambda^*} \widehat{Vw} (r/N^{1-\kappa}) (\eta_{p}
	+\eta_{p+q}
	)
	\sigma_{q}^2
 + O(N^{5\kappa/2-h})
 .
 \end{split}\]
 For the second equality we used Lemma \ref{lemma:propertiesquadratickernels} to bound all the other terms either as 
 $N^{\kappa-2} \norm{\eta}_2 \norm{\sigma}_2 \norm{\eta}_1 \norm{\sigma}_1 \leq C N^{5\kappa/2 +\epsilon/2 - (2-3\kappa)/4} $ or as $N^{\kappa-2} \norm{\eta}_2^2 \norm{\sigma}_1^2 \leq C N^{5\kappa/2 +\epsilon - (2-3\kappa)/2} $. On the other hand, for the third equality we first simplified the expression exploiting the symmetry of the kernels and then removed the cutoff in $\eta$ yielding a small error due to \eqref{eq:eta-etainftysum} as $N^{\kappa-2} \norm{\eta-\eta_\infty}_1 \norm{\eta}_1 \norm{\sigma}_2^2 \leq C N^{5\kappa/2 -\epsilon}$. Furthermore, for the last equality we plugged in the scattering equation \eqref{eq:scatteringdiscrete}; observe that to absorb the finite size error of order $O(N^{2\kappa})$ we require in particular $h\leq \kappa/2$. To conclude the proof of the Lemma, note that after combining the above terms the claim follows by replacing $\eta$ by $\eta_\infty$ similarly to before.
\end{proof} 

We are now ready to show Theorem \ref{theorem:Fockspaceresult}.
\begin{proof}[Proof of Theorem \ref{theorem:Fockspaceresult}]
Let $\Psi_N^{\pm}$ be as in Lemma \ref{lm:NN2} and pick $\epsilon, \delta$ are small enough. Combining the result on the quadratic conjugation in Lemma \ref{lemma:quadraticrenormalization} with Lemma \ref{lm:cubic-exp} for the main terms therein and Lemma \ref{lm:boundN}, \ref{lm:tildeCN} and \ref{lm:tildeVN} for the negligible ones we have for some $h>0$
\begin{equation*} \label{eq:S} \begin{split} 
\la \Psi_N^{\pm}, &\cH_N \Psi_N^{\pm} \ra =  4\pi\aa N^{1+\kappa}
        + S
+ O(N^{5\kappa/2-h}) .
 \end{split} \end{equation*}
 where $S=\frac{1}{2}\sum_{p\in\Lambda^*_+}
        \Big(
        \sqrt{\abs{p}^4+2p^2 N^\kappa\widehat{Vf}(\frac{p}{N^{1-\kappa}})}
        - p^2 - N^\kappa\widehat{Vf}(\frac{p}{N^{1-\kappa}})
        + \frac{N^{2\kappa}\widehat{Vf}(\frac{p}{N^{1-\kappa}})^2}{2\abs{p}^2}
        \Big).
        $
Observe that the summand decays as $N^{3\kappa} |p|^{-4}$ for large $|p|$, allowing to restrict the sum to $|p| \leq N^{\kappa/2+h}$ up to the desired precision. We introduce 
\[ g_p (t) = 
        \sqrt{p^4+2p^2 N^\kappa\widehat{Vf}(\frac{tp}{N^{1-\kappa}})}
        - p^2 - N^\kappa\widehat{Vf}(\frac{tp}{N^{1-\kappa}})
        + \frac{N^{2\kappa}\widehat{Vf}(\frac{tp}{N^{1-\kappa}})^2}{2\abs{p}^2}
        \]
        in order to connect $S$ to the corresponding sum involving $\aa$. We find
 \[ 
 \begin{split} 
 S =& \; \frac{1}{2}\sum_{\substack{p\in\Lambda^*_+\\ \abs{p}\leq N^{\kappa/2+h}}}
         \left(
        \sqrt{|p|^4+16\pi \mathfrak{a} N^\kappa p^2}
        - p^2 -  8\pi \mathfrak{a}N^\kappa 
        + \frac{N^{2\kappa}(8\pi \mathfrak{a})^2}{2\abs{p}^2}
        \right)\\
        &+ \frac{1}{2}\int_0^1 ds\int_0^s dt \sum_{\substack{p\in\Lambda^*_+\\ \abs{p}\leq N^{\kappa/2+h}}}
        \frac{d^2}{dt^2} g_p(t)
        +O(N^{5\kappa/2-h})
         \end{split} 
         \]
using $\widehat{Vf} (0) = 8\pi \aa$ as well as $g'_p (0) = 0$, due to symmetry. It is easy to verify that for all relevant $p$ we have
$
            \abs{g_p''(t)} 
            \leq C N^{4\kappa-2} \frac{N^{\kappa}}{\abs{p}^2}
$
implying
\[         \begin{split}
        S
        &= \frac{1}{2}\sum_{\substack{p\in\Lambda^*_+}}
        \left(
        \sqrt{p^4+16\pi \mathfrak{a} N^\kappa p^2}
        - p^2 -  8\pi \mathfrak{a}N^\kappa
        + \frac{N^{2\kappa}(8\pi \mathfrak{a})^2}{2\abs{p}^2}
        \right)
        +O(N^{5\kappa/2-h}).\end{split} \]
         Note that we removed the momentum restriction again.
Replacing the Riemann sum by the usual Lee-Huang-Yang integral and evaluating it concludes the proof.
\end{proof}

\section*{Appendix}
It is the content of the Appendix to verify the auxiliary Lemma \ref{lemma:scattering} and Lemma \ref{lemma:propertiesquadratickernels} concerning the scattering coefficients, and Lemma \ref{lemma:quadraticrenormalization} which concerns the quadratic renormalization.

\begin{proof}[Proof of Lemma \ref{lemma:scattering}]
In the following let $\chi$ be a smooth function with support in $(-1/2,1/2)^3$ and $\chi(x)=1$ for $|x|\leq \delta$ and a suitable $\delta>0$. Using the fact that $w:=1-f$ is identical to $\frac{\mathfrak{a}}{|x|}$ for $x$ outside of the support of $V$, and choosing $N$ large enough such that $\chi=1$ on the support of $V(N^{1-\kappa}\cdot )$, we obtain
\begin{align*}
    -\Delta_x \left(N w(N^{1-\kappa}x)\chi(x)\right)=N^{3-2\kappa}Vf/2(N^{1-\kappa}x)+N^\kappa g(x),
\end{align*}
where $g$ is a smooth function with support in $(-1/2,1/2)^3$. Denoting with $\mathcal{F}$ the Fourier transform on the torus $\Lambda$, we obtain according to the definition of $\eta_{\infty,p}$,
\begin{align*}
    \eta_{\infty,q}=-\cF\! \left[N w(N^{1-\kappa}\cdot )\chi\right](q)+N^{\kappa}\frac{\cF[g](q)}{q^2}.
\end{align*}
Together with $ V(N^{1-\kappa}\cdot )\chi=V(N^{1-\kappa}\cdot)$ and analogously $ (Vw)(N^{1-\kappa}\cdot )\chi=(Vw)(N^{1-\kappa}\cdot)$ as well as $\hat{V}(\frac{p}{N^{1-\kappa}})=N^{3-3\kappa}\cF\! \left[V(N^{1-\kappa}\cdot )\right](p)$, we observe
\begin{align*}
    \frac{N^\kappa}{2N} \sum_{q\in \Lambda_+^*} \hat{V}(\frac{p-q}{N^{1-\kappa}})\eta_{\infty,q}
    & =-\frac{N^{3-2\kappa}}{2}\cF\! \left[(Vw)(N^{1-\kappa}\cdot )\chi\right] (p) +\frac{N^{2\kappa-1}}{2} \sum_{q\in \Lambda_+^*} \hat{V}(\frac{p-q}{N^{1-\kappa}})\frac{\cF[g](q)}{q^2}\\
    &=  -\frac{N^{3-2\kappa}}{2}\cF\! \left[(Vw)(N^{1-\kappa}\cdot )\right](p)+O(N^{2\kappa-1}),
\end{align*}
which concludes the proof of \eqref{eq:scatteringdiscrete} as
\begin{align*}
    \frac{N^{3-2\kappa}}{2}\cF\! \left[(Vw)(N^{1-\kappa}\cdot )\right](p)=\frac{N^{\kappa}}{2}\widehat{Vw}(p/N^{1-\kappa})=p^2 \eta_{\infty,p}+\frac{N^{\kappa}}{2}\hat{V}(\frac{p}{N^{1-\kappa}}).
\end{align*}
and the proof of (\ref{eq:etacorrection}) as
\begin{align*}
    \widehat{Vw} (0) = \hat{V} (0) - \widehat{Vf} (0) = \hat{V} (0) - 8\pi \frak{a}.
\end{align*}
The bounds \eqref{eq:sigmainftymomentum} to \eqref{eq:gammasigmainfty-etainftysum} are easy to check using 
\begin{equation}
\begin{split}\label{eq:sigmainfty}
&\sigma_{\infty,p}^2 = \frac{p^2 +  N^\kappa \widehat{Vf}(p/N^{1-\kappa})-\sqrt{p^4 + 2p^2 N^\kappa \widehat{Vf}(p/N^{1-\kappa})}}{2\sqrt{p^4 + 2p^2 N^\kappa \widehat{Vf}(p/N^{1-\kappa})}}
=\frac{1-2\eta_{\infty,p}-\sqrt{1-4\eta_{\infty,p}}}{2\sqrt{1-4\eta_{\infty,p}}},\\
&\gamma_{\infty,p}^2 = \frac{p^2 +  N^\kappa \widehat{Vf}(p/N^{1-\kappa})+\sqrt{p^4 + 2p^2 N^\kappa \widehat{Vf}(p/N^{1-\kappa})}}{2\sqrt{p^4 + 2p^2 N^\kappa \widehat{Vf}(p/N^{1-\kappa})}}
=\frac{1-2\eta_{\infty,p}+\sqrt{1-4\eta_{\infty,p}}}{2\sqrt{1-4\eta_{\infty,p}}},\\
&\gamma_{\infty,p}\sigma_{\infty,p} = \frac{-N^\kappa \widehat{Vf}(p/N^{1-\kappa})}{2\sqrt{p^4 + 2p^2 N^\kappa \widehat{Vf}(p/N^{1-\kappa})}}
=\frac{\eta_{\infty,p}}{\sqrt{1-4\eta_{\infty,p}}}.
\end{split}
\end{equation} 
\end{proof}
Before proving Lemma \ref{lemma:scattering}, we define for $m \in \bN$
 \begin{equation}\label{eq:defFm}
    F_{m}(\xi) =  \Big( 1+ \sum_{k \in \bN^3: |k| \leq m} \abs{\widehat{ x^k Vf} (\xi) }]\Big)^m - 1.
\end{equation}
and observe the elementary bounds
\begin{align}
    \label{eq:inftynormFm}
    \norm{F_m}_\infty
    &\leq C_m,\\
    \label{eq:normFm}
    \sum_{p\in\Lambda^*} F_{m}^2((p+\zeta)/N^{1-\kappa})
    &\leq C_m N^{3-3\kappa}
\end{align}
for all $N \in \bN$ and all $\zeta\in\mathbb{R}^3$. With \eqref{eq:inftynormFm} and \eqref{eq:normFm} at hand, we are in a position to verify Lemma \ref{lemma:propertiesquadratickernels}.

\begin{proof}[Proof of Lemma \ref{lemma:propertiesquadratickernels}]
The bounds in momentum space follow directly from the definitions, compare to Lemma \ref{lemma:scattering} for the hyperbolic functions.
For the decay estimates in position space we use smoothness in momentum space.
Note
\begin{equation*}
\partial_{p_j}\mu_\infty(p) 
= 
-\frac{1}{4} \left( 1+ \frac{2N^\kappa \widehat{Vf}(\frac{p}{N^{1-\kappa}})}{p^2} \right)^{-1} 
\partial_{p_j}\left( \frac{2N^\kappa \widehat{Vf}(\frac{p}{N^{1-\kappa}})}{p^2} \right) \, .
\end{equation*}
Thus, for $m\geq 1$
\begin{equation*}
\begin{split}
\abs{\mathbbm{1}_{[\ell_\sigma^{-1},\infty)}(\abs{p})\partial_{p_j}^m\mu_\infty(p)}
&\leq C_m \mathbbm{1}_{[\ell_\sigma^{-1},\infty)}(\abs{p})
\sum_{k=1}^m
\sum_{\substack{m_1, \dots ,m_k \geq 1 : \\  \sum_{i=1}^k m_i = m}} 
\prod_{i=1}^{k}
\Big| \frac{p^2}{p^2+N^\kappa} \partial_{p_j}^{m_i}\left( \frac{N^\kappa \widehat{Vf}(\frac{p}{N^{1-\kappa}})}{p^2} \right) \Big| \\
&\leq
C_m \ell_\sigma^m 
\mathbbm{1}_{[\ell_\sigma^{-1},\infty)}(\abs{p})
F_{m}(p/N^{1-\kappa}) 
\frac{N^\kappa}{\abs{p}^2}
\end{split} \end{equation*}
where we used \eqref{eq:defFm} and that $N^{-(1-\kappa)}\leq \ell_\sigma$.
Recalling that $\chi (p) = \chi_l (\abs{p}\ell_\sigma)$, we find $
    \abs{\partial_{p_j}^m \chi (p)} 
    \leq C_m \ell_\sigma^m
    \mathbbm{1}_{[\ell_\sigma^{-1}, \infty)}(\abs{p})
$.
As a result, for $m \geq 1$,  
\begin{equation*}
    \abs{\partial_{p_j}^m\mu(p)} 
    \leq C_m  \ell_\sigma^m \log N
\mathbbm{1}_{[\ell_\sigma^{-1},\infty)}(\abs{p})
F_{m}(p/N^{1-\kappa}) 
\frac{N^\kappa}{\abs{p}^2}
    \end{equation*}
where the factor $\log N$ arises from the term in which all derivatives act on $\chi$. 
Using this estimate, we can analyze $\sigma (p)$ and obtain for $m \geq 1$
\[
\begin{split}
    \abs{\partial_{p_j}^m\sigma(p)}
    \leq \; &
    C_m
    \sum_{k=1}^m \abs{\sinh^{(k)}(\mu (p))}
     \sum_{\substack{m_1,\dots,m_k\geq 1: \\ \sum_{i=1}^k  m_i =m}} \prod_{i=1}^k
    \abs{\partial_{p_j}^{m_i} \mu(p)
    }\\
    \leq \; & 
    C_m
    (\norm{\sigma}_\infty + \norm{\gamma}_\infty)    (\log N)^m  \ell_\sigma^m
\mathbbm{1}_{[\ell_\sigma^{-1},\infty)}(\abs{p})
F_{m}(p/N^{1-\kappa}) 
\frac{N^\kappa}{\abs{p}^2}\\
\leq \; & 
    C_m
    N^\epsilon  \ell_\sigma^m
\mathbbm{1}_{[\ell_\sigma^{-1},\infty)}(\abs{p})
F_{m}(p/N^{1-\kappa}) 
\frac{N^\kappa}{\abs{p}^2}
\end{split} \]  
We now prove the decay estimates in position space.
For any $x\in\Lambda  = [-1/2 ; 1/2]^3$ and any integer $m\geq 1$, we have
\begin{equation*}
\begin{split} 
c^m &\abs{x_j}^m \abs{\check\sigma (x) }
    \leq
    \abs{(e^{2\pi i x_j}-1)^m\check\sigma(x) }
    = \abs{(e^{2\pi i x_j}-1)^m \sum_{p\in\Lambda^*} \sigma(p)e^{ip\cdot x}}\\
    &= \abs{(e^{2\pi i x_j}-1)^{m-1} \sum_{p\in\Lambda^*} (\sigma (p-2\pi e_j)-\sigma (p))e^{ip\cdot x}}\\
    &= \Big| \int_0^{2\pi} ds \sum_{p\in\Lambda^*}  (e^{2\pi i x_j}-1)^{m-1} \partial_{p_j}\sigma (p- s e_j)e^{ip\cdot x} \Big| \\
&\leq  \int_0^{2\pi} ds_1\dots \int_0^{2\pi} ds_m \sum_{p\in\Lambda^*}  \Big| \partial_{p_j}^m\sigma(p-\sum_{i=1}^m s_i e_j) \Big| 
\leq C \int_0^{2\pi m} ds \sum_{p\in\Lambda^*}  \abs{ \partial_{p_j}^m\sigma (p-s e_j)}\\
&\leq \; C_m
    N^\epsilon  \ell_\sigma^m
    (
\norm{F_{m}}_\infty
\norm{N^\kappa/\abs{\cdot}^2 \mathbbm{1}(\abs{\cdot}\leq N^{1-\kappa})}_1
+
\norm{F_{m}}_2
\norm{N^\kappa/\abs{\cdot}^2 \mathbbm{1}(\abs{\cdot}\geq N^{1-\kappa})}_2
)\\
&\leq \; C_m
    N^{1+\epsilon}  \ell_\sigma^m
\end{split}
\end{equation*}
Bounds for $\gamma-1$ can be proven analogously; this implies (\ref{eq:boundsigmaxm}).
Also the bound \eqref{eq:boundetam} for $\check\eta$ can be proven in the same manner noting that each derivative yields either a factor $\ell_\eta$ (for the ones hitting the cutoff) or $N^{-(1-\kappa)},\abs{p}^{-1}\leq \ell_\eta$ (for the ones hitting $\eta_\infty$), and the same goes for the proof of \eqref{eq:boundnablasigma}.

The first part of equation \eqref{eq:boundsigmax1} follows by using Cauchy-Schwarz on a ball of size $\ell_\sigma N^{\epsilon/3}$ and the rapid decay beyond whereas the second part is just $\norm{\check{\sigma}}_\infty \leq \norm{\sigma}_1$

Lastly, we sketch how to adapt these argument for showing \eqref{eq:boundnablasigma}. We have
\begin{align*}
\begin{split}
    \int_{|x|\geq r}dx | \nabla \check  \sigma(x)|^2
    &\leq C_m \sum_{i=1}^3\int_{|x|\geq r}dx | \nabla \check  \sigma(x)|^2 \left(\frac{\abs{x_i}}{r}\right)^{2m}\\
    &\leq C_m r^{-2m}\sum_{i=1}^3\int dx \abs{(e^{2\pi i x_i}-1)^{m}\sum_{p\in\Lambda^*} p\sigma_p e^{ipx}}^2\\
   &\leq C_m r^{-2m} \sum_{i=1}^3  
   \sum_{p\in\Lambda^*} \left \vert   \int_0^{2\pi} ds_1\dots \int_0^{2\pi} ds_m \partial_{p_i}^m (p\sigma(p))[p-\sum_{k=1}^m s_k e_i] \right \vert^2\\
	&\leq C_m r^{-2m} \sum_{i=1}^3  
   \int_0^{2\pi m} ds \sum_{p\in\Lambda^*} \left \vert  \partial_{p_i}^m (p\sigma(p))[p-s e_i] \right \vert^2\\
   &\leq C_m N^{2\epsilon} \sum_{i=1}^3 \int_0^{2\pi m} ds \sum_{p\in\Lambda^*\colon\abs{p}\geq \ell_\sigma^{-1}}
   \abs{F_{m}((p-se_i)/N^{1-\kappa})}^2
\frac{N^{2\kappa}}{\abs{p}^2} \left(\frac{\ell_\sigma}{r}\right)^{2m}\\
   &\leq C_m N^{1+\kappa + 2\epsilon}
\left(\frac{\ell_\sigma}{r}\right)^{2m}.
\end{split}
\end{align*}
\end{proof}

Finally, we provide a proof of Lemma \ref{lemma:quadraticrenormalization}.
\begin{proof}[Proof of Lemma \ref{lemma:quadraticrenormalization}]
The computation is essentially the same as in \cite[Section 3]{BCS}; we include it for completeness due to the different choice of kernel. It can also be found in \cite{BOSS}. Using \eqref{eq:weyl1} we have 
    \begin{equation*}
        \la \Psi_N, \cH_N \Psi_N \ra
        = \sum_{n=0}^4 \la e^{B^*-B} \xi, \cW_n e^{B^*-B} \xi \ra
    \end{equation*}
     where 
    \begin{equation*}
        \begin{split}
            &\cW_0 = \frac{N^\kappa N_0^2}{2N} \hat{V}(0)\\
            &\cW_1 = \frac{N^\kappa N_0^{3/2}}{N} \hat{V}(0) a_0^*+\hc\\
            &\cW_2 = \cK 
                +\frac{N^\kappa N_0}{2N} \sum_{p\in\Lambda^*}
                \hat{V}(p/N^{1-\kappa}) \left( a_p^* a_{-p}^* + a_p a_{-p} + 2a_p^* a_{p} \right)
                +\frac{N^\kappa \hat{V}(0) N_0}{N} \sum_{p\in\Lambda^*}
                 a_p^* a_{p}\\
             &\cW_3 = \frac{N^\kappa N_0^{1/2}}{N} \sum_{p,r \in\Lambda^*} \hat{V}(r/N^{1-\kappa}) a_{p+r}^* a_{-r}^* a_p +\hc\\
            &\cW_4 = \cV_N.
        \end{split}
    \end{equation*}
    Write $\cG_n = e^{-B^* + B} \cW_n e^{B^*-B}$ for $n =0,\dots ,4$ for the conjugation of each term
      \begin{equation*}
        \la \Psi_N, \cH_N \Psi_N \ra
        = \sum_{n=0}^4 \la \xi, \cG_n  \xi \ra.
    \end{equation*}
    As $A^*$ only creates triples, in the following all expectation values of operators changing the particle number by neither $0$ nor $3$ vanish. Thus, 
    \begin{equation}\label{eq:G2xi} 
    \begin{split}
        \la \xi, \cG_2 \xi \ra
         =\; & \la \xi, \cK \xi \ra + \sum_{p\in\Lambda^*} p^2 \sigma_p^2 +  \frac{N^\kappa N_0}{N} \sum_{p\in\Lambda^*}
                \hat{V}(p/N^{1-\kappa}) (\gamma_p\sigma_p + \sigma_p^2) + \frac{N^\kappa \hat{V}(0) N_0}{N} \| \sigma \|_2^2\\
                &+ \langle\xi, \cQ_2\xi\rangle
    \end{split}
    \end{equation}
    where
    $\cQ_2$ is a quadratic operator satisfying
$
    \pm\cQ_2 \leq C N^{\kappa+\epsilon} \cN.
$
Similarly,
    \begin{equation*}\label{eq:G3xi} 
        \la\xi, \cG_3 \xi\ra = \frac{N^\kappa N_0^{1/2}}{N} \sum_{p, r\in\Lambda^*} \hat{V}(r/N^{1-\kappa}) 
        (\gamma_{p+r}\gamma_r\sigma_p + \sigma_{p+r}\sigma_r\gamma_p) 
        \la\xi, (a_{p+r}^* a_{-r}^* a_{-p}^* +\hc )\xi\ra.
    \end{equation*}
and
    \begin{equation}\label{eq:defG4}
    \begin{split}
        \la \xi, \cG_4 \xi \ra
        = \; &\frac{N^\kappa}{2N} \sum_{p,q,r\in\Lambda^*} \hat{V}(r/N^{1-\kappa}) 
        \big(\gamma_{p+r} \gamma_{q-r}\gamma_{q}\gamma_{p}
        +\sigma_{p+r}\sigma_{q-r}\sigma_{q}\sigma_{p} +2\gamma_{p+r}\sigma_{q-r}\sigma_{q}\gamma_p \big) \\ &\hspace{9cm} \times 
         \la \xi,  a_{p+r}^* a_{q-r}^* a_{q} a_{p}
        \xi\ra\\
        &+ \frac{N^\kappa}{N} \sum_{p,q,r\in\Lambda^*} \hat{V}(r/N^{1-\kappa}) 
        \gamma_{p+r}\sigma_{q-r}\gamma_{q}\sigma_{p}
         \la \xi, a_{p+r}^*a_{-p}^*  a_{-q+r} a_{q} 
        \xi \ra\\
        &+
         \frac{N^\kappa}{2N} \sum_{p,r\in\Lambda^*} \hat{V}(r/N^{1-\kappa})  
        (\sigma_{p+r}\gamma_{p+r}
        \gamma_{p}\sigma_{p}+
        \sigma_{p+r}^2\sigma_{p}^2)
        +
        \frac{N^\kappa}{2N} \hat{V}(0) 
        \norm{\sigma}_2^4
        + \langle\xi, \cQ_2\xi\rangle
    \end{split}
    \end{equation}
    with $
    \pm\cQ_4 \leq C N^{\kappa+\epsilon} \cN.
$
Thus, the only step left in the proof of Lemma \ref{lemma:quadraticrenormalization} is to analyse the constant terms.
Observe that $\langle\xi,  \cG_0 \xi \rangle$ together with the last constant on the r.h.s. of (\ref{eq:G2xi}) and the last constant on the r.h.s. of (\ref{eq:defG4}) gives
\[ 
\frac{N^\kappa \hat{V} (0)}{2N} (N_0^2+ 2 N_0\| \sigma \|_2^2 + \| \sigma \|_2^4) = \frac{N^\kappa}{2N} \hat{V} (0) \big(N_0 + \| \sigma \|_2^2 \big)^2 = \frac{N^{1+\kappa}}{2} \hat{V} (0) + O(N^{5\kappa/2-\epsilon}), \]
where we used \eqref{eq:particle_number_0}.
Additionally, the second last constant in \eqref{eq:defG4} is bounded by $C N^{\kappa-1} \| \hat{V} \|_\infty \| \sigma \|_2^4 \leq C N^{4\kappa - 1}. $
Furthermore, the first constant in \eqref{eq:defG4} can be decomposed as
\[ \begin{split} \frac{N^\kappa}{2N} \sum_{p,r} \hat{V} (r/N^{1-\kappa}) \sigma_{p+r} &\gamma_{p+r}\gamma_p \sigma_p  = \frac{N^\kappa}{2N} \sum_{p,r} \hat{V} (r/N^{1-\kappa}) \eta_{\infty,p+r} \eta_{\infty,p}  \\&+ \frac{N^\kappa}{N} \sum_{p,r} \hat{V} (r/N^{1-\kappa}) \eta_{\infty, p+r} \big( \gamma_{\infty,p} \sigma_{\infty,p} - \eta_{\infty,p} \big) + O(N^{5\kappa/2-\epsilon}) .\end{split} \]
Here we used $\norm{\sigma -\sigma_\infty}_1\leq N^{3\kappa/2-\epsilon}$ and $
\norm{\gamma_{\infty} \sigma_{\infty} - \eta_{\infty}}_1 \leq N^{3\kappa/2}$ from Lemma \ref{lemma:scattering}.

We treat the remaining constant terms in \eqref{eq:G2xi} analogously and obtain
\[ 
\begin{split} 
\frac{N^\kappa N_0}{N} \sum_p \hat{V} (p/N^{1-\kappa}) \gamma_p \sigma_p   
 &= N^\kappa \sum_p \hat{V} (p/N^{1-\kappa}) \eta_{\infty,p} - \frac{N^\kappa \| \sigma \|_2^2}{N} \sum_p \hat{V} (p/N^{1-\kappa}) \eta_{\infty,p} \\ &\hspace{.4cm} +  N^\kappa \sum_p \hat{V} (p/N^{1-\kappa}) (\gamma_{\infty,p} \sigma_{\infty,p} - \eta_{\infty,p}) + O(N^{5\kappa/2-\epsilon}) 
 \end{split}
 \]
and
\[ \begin{split} \frac{N^\kappa N_0}{N} &\sum_p \hat{V} (p/N^{1-\kappa}) \sigma_p^2  = N^\kappa \sum_p \hat{V} (p/N^{1-\kappa}) \sigma_{\infty,p}^2 + O(N^{5\kappa/2-\epsilon})
\end{split} \]
as well as
\[
\sum_p p^2 \sigma_p^2 = \sum_p p^2 \sigma_{\infty,p}^2 + O(N^{5\kappa/2-\epsilon}).
\]
Note that here we used the choice of $N_0$ in \eqref{eq:particle_number_0} again.
Thus, the non-negligible constants are
\begin{equation}\label{eq:const}
    \begin{split}
        &\frac{N^{1+\kappa}}{2} \hat{V} (0) + \sum_{p\in\Lambda^*} p^2 \sigma_{\infty,p}^2
        + N^\kappa\sum_{p\in\Lambda^*}
                \hat{V}(p/N^{1-\kappa}) \eta_{\infty,p} 
        - \frac{N^\kappa \norm{\sigma}_2^2}{N} \sum_{p\in\Lambda^*}
        \hat{V}(p/N^{1-\kappa}) \eta_{\infty,p}\\
        &+ N^\kappa \sum_{p\in\Lambda^*}
                \hat{V}(p/N^{1-\kappa}) (\gamma_{\infty,p}\sigma_{\infty,p}-\eta_{\infty,p})
        + N^\kappa \sum_{p\in\Lambda^*}
                \hat{V}(p/N^{1-\kappa})
                \sigma_{\infty,p}^2\\
        &+\frac{N^\kappa}{2N} \sum_{p,r\in\Lambda^*} \hat{V}(r/N^{1-\kappa})  
            \eta_{\infty,p+r}
            \eta_{\infty,p}          
            +\frac{N^\kappa}{N} \sum_{p,r\in\Lambda^*} \hat{V}(r/N^{1-\kappa}) 
            \eta_{\infty,p+r}
            (\gamma_{\infty,p}\sigma_{\infty,p}-\eta_{\infty,p}).\\
    \end{split}
\end{equation}
We have 
\[ \begin{split} N^\kappa\sum_{p\in\Lambda^*}
                \hat{V}(p/N^{1-\kappa}) \eta_{\infty,p} &+\frac{N^\kappa}{2N} \sum_{p,r\in\Lambda^*} \hat{V}(r/N^{1-\kappa})  
            \eta_{\infty,p+r}
            \eta_{\infty,p} \\ = &- \sum_{p \in \Lambda^*} p^2 \eta_{\infty,p}^2 
       + \frac{N^\kappa}{2} \sum_{p \in \Lambda^*} \hat{V} (p/N^{1-\kappa}) \eta_{\infty,p} + O(N^{2\kappa})\\ 
       = &- \sum_{p \in \Lambda^*} p^2 \eta_{\infty,p}^2  + \frac{8\pi \frak{a} - \hat{V} (0)}{2} N^{1+\kappa} + O(N^{2\kappa})\end{split} \]
thanks to the scattering equation \eqref{eq:scatteringdiscrete} as well as \eqref{eq:etacorrection}.     
With \eqref{eq:gammasigmainfty-etainftysum}, we find   
\[ \begin{split} 
 N^\kappa \sum_{p\in\Lambda^*}
                \hat{V}(p/N^{1-\kappa}) (\gamma_{\infty,p}\sigma_{\infty,p}-\eta_{\infty,p}) &+\frac{N^\kappa}{N} \sum_{p,r\in\Lambda^*} \hat{V}(r/N^{1-\kappa}) 
            \eta_{\infty,p+r}
            (\gamma_{\infty,p}\sigma_{\infty,p}-\eta_{\infty,p}) \\ &= -2 \sum_{p \in \Lambda^*} p^2 \eta_{\infty,p} ( \gamma_{\infty,p}\sigma_{\infty,p}-\eta_{\infty,p}) + O(N^{2\kappa}). \end{split} \]
Lastly, we decompose
\[ \begin{split} N^\kappa \sum_{p\in \Lambda^*} \hat{V} (p/N^{1-\kappa}) \sigma_{\infty,p}^2 = &-2 \sum_{p\in \Lambda^*} p^2 \eta_{\infty,p} \sigma_{\infty,p}^2 \\ &- N^{\kappa-1} \sum_{p,r \in \Lambda^*} \hat{V} (r/N^{1-\kappa}) \eta_{\infty,p+r} \sigma_{p}^2 + O(N^{2\kappa})\end{split} \]
allowing to rewrite \eqref{eq:const} as 
\[ \begin{split} 
&4\pi \frak{a} N^{1+\kappa} + \sum_{p\in\Lambda^*} p^2 (\sigma_{\infty,p}^2 + \eta_{\infty,p}^2 -2\eta_{\infty,p} \gamma_{\infty,p} \sigma_{\infty,p} -2 \eta_{\infty,p} \sigma_{\infty,p}^2)
       \\
        & - \frac{N^\kappa \norm{\sigma}_2^2}{N} \sum_{p\in\Lambda^*}
        \hat{V}(p/N^{1-\kappa}) \eta_{\infty,p} - N^{\kappa-1} \sum_{p,r\in \Lambda^*} \hat{V} (r/N^{1-\kappa}) \eta_{p+r} \sigma_{p}^2 + O(N^{2\kappa}).
                \end{split}
                \]
Plugging in \eqref{eq:sigmainfty}, we arrive at 
\[ \begin{split} 
&4\pi \frak{a} N^{1+\kappa}   + \frac{1}{2}\sum_{p\in\Lambda^*} p^2 
        \left(
        \left(\frac{1-2\eta_{\infty,p}}{\sqrt{1-4\eta_{\infty,p}}}-1\right) (1-2\eta_{\infty,p})
        + 2\eta_{\infty,p}^2
        - 4 \frac{\eta_{\infty,p}^2}{\sqrt{1-4\eta_{\infty,p}}}
        \right) \\
        &\hspace{.2cm}  - \frac{N^\kappa \norm{\sigma}_2^2}{N} \sum_{p\in\Lambda^*}
        \hat{V}(p/N^{1-\kappa}) \eta_{\infty,p} - N^{\kappa-1} \sum_{p,r\in \Lambda^*} \hat{V} (r/N^{1-\kappa}) \eta_{p+r} \sigma_{p}^2 + O(N^{2\kappa}) \\
        &= 4\pi \frak{a} N^{1+\kappa} + \frac{1}{2}\sum_{p\in\Lambda^*} p^2 
        \left(
        \sqrt{1-4\eta_{\infty,p}}
        - 1 + 2\eta_{\infty,p}
        + 2\eta_{\infty,p}^2
        \right) \\
          &\hspace{.2cm}  - \frac{N^\kappa \norm{\sigma}_2^2}{N} \sum_{p\in\Lambda^*}
        \hat{V}(p/N^{1-\kappa}) \eta_{\infty,p} - N^{\kappa-1} \sum_{p,r\in \Lambda^*} \hat{V} (r/N^{1-\kappa}) \eta_{p+r} \sigma_{p}^2  + O(N^{2\kappa})
                \end{split} \]
                showing the claim after inserting the definition
\eqref{eq:defetainfty} for $\eta_{\infty,p}$.
\end{proof}

\end{document}